\documentclass{lmcs} 
\pdfoutput=1

\usepackage{lastpage}
\lmcsdoi{20}{3}{8}
\lmcsheading{}{\pageref{LastPage}}{}{}%
{Jul.~02,~2021}{Jul.~23,~2024}{}

\usepackage[utf8]{inputenc}

\keywords{Logic, Finite Model Theory, Game Comonads, Generalised Quantifiers }

\usepackage{amsmath, amsthm, amssymb, tikz, tikz-cd,pgf}
\usetikzlibrary{automata}
\usetikzlibrary{petri}
\usepackage{graphicx}
\usepackage[lofdepth,lotdepth]{subfig}
\usepackage{mathtools}
\usepackage{enumitem}
\usepackage{diagrams}
\usepackage{xspace}
\usepackage{xcolor}
\usepackage{ stmaryrd }
\usepackage{hyperref}
\newcommand{\defeq}{\vcentcolon=}

\newcommand{\abs}[1]{|#1|}

\newcommand{\T}[1]{\mathbb{P}_{#1}}
\newcommand{\G}[1]{\mathbb{H}_{#1}}
\newcommand{\good}{structured\xspace}

\newcommand{\tup}[1]{\mathbf{#1}}
\newcommand{\str}[1]{\mathcal{{#1}}}

\newcommand{\strs}[1]{\mathcal{R}(#1)}

\newcommand{\binj}[1]{\rightarrow^{\text{i}}_{#1}}
\newcommand{\bsurj}[1]{\rightarrow^{\text{s}}_{#1}}
\newcommand{\sbbij}[1]{\rightarrowtriangle^{\text{b}}_{#1}}
\newcommand{\sbinj}[1]{\rightarrowtriangle^{\text{i}}_{#1}}
\newcommand{\sbsurj}[1]{\rightarrowtriangle^{\text{s}}_{#1}}
\newcommand{\bfbij}[1]{\rightleftarrows^{\text{b}}_{#1}}
\newcommand{\bfinj}[1]{\rightleftarrows^{\text{i}}_{#1}}
\newcommand{\bfsurj}[1]{\rightleftarrows^{\text{s}}_{#1}}
\newcommand{\bbij}[1]{\rightarrow^{\text{b}}_{#1}}

\newcommand{\nats}{\mathbb{N}}
\newcommand{\game}[3]{\textbf{#1}_{#2}^{#3}}

\newcommand{\eplnkinftyomega}[1]{\mathcal{H}^{#1}}

\newcommand{\Datalog}{\texttt{Datalog}}
\newcommand{\posqfk}[1]{+\mathcal{L}^{#1}}
\newcommand{\qfk}[1]{\mathcal{L}^{#1}}
\newcommand{\FO}{\mathbf{FO}}
\newcommand{\epFO}{\exists^{+}\mathbf{FO}}
\newcommand{\infL}{\mathcal{L}_{\infty}}
\newcommand{\infH}{\exists^{+}\mathcal{L}_{\infty}}
\newcommand{\infLr}[1]{\mathcal{L}_{\infty, #1}}
\newcommand{\infHr}[1]{\exists^{+}\mathcal{L}_{\infty, #1}}
\newcommand{\infC}{\mathcal{C}}
\newcommand{\ra}{\rightarrow}
\newcommand{\pra}{\rightharpoonup}
\newcommand{\injpar}{x_{\sf{i}}}
\newcommand{\surjpar}{x_{\sf{s}}}
\newcommand{\negpar}{x_{\sf{n}}}

\theoremstyle{plain} 


\begin{document}

\title[Game Comonads \& Generalised Quantifiers]{Game Comonads \& Generalised Quantifiers}
\titlecomment{{\lsuper*}An earlier version of this paper, without proofs, appeared in the proceedings of CSL 2021.}

\author[A.~\'O Conghaile]{Adam \'O Conghaile\lmcsorcid{0000-0002-3032-5514}}	

\author[A.~Dawar]{Anuj Dawar\lmcsorcid{0000-0003-4014-8248}}	
\address{Department of Computer Science and Technology, University of Cambridge, United Kingdom }	
\email{anuj.dawar@cl.cam.ac.uk}  
\thanks{Research funded in part by EPSRC grant EP/T007257/1.}	

\begin{abstract}
  \noindent Game comonads, introduced by Abramsky, Dawar and Wang and developed by Abramsky and Shah, give an interesting categorical semantics to some Spoiler-Duplicator games that are common in finite model theory. In particular they expose connections between one-sided and two-sided games, and parameters such as treewidth and treedepth and corresponding notions of decomposition.  In the present paper, we expand the realm of game comonads to logics with generalised quantifiers.  In particular, we introduce a comonad graded by two parameters $n \leq k$ such that isomorphisms in the resulting Kleisli category are exactly Duplicator winning strategies in Hella's $n$-bijection game with $k$ pebbles.  We define a one-sided version of this game which allows us to provide a categorical semantics for a number of logics with generalised quantifiers.  We also give a novel notion of tree decomposition that emerges from the construction.
\end{abstract}

\maketitle

\section*{Introduction}\label{sec:intro}

Model-comparison games, such as Ehrenfeucht-Fra\"{\i}ss\'e games and pebble games play a central role in finite model theory.  Recent work by Abramsky et al.~\cite{Abramsky2017,Abramsky2018} provides a category-theoretic view of such games which yields new insights.  In particular, the \emph{pebbling comonad} $\T{k}$ introduced in~\cite{Abramsky2017} reveals an interesting relationship between one-sided and two-sided pebble games.  The morphisms in the Kleisli category associated with $\T{k}$ correspond exactly to winning strategies in the existential positive $k$-pebble game.  This game was introduced by Kolaitis and Vardi~\cite{KolaitisVardi1992} to study the expressive power of $\Datalog$.  A winning strategy for Duplicator in the game played on structures $\str{A}$ and $\str{B}$ implies that all formulas of existential positive $k$-variable logic true in $\str{A}$ are also true in $\str{B}$.  The game has found widespread application in the study of database query languages as well as constraint satisfaction problems.  Indeed, the widely used $k$-local consistency algorithms for solving constraint satisfaction can be understood as computing the approximation to homomorphism given by such strategies~\cite{Kolaitis2000}.  At the same time, isomorphisms in the Kleisli category associated with $\T{k}$ correspond to winning strategies in the $k$-pebble \emph{bijection} game. This game is a variant of the bijection game introduced by Hella~\cite{Hella1996} and characterises equivalence in the $k$-variable logic with counting.  This gives a family of equivalence relations (parameterised by  $k$) which has been widely studied as approximations of graph isomorphism.  It is often called the Weisfeiler-Leman family of equivalences and has a number of characterisations  in logic, algebra and combinatorics (see the discussion in~\cite{Grohe2017}).

The bijection game originally introduced by Hella is actually the initial level of a hierarchy of games that he defined to characterise equivalence in logics with generalised (i.e.\ Lindstr\"om) quantifiers.  For each $n, k \in \nats$ we have a $k$-pebble $n$-bijection game that characterises equivalence with respect to an infinitary $k$-variable logic with quantifiers of arity at most $n$.  In the present paper, we introduce a graded comonad associated with this game which we call the \emph{$n,k$-Hella comonad}, or $\G{n,k}$.  This comonad is obtained as a  quotient of the comonad $\T{k}$ and we are able to show that isomorphisms in the associated Kleisli category correspond to winning strategies for Duplicator in the $k$-pebble $n$-bijection game. The morphisms then correspond to a new one-way game we define, which we call the $k$-pebble $n$-function game.  We are able to show that this relates to a natural logic: a $k$-variable positive infinitary logic with $n$-ary \emph{homomorphism-closed quantifiers}.

This leads us to a systematic eight-way classification of model-comparison games based on what kinds of functions Duplicator is permitted (arbitrary functions, injections, surjections or bijections) and what the partial maps in game positions are required to preserve: just atomic information or also negated atoms.  We show that each of these variations correspond to preservation of formulas in a natural fragment of bounded-variable infinitary logic with $n$-ary Lindstr\"om quantifiers.  Moreover, winning strategies in these games also correspond to natural restrictions of the morphisms in the Kleisli category of $\G{n,k}$ that are well-motivated from the category-theoretic point of view.

Another key insight provided by the work of Abramsky et al.\ is that coalgebras in the pebbling comonad $\T{k}$ correspond exactly to tree decompositions of width $k$.   Similarly, the coalgebras in the Ehrenfeucht-Fra\"{\i}ss\'e comonad introduced by Abramsky and Shah characterise the \emph{treedepth} of structures.  This motivates us to look at coalgebras in $\G{n,k}$ and we show that they yield a new and natural notion of generalised tree decomposition.

In what follows, after a review of the necessary background in Section~\ref{sec:background}, we introduce the various games and logics in Section~\ref{sec:logic} and establish the relationships between them.  Section~\ref{sec:comonad} contains the definition of the Hella comonad and shows that interesting classes of morphisms in the associated Kleisli category correspond to winning strategies in the games.  The coalgebras of this comonad are investigated in Section~\ref{sec:coalgebra}, and the associated tree-decompositions of structures defined.

\section{Background}\label{sec:background}

In this section we introduce notation that we use throughout the paper and give a brief overview of background we assume.  For a positive integer $n$, we write $[n]$ for the set $\{1,\ldots,n\}$.

A tree $T$ is a set with a partial order $\leq$ such that for all $t \in T$, the set $\{x \mid x \leq t\}$ is linearly ordered by $\leq$ and such that there is an element $r \in T$ called \emph{the root} such that $r \leq t$ for all $t \in T$.  If $t <  t'$ in $T$ and there is no $x$ with $t < x< t'$, we call $t'$ a \emph{child} of $t$ and $t$ the \emph{parent} of $t'$.

For $X$ a set, we write $X^{\ast}$ for the set of \emph{lists} over elements of $X$ and $X^{+}$ for the set of non-empty lists. We write the list with elements $x_1, \ldots x_m$ in that order as $[x_1, \ldots x_m]$.  For two lists $s_1, s_2 \in X^{\ast}$ we write $s_1\cdot s_2$ for the list formed by concatenating $s_1$ and $s_2$. For $x \in X$ and $s \in X^{\ast}$ we write $x;s$ for the list with first element $x$ followed by the elements of $s$ in order and $s;x$ for the elements of the list $s$ in order followed by final element $x$. We occasionally underline the fact that $s_1\cdot s_2, x;s,$ and $ s;x$ are lists by writing them enclosed in square brackets, as $[s_1\cdot s_2], [x;s],$ and $ [s;x]$.

\subsection{Logics}
We work with finite relational signatures and assume a fixed signature $\sigma$.   Unless stated otherwise, the structures we consider are finite $\sigma$-structures.  We write $\str{A}, \str{B}, \str{C}$ etc.\ to denote such structures, and the corresponding roman letters  $A, B, C$ etc.\ to denote their universes.

We assume a standard syntax and semantics for first-order logic (as in~\cite{Libkin2004}), which we denote $\FO$.  We write $\infL$ for the infinitary logic that is obtained from $\FO$ by allowing conjunctions and disjunctions over arbitrary sets of formulas.   We write $\infH$ and $\epFO$ for the restriction of  $\infL$ and $\FO$ to existential positive formulas, i.e.\ those without negations or universal quantifiers.  We use natural number superscripts to denote restrictions of the logic to a fixed number of variables.  So, in particular $\FO^k, \infL^{k}$ and $\infH^{k}$ denote the $k$-variable fragments of $\FO$, $\infL$ and $\infH$ respectively.  Similarly, we use subscripts on the names of the logic to denote the fragments limited to a fixed nesting depth of quantifiers.  Thus, $\FO_r, \infLr{r}$ and $\infHr{r}$ denote the fragments  of $\FO$, $\infL$ and $\infH$ with quantifier depth at most $r$.  We write $\infC$ to denote the extension of $\infL$ where we are allowed quantifiers $\exists^{\geq i}$ for each natural number $i$.  The quantifier is to be read as ``there exists at least $i$ elements\ldots''.  We are mainly interested in the $k$-variable fragments of this logic $\infC^k$.

A formula $\phi(x_1,\ldots,x_n)$ with free variables among $x_1,\ldots,x_n$ in any of these logics defines an \emph{$n$-ary query}, that is a map from structures $\str{A}$ to $n$-ary relations on the structure.  A sentence defines a \emph{Boolean query}, i.e.\ a class of structures.  All queries are always closed under isomorphisms.

\subsection{Generalised quantifiers}
We use the term \emph{generalised quantifier} in the sense of Lindstr\"om~\cite{Lindstrom1966}.  These have been extensively studied in finite model theory (see~\cite{Hella1996,DH95,dawar1995}).  In what follows, we give a brief account of the basic variant that is of interest to us here.  For more on Lindstr\"{o}m quantifiers, consult~\cite[Chap.~12]{EbbinghausFlum1999}.  In particular, there are further generalisations of this notion, involving relativisation, vectorisation and taking quotients in the interpretation.  We do not consider these, as we are interested in capturing the hierarchy of quantifiers by their arity, as in~\cite{Hella1996}.

Let $\sigma, \tau$ be signatures
with $\tau = \{R_1,\dots,R_m\}$, and $r_i$ the arity of $R_i$. An interpretation $\mathcal{I}$ of $\tau$ in $\sigma$ with parameters $\tup{z}$ is a tuple of $\sigma$ formulas
 \[(\phi_{R_1}(x_1,\dots,x_{r_1},\tup{z}_1),\dots,\phi_{R_m}(x_1,\dots,x_{r_m},\tup{z}_m))\]
where the $x_i$ are distinct variables, $\tup{z}$ is a
tuple of variables pairwise distinct from the $x$-variables and the $\tup{z}_j$ are (not necessarily distinct) subtuples of $\tup{z}$.\footnote{The requirement that the variables in $\tup{z}$ are distinct from all $x_i$ is as in~\cite{KolaitisVaananen1995} but does not appear in the definition of generalised quantifiers in~\cite{Hella1996} or~\cite{EbbinghausFlum1999}.  We explain the difference this makes in Remark~\ref{rem:typeAtypeB}.\label{fn:gq}}

An interpretation of $\tau$ in $\sigma$ with parameters $\tup{z}$  defines a mapping that takes a $\sigma$-structure $\str{A}$, along with an interpretation $\tup{a}$ of the parameters $\tup{z}$ in $\str{A}$ to a $\tau$-structure $\str{B}$ as follows.   The universe of $\str{B}$ is $B = A$, the same universe as $\str{A}$, and the relations $R_i \in \tau$ are interpreted in $\str{B}$ by $R_i^{\str{B}} = \{(b_1,\dots,b_{r_i}) \in B^{r_i} \mid \str{A}
\models \phi_{R_i}(b_1,\dots,b_{r_i},\tup{a}_i)\}$ where $\tup{a}_i$ is the interpretation of the subtuple $\tup{z}_i$ given by $\tup{a}$.

 Let $L$ be one of the logics in the previous section and $K$ a class of
$\tau$-structures with $\tau = \{R_1,\dots,R_m\}$. The extension $L(Q_{K})$ of $L$ by the \emph{generalised quantifier} (also known as the \emph{Lindstr\"{o}m  quantifier}) for
the class $K$ is obtained by extending the syntax of $L$ by the
following formula formation rule:
\begin{quote}
	Let
$I = \phi_{R_1},\ldots,\phi_{R_m}$
be formulas in $L(Q_{K})$ that form an interpretation of $\tau$ in $\sigma$ with parameters $\tup{z}$. Then $\psi(\tup{z}) = Q_{K}\tup{x}I(\tup{z})$ is a formula in $L(Q_{K})$ over the signature $\sigma$, with the variables in $\tup{x}$ bound.  The semantics of the formula is given by
$(\str{A},\tup{a}) \models Q_{K}\tup{x}I(\tup{z})$, if, and only
if,  $\str{B} := I(\str{A},\tup{a})$ is defined and $\str{B}$ is in $K$.
\end{quote}

Thus, adding the generalised quantifier $Q_K$ to the logic $L$ is
the most direct way to make the class $K$ definable in $L$.
Formally, if $L$ is a regular logic in the sense of~\cite{Ebb85}, then
its extension by $Q_K$ is the minimal regular logic that can also
define $K$.

The classical first-order quantifiers, $\exists$ and $\forall$, can be derived
as generalised quantifiers in the following way. Let $\phi(x, \tup{z})$ be a
formula in $L$. This determines an interpretation into $\tau_1$ the signature
with a single unary relation $U$. The classes $K_{\exists} = \{ \str{A} \
\mid \ U^{\str{A}} \neq \emptyset \}$ and $K_{\forall} = \{ \str{A} \ \mid \
U^{\str{A}} = A \}$ are isomorphism-closed classes of $\tau_1$ structures. Now,
the generalised quantifiers $Q_{K_{\exists}}$ and $Q_{K_{\forall}}$ have the same
formula formation rules as $\exists$ and $\forall$ and the formulas $Q_{K_{\exists}}x\phi(x, \tup{z})$
and $Q_{K_{\forall}}x\phi(x, \tup{z})$ have the same semantics as $\exists x\phi(x, \tup{z})$
and $\forall x\phi(x, \tup{z})$.

\begin{rem}\label{rem:typeAtypeB}
  As noted above (see footnote~\ref{fn:gq}), our definition of the syntax of logics with generalised quantifiers is somewhat non-standard.  We follow Kolaitis and Väänänen~\cite{KolaitisVaananen1995} in requiring that the parameters $\tup{z}$ to an interpretation  $I(\tup{z}) = (\phi_{R_1}(\tup{x}_1,\tup{z}),\ldots, \phi_{R_m}(\tup{x}_m\tup{z}))$ are distinct from the variables in each $\tup{x}_i$.  The standard definition (as in~\cite{Hella1996} and~\cite{EbbinghausFlum1999}) does not require this restriction and so, in a formula $Q_K\tup{x}_1\ldots\tup{x}_m I(\tup{x},\tup{z})$, a variable that appears in $\tup{x}_1$ but not in $\tup{x}_2$ may nonetheless occur free in $\phi_2$ and so be among the parameters $\tup{z}$.  In the absence of any restrictions on the number of variables, this distinction makes no difference to the expressive power of the logic.  However, it does make a difference to how many variables are required in a formula.

  To appreciate the difference, it is worth recalling two equivalent ways in which one can define the fragments $\FO^k$ and $\infL^k$ of first-order logic and infinitary logic respectively.  One can define them syntactically as the restriction of the logics $\FO$ and $\infL$ to formulas using just the variables $x_1,\ldots,x_k$ or more permissively as those formulas in which each subformula has at most $k$ free variables.  It is clear that any formula under the more permissive definition can be transformed, by renaming variables appropriately, to an equivalent one in the restricted syntactic form.  Such a translation is not possible in the presence of generalised quantifiers which may bind variables in more than one subformula simultaneously.  To be precise, a formula of $L(Q_K)$ as we have defined it, and in which no subformula has more than $k$ free variables can be translated to a formula using only the variables $x_1,\ldots,x_k$ only if we use the syntax that permits the quantifier to bind a variable in one subformula while leaving it free in another, i.e.\ the standard syntax.

  Our approach is to adopt the syntactic restriction of Kolaitis and Väänänen, as we have done above, but to define the $k$-variable fragment as consisting of those formulas in which no subformula has more than $k$ free variables, rather than the more restrictive one where no more than $k$ variables appear.  The formal definition follows.
  \end{rem}

  We write $\infL^k(Q_K)$ for the collection of formulas $\phi$ of infinitary logic, extended with the quantifier $Q_K$ such that no subformula of $\phi$ contains more than $k$ free variables.  When we need to refer to the fragment of $\infL^k(Q_K)$ consisting of those formulas with no more than $k$ variables, we will call this ``KV $\infL^k(Q_K)$'' and we may refer to $\infL^k(Q_K)$ as ``Hella $\infL^k(Q_K)$'' to distinguish it.

We define the \emph{arity} of the quantifier $Q_K$ to be $\max\{r_i \mid i \in [m] \}$, i.e.\ the maximum arity of any relation in $\tau$.  Note that this is the number of variables bound by the quantifier.
We write $\infL^{k}(\mathcal{Q}_n)$ for the (Hella) extension of $\infL^k$ with \emph{all} quantifiers of arity $n$.  This is only of interest when $n \leq k$.  Kolaitis and V\"a\"an\"anen~\cite{KolaitisVaananen1995} showed that KV $\infL^{k}(\mathcal{Q}_1)$ is equivalent to $\infC^k$.  However, allowing quantifiers of higher arity gives logics of considerably more expressive power.  In particular, if $\sigma$ is a signature with all relations of arity at most $n$, then \emph{any} property of $\sigma$-structures is expressible in $\infL^{n}(\mathcal{Q}_n)$.  Thus, all properties of graphs, for instance, are expressible in $\infL^{2}(\mathcal{Q}_2)$.

\subsection{Games}
For a pair of structures $\str{A}$ and $\str{B}$ and a logic $L$, we write $\str{A} \Rrightarrow_L \str{B}$ to denote that every sentence of $L$ that is true in $\str{A}$ is also true in $\str{B}$.  When the logic is closed under negation, as is the case with $\FO$ and $\infL$, for instance, $\str{A} \Rrightarrow_L \str{B}$ implies  $\str{B} \Rrightarrow_L \str{A}$.  In this case, we have an equivalence relation between structures and we write $\str{A} \equiv_L \str{B}$.  When $\str{A}$ and $\str{B}$ are finite structures, $\str{A} \Rrightarrow_{\FO} \str{B}$ implies $\str{A} \Rrightarrow_{\infL} \str{B}$, and the same holds for the $k$-variable fragments of these logics (see~\cite{DLW95}).

The relations $\Rrightarrow_L$ are often characterised in terms of games which we generically call \emph{Spoiler-Duplicator} games.  They are played between two players called Spoiler and Duplicator on a board consisting of the two structures $\str{A}$ and $\str{B}$, where the players take turns to place pebbles on elements of the structures.  For instance, the existential-positive $k$-pebble game introduced by Kolaitis and Vardi~\cite{KolaitisVardi1992} which we denote $\exists\game{Peb}{k}{}$ characterises the relation $\Rrightarrow_{\infH^k}$.  In this game, Spoiler and Duplicator each has a collection of $k$ pebbles indexed $1,\ldots,k$.  In each round Spoiler places one of its pebbles on an element of $\str{A}$ and Duplicator responds by placing its corresponding pebble (i.e.\ the one of the same index) on an element of $\str{B}$.  If the partial map taking the element of $\str{A}$ on which Spoiler's pebble $i$ sits to the element of $\str{B}$ on which Duplicator's pebble $i$ is, fails to be a partial homomorphism, then Spoiler has won the game.  Duplicator wins by playing forever without losing.  We get a game characterising $\equiv_{\infL^k}$ if \emph{(i)} Spoiler is allowed to choose, at each move, on which of the two structures it places a pebble and Duplicator is required to respond in the other structure; and \emph{(ii)} Duplicator is required to ensure that the pebbled positions form a partial \emph{isomorphism}.

The equivalence $\equiv_{\infC^k}$ is characterised by the following bijection game which is often attributed to Hella~\cite{Hella1996} but this version actually follows Immerman in Definition 12.22 of~\cite{Immerman1998}.  We write $\game{Bij}{k}{}(\mathcal{A}, \mathcal{B})$ for this bijection game played on $\str{A}$ and $\str{B}$.  Again, there is a set of $k$ pebbles associated with each of the structures $\str{A}$ and $\str{B}$, indexed by the set $[k]$.  At each move, Spoiler chooses an index $i \in [k]$ and Duplicator is required to respond with a bijection $f: A \rightarrow B$.  Spoiler then chooses an element $a \in A$ and pebbles indexed $i$ are placed on $a$ and $f(a)$.  If the partial map defined by the pebbled positions is not a partial isomorphism, then Spoiler has won.  Duplicator wins by playing forever without losing.

The bijection game described above has been widely studied and used to establish that many interesting properties are not invariant under the relation $\equiv_{\infC^k}$ for any $k$.  This is of great interest as these equivalence relations have many natural and independently arising characterisations in algebra, combinatorics, logic and optimisation.  However, in Hella's original work, bijection games appear as a special case of the \emph{$n$-bijective $k$-pebble game}, which we denote $\game{Bij}{n}{k}(\mathcal{A}, \mathcal{B})$ when played on structures $\str{A}$ and $\str{B}$.  This characterises the equivalence relation $\equiv_{\infL^{k}(\mathcal{Q}_n)}$.  Once again, we have a set of $k$ pebbles associated with each of the structures $\str{A}$ and $\str{B}$ and indexed by $[k]$.  At each move, Duplicator is required to give a bijection $f: A \rightarrow B$ and Spoiler chooses a set of up to $n$ pebble indices $p_1,\ldots,p_n \in [k]$ and moves the corresponding indices to elements $a_1,\ldots,a_n \in A$ and $f(a_1),\ldots,f(a_n)$ in $B$. If the partial map defined by the pebbled positions is not a partial isomorphism, then Spoiler has won.  Duplicator wins by playing forever without losing.  Note, in particular, that for Duplicator to have a winning strategy it is necessary that the reducts of $\str{A}$ and $\str{B}$ to relations of arity at most $n$ are isomorphic.  For example, on graphs Spoiler wins any game on non-isomorphic graphs with $n,k \geq 2$.

\begin{rem}\label{rem:games}
It is important to note that $\game{Bij}{n}{k}$ and
$\game{Bij}{k}{}$ differ in the order in which Spoiler
picks up the pebbles to be moved and Duplicator provides
their bijection. Hence, $\game{Bij}{1}{k}$ and $\game{Bij}{k}{}$ are in fact different games, though this difference is
often overlooked in the literature.

Similarly, we could define a version of the bijection game where Spoiler first picks up $n$ pebbles and then Duplicator provides a bijection.  This would correspond to equivalence in the logic KV $\infL^{k}(\mathcal{Q}_n)$.
\end{rem}

\subsection{Comonads}
We assume that the reader is familiar with basic definitions from category theory, in particular the notions of category, functor and natural transformation.  An introduction may be found in~\cite{AbramskyT2010}.  For a finite signature $\sigma$, we are interested in the category $\mathcal{R}(\sigma)$ of relational structures over $\sigma$.  The objects of the category are such structures and the maps are homomorphisms between structures.

A \emph{comonad} $\mathbb{T}$ on a category $\mathcal{C}$ is a triple $(\mathbb{T},\epsilon,\delta)$ where $\mathbb{T}$ is an endofunctor of $\mathcal{C}$, and $\epsilon$ and $\delta$ are natural transformations, giving for each object $A \in \mathcal{C}$, morphisms $\epsilon_A : \mathbb{T} A \rTo A$ and  $\delta_A : \mathbb{T} A \rTo \mathbb{T} \mathbb{T} A$ so that the following diagrams commute.
\[\begin{diagram}
\mathbb{T} A & \rTo^{\delta_A} & \mathbb{T} \mathbb{T} A \\
\dTo^{\delta_A} & & \dTo_{\mathbb{T} \delta_A} \\
\mathbb{T} \mathbb{T} A & \rTo_{\delta_{\mathbb{T} A}} & \mathbb{T} \mathbb{T} \mathbb{T} A
\end{diagram} \qquad \qquad
\begin{diagram}
\mathbb{T} A & \rTo^{\delta_A} & \mathbb{T} \mathbb{T} A \\
\dTo^{\delta_{\str{A}}} & \rdEq & \dTo_{\mathbb{T} \epsilon_{\str{A}}} \\
\mathbb{T} \mathbb{T} A & \rTo_{\epsilon_T} & \mathbb{T} A
\end{diagram}\]

We call $\epsilon$ the \emph{counit} and $\delta$ the comultiplication of the comonad $(\mathbb{T},\epsilon,\delta)$.

Associated with any comonad $(\mathbb{T},\epsilon,\delta)$ is a \emph{Kleisli category} we denote $\mathcal{K}(\mathbb{T})$.  The objects are the objects of the underlying category $\mathcal{C}$ and the maps $A \rTo_{\mathcal{K}(\mathbb{T})} B$ are morphisms $\mathbb{T} A \rTo B$ in $\mathcal{C}$.  Composition is given by the comultiplication:
\begin{diagram}
\mathbb{T} \str{A} & \rTo^{\delta_\str{A}} & \mathbb{T} \mathbb{T} \str{A} & \rTo^{\mathbb{T} f} & \mathbb{T} \str{B} & \rTo^{g} & \str{C}.
\end{diagram}
The identity morphisms are given by the counit: $\epsilon_{\str{A}} : \mathbb{T} \str{A} \rTo \str{A}$.

A coalgebra for the comonad is a map $\alpha : A \rightarrow \mathbb{T} A$ such that the following diagrams commute.

\[\begin{diagram}
A & \rTo^{\alpha} & \mathbb{T} A \\
\dTo^{\alpha} & & \dTo_{\delta_{\str{A}}} \\
\mathbb{T} A & \rTo_{\mathbb{T} \alpha} & \mathbb{T} \mathbb{T} A
\end{diagram}  \qquad \qquad
\begin{diagram}
A & \rTo^{\alpha} & \mathbb{T} A \\
& \rdTo_{\mathrm{id}_{\str{A}}} & \dTo_{\epsilon_{\str{A}}} \\
& & A
\end{diagram}\]

Abramsky et al.~\cite{Abramsky2017} describe the construction of a comonad $\T{k}$, graded by $k$, on the category $\mathcal{R}(\sigma)$ which exposes an interesting relationship between the games $\exists\game{Peb}{k}{}(\mathcal{A}, \mathcal{B})$ and $\game{Bij}{k}{}(\mathcal{A}, \mathcal{B})$.  Specifically, it shows that Duplicator winning strategies in the latter are exactly the isomorphisms in a category in which the morphisms are winning strategies in the former.

For any $\str{A}$, $\T{k}\str{A}$ is an infinite structure (even when $\str{A}$ is finite) with universe $(A \times [k])^{+}$.  The counit $\epsilon_{\str{A}}$ takes a sequence $[(a_1,p_1),\ldots,(a_m,p_m)]$ to $a_m$, i.e.\ the first component of the last element of the sequence.  The comultiplication $\delta_{\str{A}}$ takes a sequence $[(a_1,p_1),\ldots,(a_m,p_m)]$ to the sequence $[(s_1,p_1),\ldots,(s_m,p_m)]$ where $s_i = [(a_1,p_1),\ldots,(a_i,p_i)]$.
The relations are defined so that $(s_1,\ldots,s_r) \in R^{\T{k}\str{A}}$ if, and only if, the $s_i$ are all comparable in the prefix order of sequences (and hence form a chain), $R^{\str{A}}(\epsilon_{\str{A}}(s_1),\ldots,\epsilon_{\str{A}}(s_r))$ and whenever $s_i$ is a prefix of $s_j$ and ends with the pair $(a,p)$, there is no prefix of $s_j$ properly extending $s_i$ which ends with $(a',p)$ for any $a' \in A$.

It is convenient to consider structures over a signature $\sigma \cup \{I\}$ where $I$ is a new binary relation symbol.  An $I$-structure is a structure over this signature which interprets $I$ as the identity relation.  Note that even when $\str{A}$ is an $I$-structure, $\T{k}\str{A}$ is not one.  The key results from~\cite{Abramsky2017} relating the comonad with pebble games can now be stated as establishing a precise translation between \emph{(i)} morphisms $\str{A} \rTo_{\mathcal{K}(\T{k})} \str{B}$ for $I$-structures $\str{A}$ and $\str{B}$; and \emph{(ii)} winning strategies for Duplicator in $\exists\game{Peb}{k}{}(\mathcal{A}, \mathcal{B})$; and similarly a precise translation between \emph{(i)} isomorphisms in $\mathcal{K}(\T{k})$ between $\str{A}$ and  $\str{B}$ for $I$-structures $\str{A}$ and $\str{B}$; and \emph{(ii)} winning strategies for Duplicator in $\game{Bij}{k}{}(\mathcal{A}, \mathcal{B})$.

A key result from the construction of the comonad $\T{k}$ is the relationship between the coalgebras of this comonad and tree decompositions.  In particular, a structure $\str{A}$ has a coalgebra $\alpha: \str{A} \rightarrow \T{k} \str{A}$ if, and only if, the treewidth of $\str{A}$ is at most $k-1$.  This relationship between coalgebras and tree decompositions is established through a definition of a \emph{tree traversal} which we review in Section~\ref{sec:coalgebra} below.

\section{Games and Logic with Generalised Quantifiers}\label{sec:logic}

The $n$-bijective $k$-pebble game $\game{Bij}{n}{k}$ as introduced by Hella is a model-comparison game which captures equivalence of structures over the logic $\infL^{k}(\mathcal{Q}_n)$, i.e.\ $k$-variable infinitary logic where the allowed quantifiers are all generalised quantifiers with arity $\leq n$. This game generalises a variant of the bijection game $\game{Bij}{k}{}$ which captures equivalence over $\mathcal{C}^{k}$, $k$-variable infinitary logic with counting quantifiers (which is equivalent to a fragment of $\infL^{k}(\mathcal{Q}_1)$ as shown by Kolaitis and V\"a\"an\"anen~\cite{KolaitisVaananen1995}). In this section, we introduce a family of games which relax the rules of $\game{Bij}{n}{k}$ and show their correspondence to different fragments of $\infL^{k}(\mathcal{Q}_n)$. In particular, we introduce a ``one-way'' version of $\game{Bij}{n}{k}$ which is crucial to our construction of a modified version of the $\T{k}$ comonad for these games.

\subsection{\texorpdfstring{Relaxing $\game{Bij}{n}{k}$}{Relaxing Bij\textsuperscript{k}\textsubscript{n}}}\label{sec:games}

Recall that each round of $\game{Bij}{n}{k}(\str{A}, \str{B})$ involves Duplicator selecting a bijection $f: A \rightarrow B$ and ends with a test of whether for the pebbled positions $(a_i, b_i)_{i\in[k]}$ it is the case that for any $\{i_1, \ldots i_r\}\subset[k]$  \[(a_{i_1}, \dots a_{i_r}) \in R^{\str{A}} \iff (b_{i_1}, \dots b_{i_r}) \in R^{\str{B}}\] where Duplicator loses if the test is failed. For the rest of the round, Spoiler rearranges up to $n$ pebbles on $\str{A}$ with the corresponding pebbles on $\str{B}$ moved according to $f$.

So, to create from $\game{Bij}{n}{k}$ a ``one-way'' game from $\str{A}$ to $\str{B}$ we need to relax the condition that $f$ be a bijection and the $\iff$ in the final test. The following definition captures the most basic such relaxation:

\begin{defi}\label{defn:function-game}

  For two relational structures $\mathcal{A}$, $\mathcal{B}$, the positive $k$-pebble $n$-function game, $+\game{Fun}{n}{k}(\mathcal{A}, \mathcal{B})$ is played by Spoiler and Duplicator. Prior to the $j$th round the position consists of partial maps $\pi_{j-1}^a:[k] \pra A$ and $\pi^b_{j-1}:[k] \pra B$.  In Round $j$
\begin{itemize}
  \item Duplicator provides a function $h_j: A \rightarrow B$ such that for each $i \in [k]$, $h_j(\pi^a_{j-1}(i)) = \pi_{j-1}^b(i)$.
  \item Spoiler picks up to $n$ distinct \emph{pebbles}, i.e.\ elements $p_1, \dots p_m \in [k] (m \leq n)$ and $m$ elements $x_1, \dots x_m \in A$.
  \item  The updated position is given by $\pi^a_j(p_l) = x_l$ and $\pi^b_j(p_l) = h_j(x_l)$ for $l \in[m]$; and $\pi_j^a(i) = \pi_{j-1}^a(i)$ and $\pi_j^b(i) = \pi_{j-1}^b(i)$ for $i \not\in \{p_1,\ldots,p_m\}$.
  \item If there is some $R \in \sigma$ and $(i_1, \dots i_{r})\in [k]^{r}$ with $(\pi^a_j(i_1),\ldots,\pi^a_j(i_r)) \in  R^{\str{A}}$ \\
     but $(\pi^b_j(i_1),\ldots,\pi^b_j(i_r)) \not\in  R^{\str{B}}$, then Spoiler has won the game.
\end{itemize}
Duplicator wins by preventing Spoiler from winning.
\end{defi}

As this game is to serve as the appropriate one-way game
for $\game{Bij}{n}{k}$, it is worth asking how this this
game relates to $\exists\game{Peb}{k}{}$ (the one-way game
for $\game{Bij}{k}{}$) which makes no mention of functions
in its definition. The answer comes in recalling Abramsky
et al.'s presentation of a (deterministic) strategy for
Duplicator in $\exists\game{Peb}{k}{}(\mathcal{A}, \mathcal{B})$
as a collection of \textit{branch maps} $\phi_{s, i}: A
\rightarrow B$ for each $s \in (A \times [k])^{\ast}$, a history
of Spoiler moves and $i \in [k]$ a pebble index. These branch
maps tell us how Duplicator would respond to Spoiler moving pebble $i$ to any element in $A$ given the moves $s$ that Spoiler has played in preceding rounds and can be thought of as a function which Duplicator provides to Spoiler after Spoiler has indicated which pebble he will move. In the game in Definition~\ref{defn:function-game}, Duplicator provides this function before Spoiler indicates which
pebbles are to be moved.

In addition to this game, we now define some other relaxations of $\game{Bij}{n}{k}$ which are important. In particular we define the following \textit{positive} games by retaining that the pebbled position need only preserve positive atoms at the end of each round but varying the condition on $f$.

\begin{defi}
 For two relational structures $\mathcal{A}$, $\mathcal{B}$, the positive $k$-pebble $n$-injection (resp.\ surjection, bijection) game, $+\game{Inj}{n}{k}(\mathcal{A}, \mathcal{B})$ (resp.\ $+\game{Surj}{n}{k}(\mathcal{A}, \mathcal{B})$, $+\game{Bij}{n}{k}(\mathcal{A}, \mathcal{B})$) is played by Spoiler and Duplicator.  Prior to the $j$th round the position consists of partial maps $\pi_{j-1}^a:[k] \pra A$ and $\pi^b_{j-1}:[k] \pra B$.  In Round $j$
\begin{itemize}
  \item Duplicator provides an injection (resp.\ a surjection, bijection) $h_j: A \rightarrow B$ such that for each $i \in [k]$, $h_j(\pi^a_{j-1}(i)) = \pi_{j-1}^b(i)$.
  \item Spoiler picks up to $n$ distinct \emph{pebbles}, i.e.\ elements $p_1, \dots p_m \in [k] (m \leq n)$ and $m$ elements $x_1, \dots x_m \in A$.
  \item  The updated position is given by $\pi^a_j(p_l) = x_l$ and $\pi^b_j(p_l) = h_j(x_l)$ for $l \in[m]$; and $\pi_j^a(i) = \pi_{j-1}^a(i)$ and $\pi_j^b(i) = \pi_{j-1}^b(i)$ for $i \not\in \{p_1,\ldots,p_m\}$.
  \item Spoiler has won the game if there is some $R \in \sigma$ and $(i_1, \dots i_{r})\in [k]^{r}$ \\
    with $(\pi^a_j(i_1),\ldots,\pi^a_j(i_r)) \in  R^{\str{A}}$ but $(\pi^b_j(i_1),\ldots,\pi^b_j(i_r)) \not\in  R^{\str{B}}$.
\end{itemize}
Duplicator wins by preventing Spoiler from winning.

\end{defi}

Strengthening the test condition in each round so that Spoiler wins if there is some $R \in \sigma$ and $(i_1, \dots i_{r})\in [k]^{r}$ with $(\pi^a_j(i_1),\ldots,\pi^a_j(i_r)) \in  R^{\str{A}}$ if, and only if,  $(\pi^b_j(i_1),\ldots,\pi^b_j(i_r)) \not\in  R^{\str{B}}$,
we get the definitions for the games $\game{Fun}{n}{k}$, $\game{Inj}{n}{k}$, $\game{Surj}{n}{k}$ and $\game{Bij}{n}{k}$ where the latter is precisely the $n$-bijective $k$-pebble game of Hella.  We recap the poset of the games we've just defined ordered by strengthening of the rules/restrictions on Duplicator in the Hasse diagram in Figure~\ref{fig:game-Hasse}.  Here a game $\mathcal{G}$ is above $\mathcal{G}'$ if a Duplicator winning strategy in $\mathcal{G}$ is also one in $\mathcal{G}'$.
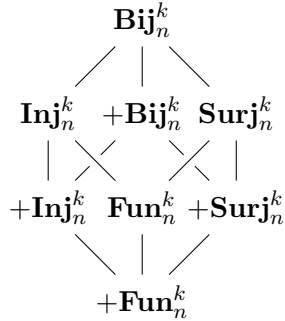
\begin{figure}[h]
  \centering

\begin{tikzpicture}
  \node (max) at (0,2.5) {$\game{Bij}{n}{k}$};
  \node (a) at (-1.25,1.25) {$\game{Inj}{n}{k}$};
  \node (b) at (0,1.25) {$+\game{Bij}{n}{k}$};
  \node (c) at (1.25,1.25) {$\game{Surj}{n}{k}$};
  \node (d) at (-1.25,0) {$+\game{Inj}{n}{k}$};
  \node (e) at (0,0) {$\game{Fun}{n}{k}$};
  \node (f) at (1.25,0) {$+\game{Surj}{n}{k}$};
  \node (min) at (0,-1.25) {$+\game{Fun}{n}{k}$};
  \draw (min) -- (d) -- (a) -- (max) -- (b) -- (d)
  (e) -- (min) -- (f) -- (c) -- (max)
  (f) -- (b);
  \draw[preaction={draw=white, -,line width=6pt}] (a) -- (e) -- (c);
\end{tikzpicture}

  \caption{Hasse Diagram of Games}\label{fig:game-Hasse}
\end{figure}

\subsection{Logics with generalised quantifiers}\label{sec:logics}

In Section~\ref{sec:background}, we introduce for each $n, k \in \nats$ the logics, $\infL^{k}(\mathcal{Q}_n)$ as the infinitary logic extended with all generalised quantifiers of arity $n$.

In this section we explore fragments of $\infL^{k}(\mathcal{Q}_n)$ defined by restricted classes of generalised quantifiers, which we introduce next.

\begin{defi}
  A class of $\sigma$-structures $K$ is homomorphism-closed if for all homomorphisms $f: \str{A} \rightarrow \str{B}$
\[ \str{A} \in K \implies \str{B} \in K.\]
  Similarly, we say $K$ is injection-closed (resp.\ surjection-closed, bijection-closed) if for all injective homomorphisms (resp.\ surjective, bijective homomorphisms) $f: \str{A} \rightarrow \str{B}$
\[ \str{A} \in K \implies \str{B} \in K.\]

  We write $\mathcal{Q}^{\text{h}}_n$ for the class of all generalised quantifiers $Q_K$ of arity $n$ where $K$ is homomorphism-closed.  Similarly,  we write $\mathcal{Q}^{\text{i}}_n$, $\mathcal{Q}^{\text{s}}_n$ and $\mathcal{Q}^{\text{b}}_n$ for the collections of $n$-ary quantifiers based on injection-closed, surjection-closed and bijection-closed classes.
\end{defi}

In order to define logics which incorporate these restricted classes of quantifiers, we first define a base logic without quantifiers or negation.
\begin{defi}
Fix a signature $\sigma$.

We denote by $\posqfk{k}[\sigma]$, the class of positive infinitary $k$-variable quantifier-free formulas over $\sigma$.  That means the $k$ variable fragment of the class of formulas
\[ \posqfk{}[\sigma] :\defeq R(x_1, \dots x_m) \ | \ \bigwedge_{\mathcal{I}} \phi \ | \ \bigvee_{\mathcal{J}} \psi \]
for any $R \in \sigma$.
We use $\qfk{k}[\sigma]$ to denote a similar class of formulas but with negation permitted on atoms.
\end{defi}

This basic set of formulas can be extended into a logic by adding some set of quantifiers as described here:

\begin{defi}
For $\mathcal{Q}$ some collection of generalised quantifiers, we denote by $\posqfk{k}(\mathcal{Q})$ the smallest extension of $\posqfk{k}$ closed under the construction
\[ \mathbf{Q} x_1, \dots x_n. \ (\psi_T(\mathbf{x}_{T},\mathbf{y}_T))_{T \in \tau}  \]
for any $\mathbf{Q} \in \mathcal{Q}$. $\qfk{k}(\mathcal{Q})$ is the same logic but with negation on atoms.
Note that $\infH^{k} \equiv \posqfk{k}(\exists)$ and, as we can always push negation down to the level of atoms in $\infL^{k}$, $\infL^{k} \equiv \qfk{k}(\exists, \forall)$.
\end{defi}

With this definition we are ready to introduce our logics. These are $\qfk{k}(\mathcal{Q}^{\text{h}}_n)$, $\qfk{k}(\mathcal{Q}^{\text{i}}_n)$, $\qfk{k}(\mathcal{Q}^{\text{s}}_n)$ and $\qfk{k}(\mathcal{Q}^{\text{b}}_n)$ and their positive counterparts  $\posqfk{k}(\mathcal{Q}^{\text{h}}_n)$, $\posqfk{k}(\mathcal{Q}^{\text{i}}_n)$, $\posqfk{k}(\mathcal{Q}^{\text{s}}_n)$ and $\posqfk{k}(\mathcal{Q}^{\text{b}}_n)$.
  The obvious inclusion relationships between these logics are given by the Hasse diagram in Figure~\ref{fig:logic-Hasse}.  As we shall see, these logics are governed exactly by the games pictured in Figure~\ref{fig:game-Hasse}.

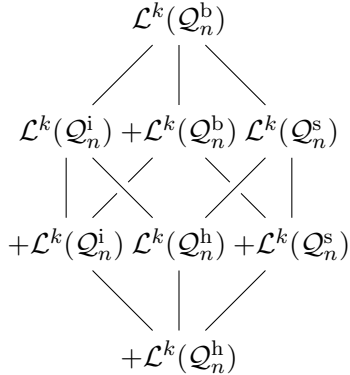
\begin{figure}[h]
  \centering
\[
\begin{tikzpicture}
  \node (max) at (0,3) {$\qfk{k}(\mathcal{Q}^{\text{b}}_n)$};
  \node (a) at (-1.5,1.5) {$\qfk{k}(\mathcal{Q}^{\text{i}}_n)$};
  \node (b) at (0,1.5) {$\posqfk{k}(\mathcal{Q}^{\text{b}}_n)$};
  \node (c) at (1.5,1.5) {$\qfk{k}(\mathcal{Q}^{\text{s}}_n)$};
  \node (d) at (-1.5,0) {$\posqfk{k}(\mathcal{Q}^{\text{i}}_n)$};
  \node (e) at (0,0) {$\qfk{k}(\mathcal{Q}^{\text{h}}_n)$};
  \node (f) at (1.5,0) {$\posqfk{k}(\mathcal{Q}^{\text{s}}_n)$};
  \node (min) at (0,-1.5) {$\posqfk{k}(\mathcal{Q}^{\text{h}}_n)$};
  \draw (min) -- (d) -- (a) -- (max) -- (b) -- (d)
  (e) -- (min) -- (f) -- (c) -- (max)
  (f) -- (b);
  \draw[preaction={draw=white, -,line width=6pt}] (a) -- (e) -- (c);
\end{tikzpicture}
\]
 \caption{Hasse Diagram of Logics}\label{fig:logic-Hasse}
\end{figure}

Before we prove the correspondence with the aforementioned games, we highlight two important facts about this family of logics. Firstly, we show that $\qfk{k}(\mathcal{Q}^{\text{b}}_n)$ is equivalent to Hella's original infinitary logic with $n$-ary generalised quantifiers and, secondly, we show how these families of generalised quantifiers relate the sizes of structures.

\subsubsection{\texorpdfstring{$\qfk{k}(\mathcal{Q}^{\text{b}}_n)$ and $\infL^{k}(\mathcal{Q}_n)$ are equivalent}{L\textsuperscript{k}(Q\textsuperscript{b}\textsubscript{n}) and L\textsuperscript{k}\textsubscript{∞}(Q\textsubscript{n}) are equivalent}}

Theorem~\ref{thm:logic} proves, among other things, that Duplicator has a winning strategy in the game $\game{Bij}{n}{k}(\str{A}, \str{B})$ if, and only if, $\str{A} \equiv_{\qfk{k}(\mathcal{Q}^{\text{b}}_n)} \str{B}$. However, Hella~\cite{Hella1989} originally characterised such pairs of structures by equivalence in the seemingly more powerful logic $\infL^{k}(\mathcal{Q}_n)$. Here, we show from first principles that these two logics are indeed equivalent.

We say that two logics $\mathcal{L}_0$ and $\mathcal{L}_1$ are \emph{equivalent} if for every signature $\sigma$ and every formula  $\phi(\tup{y}) \in \mathcal{L}_{i}[\sigma]$ there exists an equivalent formula $\overline{\phi}(\tup{y}) \in \mathcal{L}_{1-i}[\sigma]$ such that for any $\sigma$-structure $\mathcal{A}$ and any tuple of elements $\tup{a}$ with the same length as $\tup{y}$, $\mathcal{A}, \tup{a} \models \phi(\tup{y})$ if and only if $\mathcal{A}, \tup{a} \models \overline{\phi}(\tup{y})$. For two such equivalent logics we will write $\mathcal{L}_0 \equiv \mathcal{L}_1$.

To show that $\qfk{k}(\mathcal{Q}^{\text{b}}_n) \equiv \infL^{k}(\mathcal{Q}_n)$ for any $n$ and $k$ we need to overcome two differences between these logics. Firstly, the class $\mathcal{Q}^{\text{b}}_n$ of bijective-homomorphism-closed $n$-ary quantifiers is a proper subclass of $\mathcal{Q}_n$ of all isomorphism-closed $n$-ary quantifiers. The following observation provides a way of replacing general isomorphism-closed classes with bijective-homomorphism-closed ones by modifying the signature.

\begin{obs}\label{obs:bijisom}
For $K$ an isomorphism-closed class of $\tau$-structures, if $\tau^{\prime} = \tau \cup \{\overline{R} \ | \ R \in \tau \}$ then
\[K^{\prime} = \{ \str{A} \in \strs{\tau^{\prime}} \ | \ \langle A , (R^{\str{A}})_{R \in \tau} \rangle \in K \text{ and } \forall R \in \tau, \ \overline{R}^{\str{A}} = A^{\text{arity}(R)} \setminus R^{\str{A}}\} \]
 is a bijective-homomorphism closed class of $\tau^{\prime}$ structures.
\end{obs}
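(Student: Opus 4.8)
The plan is to establish $K'$-closure under bijective homomorphisms by proving the sharper structural fact that really drives it: a bijective homomorphism between two $\tau'$-structures that each interpret every $\overline{R}$ as the genuine set-theoretic complement of $R$ is forced to be an isomorphism. Concretely, I would take a bijective homomorphism $f \colon \str{A} \to \str{B}$ of $\tau'$-structures with $\str{A} \in K'$, show that $f$ restricts to an isomorphism of the $\tau$-reducts, and then invoke the fact that $K$ is isomorphism-closed to conclude $\str{B} \in K'$.

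First I would fix $R \in \tau$ of arity $r$ and record that, since $\str{A} \in K'$, the sets $R^{\str{A}}$ and $\overline{R}^{\str{A}}$ partition $A^{r}$. As $f$ is a homomorphism of $\tau'$-structures it preserves both $R$ and $\overline{R}$, so $f^{r}(R^{\str{A}}) \subseteq R^{\str{B}}$ and $f^{r}(\overline{R}^{\str{A}}) \subseteq \overline{R}^{\str{B}}$; and since $f$ is a bijection, $f^{r} \colon A^{r} \to B^{r}$ is a bijection carrying the partition of $A^{r}$ to a partition $f^{r}(R^{\str{A}}) \sqcup f^{r}(\overline{R}^{\str{A}}) = B^{r}$. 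Combining these two inclusions with the fact that $\str{B}$ also interprets $\overline{R}$ as the complement of $R$, so that $R^{\str{B}} \sqcup \overline{R}^{\str{B}} = B^{r}$, the inclusions are forced to become equalities $R^{\str{B}} = f^{r}(R^{\str{A}})$ and $\overline{R}^{\str{B}} = f^{r}(\overline{R}^{\str{A}})$ (any tuple of $R^{\str{B}}$ lies in $B^{r}$, cannot lie in $f^{r}(\overline{R}^{\str{A}}) \subseteq \overline{R}^{\str{B}}$, hence lies in $f^{r}(R^{\str{A}})$). Since $f^{r}$ is injective, this shows that $f$ not only preserves but reflects $R$, so $f$ restricts to an isomorphism of the $\tau$-reducts $\langle A,(R^{\str{A}})_{R\in\tau}\rangle$ and $\langle B,(R^{\str{B}})_{R\in\tau}\rangle$. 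With the isomorphism in hand the conclusion is immediate: the $\tau$-reduct of $\str{A}$ lies in $K$ by definition of $K'$, $K$ is isomorphism-closed, and the reducts are isomorphic, so the $\tau$-reduct of $\str{B}$ lies in $K$; together with the complement condition on $\str{B}$ this yields $\str{B} \in K'$.

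The step I expect to be the crux is precisely the passage from inclusion to equality, which rests entirely on the target $\str{B}$ also satisfying the complement condition $\overline{R}^{\str{B}} = B^{r}\setminus R^{\str{B}}$. A bijective homomorphism is not in general an isomorphism, and if $\str{B}$ were allowed to interpret some $\overline{R}$ as a proper superset of $B^{r}\setminus R^{\str{B}}$ the argument collapses, since one can then have $R^{\str{B}}\cap\overline{R}^{\str{B}}\neq\emptyset$ and $f$ fails to reflect $R$. I would therefore phrase and use the observation under the standing convention that the symbols $\overline{R}$ are always interpreted as the complements of the $R$ (the convention built into the definition of $K'$ and into the interpretations defining $Q_{K'}$, where one takes $\phi_{\overline{R}} = \neg\phi_{R}$), so that both the source and the target of every bijective homomorphism under consideration meet it. Under this reading the remaining work is routine bookkeeping, and the genuine mathematical content is the single lemma that a bijection preserving a relation together with its complement must reflect that relation.
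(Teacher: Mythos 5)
The paper records this as an unproved observation, so there is no official argument to compare against; what you have written supplies the content that is being taken for granted. Your core lemma --- a bijection that preserves both $R$ and a relation interpreting the complement of $R$ must also \emph{reflect} $R$, hence restricts to an isomorphism of $\tau$-reducts, after which isomorphism-closure of $K$ finishes the job --- is correct, and your proof of it (push the partition $R^{\str{A}}\sqcup\overline{R}^{\str{A}}=A^{r}$ forward along the bijection $f^{r}$ and match it against the partition of $B^{r}$) is exactly right.

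The caveat in your final paragraph is not merely stylistic: with the paper's literal definition of bijection-closed (for \emph{all} bijective homomorphisms $f:\str{A}\to\str{B}$ of $\tau'$-structures, $\str{A}\in K'$ implies $\str{B}\in K'$), the class $K'$ is in fact not bijection-closed. Take $\tau=\{U\}$ with $U$ unary and $K$ the class of all $\tau$-structures; let $\str{A}$ be a one-element structure with $U^{\str{A}}=A$ and $\overline{U}^{\str{A}}=\emptyset$, and let $\str{B}$ be the same except $\overline{U}^{\str{B}}=B$. The identity is a bijective homomorphism, $\str{A}\in K'$, but $\str{B}\notin K'$ because the complement condition fails in $\str{B}$. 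Your ``standing convention'' makes the statement true but quietly changes what is asserted, and the downstream use --- that $Q_{K'}$ belongs to $\mathcal{Q}^{\text{b}}_n$, which by definition requires literal bijection-closure of the defining class --- needs more than the convention. The clean repair, driven by precisely your lemma, is to replace $K'$ by its closure $K''$ under bijective homomorphisms: $K''$ is bijection-closed by construction, and your reflection argument shows that $K''$ and $K'$ agree on every $\tau'$-structure satisfying the complement condition, which is the only kind of structure the translated formula $Q_{K''}\tup{x}\,(\psi_R,\neg\psi_R)_{R\in\tau}$ ever feeds to the quantifier. So your proof is essentially correct, and once routed through this closure it establishes a corrected form of the observation that still supports the equivalence $\qfk{k}(\mathcal{Q}^{\text{b}}_n)\equiv\infL^{k}(\mathcal{Q}_n)$.
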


An important consequence of this is that any such $K$, the formula \[\phi(\tup{y}) = \mathcal{Q}_K x_1, \dots x_n\ . \ (\psi_R(\mathbf{x}_R, \mathbf{y}_R))_{R \in \tau}\] is equivalent to the formula \[\phi'(\tup{y}) = \mathcal{Q}_{K'} x_1, \dots x_n\ . \ (\psi'_R(\mathbf{x}_R, \mathbf{y}_R))_{R \in \tau'},\] where for any $R \in \tau$ $\psi'_R = \psi_R $ and $\psi'_{\overline{R}} = \neg\psi_R $.

The second difference between these two logics is the  role of negation. As defined in this section, $\qfk{k}(\mathcal{Q}^{\text{b}}_n)$ only allows negation on atoms, whereas $\infL^{k}(\mathcal{Q}_n)$ allows negation throughout formulas. The following observation is important for dealing with this difference.

\begin{obs}\label{obs:querycomp}
A class of $\tau$-structures $K$ is isomorphism-closed if, and only if, its complement $K^{\text{c}}$ is.
\end{obs}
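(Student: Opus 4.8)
The plan is to exploit the one feature that distinguishes isomorphisms from the other morphism classes appearing in this paper (homomorphisms, injective and surjective homomorphisms), namely that every isomorphism $f\colon \str{A} \to \str{B}$ has an inverse $f^{-1}\colon \str{B} \to \str{A}$ which is again an isomorphism. Recalling that $K$ being isomorphism-closed means that whenever $\str{A} \in K$ and there is an isomorphism $f\colon \str{A} \to \str{B}$ we have $\str{B} \in K$, this invertibility already makes the condition symmetric in $\str{A}$ and $\str{B}$, and it is precisely this symmetry that I would leverage.

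Since $(K^{\text{c}})^{\text{c}} = K$, the biconditional is symmetric in $K$ and $K^{\text{c}}$, so it suffices to prove a single implication: if $K$ is isomorphism-closed then so is $K^{\text{c}}$. First I would fix a structure $\str{A} \in K^{\text{c}}$ together with an isomorphism $f\colon \str{A} \to \str{B}$ and show $\str{B} \in K^{\text{c}}$, i.e.\ $\str{B} \notin K$. If instead $\str{B} \in K$, then since $f^{-1}\colon \str{B} \to \str{A}$ is again an isomorphism and $K$ is isomorphism-closed, we would conclude $\str{A} \in K$, contradicting $\str{A} \in K^{\text{c}}$. Hence $\str{B} \in K^{\text{c}}$, which is exactly isomorphism-closure of $K^{\text{c}}$, and the reverse implication follows by replacing $K$ with $K^{\text{c}}$.

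There is essentially no obstacle here; the entire content of the observation lies in noticing that invertibility of isomorphisms is what makes the argument go through. It is worth flagging that the analogous statement fails for the other closure conditions considered in this section: the complement of a homomorphism-closed (resp.\ injection- or surjection-closed) class is in general not closed under the same kind of morphism, precisely because those morphisms need not be invertible. This is also what makes the observation relevant to reconciling $\qfk{k}(\mathcal{Q}^{\text{b}}_n)$ with $\infL^{k}(\mathcal{Q}_n)$ through their treatment of negation: negating a quantifier $Q_K$ amounts to passing to the quantifier $Q_{K^{\text{c}}}$ for the complement class, and keeping such negations within the permitted isomorphism-closed quantifiers requires exactly that $K^{\text{c}}$ is again isomorphism-closed.
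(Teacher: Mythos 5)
Your proof is correct, and it is the intended argument: the paper states this as an unproved observation, and the invertibility of isomorphisms (giving the symmetry that fails for the other morphism classes) is exactly the point. Your closing remark about why this underlies the treatment of negation via $Q_{K^{\text{c}}}$ matches the paper's use of the observation.
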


This implies that the formula $\phi(\tup{y}) = \neg\mathcal{Q}_K x_1, \dots x_n\ . \ (\psi_R(\mathbf{x}_R, \mathbf{y}_R))_{R \in \tau}$ is equivalent to $\phi'(\tup{y}) = \mathcal{Q}_{K^{\text{c}}} x_1, \dots x_n\ . \ (\psi_R(\mathbf{x}_R, \mathbf{y}_R))_{R \in \tau}$.

We are now ready to prove the desired equivalence of logics.
\begin{lem}
  For all $n, k \in \nats$, $\qfk{k}(\mathcal{Q}^{\text{b}}_n) \equiv \infL^{k}(\mathcal{Q}_n)$.
\end{lem}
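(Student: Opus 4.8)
The plan is to prove the two inclusions separately, with $\qfk{k}(\mathcal{Q}^{\text{b}}_n) \subseteq \infL^{k}(\mathcal{Q}_n)$ being essentially immediate and the converse requiring an inductive negation-normal-form argument. For the easy inclusion I would observe that every isomorphism is in particular a bijective homomorphism, so any bijection-closed class is isomorphism-closed; hence $\mathcal{Q}^{\text{b}}_n \subseteq \mathcal{Q}_n$ and every quantifier available in $\qfk{k}(\mathcal{Q}^{\text{b}}_n)$ is already available in $\infL^{k}(\mathcal{Q}_n)$. Since negation on atoms is a special case of the negation freely available in $\infL^{k}$, every $\qfk{k}(\mathcal{Q}^{\text{b}}_n)$ formula is literally an $\infL^{k}(\mathcal{Q}_n)$ formula with the same semantics, and this direction is done.

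For the converse $\infL^{k}(\mathcal{Q}_n) \subseteq \qfk{k}(\mathcal{Q}^{\text{b}}_n)$, the strategy is to put an arbitrary $\infL^{k}(\mathcal{Q}_n)$ formula into a negation normal form in which negations sit only on atoms and all quantifiers are drawn from $\mathcal{Q}^{\text{b}}_n$; this generalises the standard fact $\infL^{k} \equiv \qfk{k}(\exists, \forall)$ recalled above. Because the quantifier case reintroduces negations on subformulas, I would set this up as a simultaneous structural induction proving the stronger statement that, for every $\phi(\tup{y}) \in \infL^{k}(\mathcal{Q}_n)$, both $\phi$ and $\neg\phi$ are equivalent to formulas of $\qfk{k}(\mathcal{Q}^{\text{b}}_n)$. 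The atomic base case is trivial ($R(\tup{x})$ is positive and $\neg R(\tup{x})$ is a permitted negated atom), and the infinitary $\bigwedge$/$\bigvee$ cases follow from the induction hypothesis by De Morgan, moving negation across big conjunctions and disjunctions while staying within $k$ variables. The ordinary $\exists$ and $\forall$ are just arity-$1$ generalised quantifiers, so they are subsumed by the quantifier case below (for $n \geq 1$).

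The crux is the quantifier case $\phi = \mathcal{Q}_K \tup{x}.(\psi_R)_{R \in \tau}$ with $K$ isomorphism-closed of arity at most $n$. Here I would apply Observation~\ref{obs:bijisom} to rewrite $\phi$ as $\mathcal{Q}_{K'}\tup{x}.(\psi'_R)_{R\in\tau'}$ over the enlarged signature $\tau'$, where $K'$ is bijection-closed, $\psi'_R = \psi_R$ and $\psi'_{\overline{R}} = \neg\psi_R$; crucially $\tau'$ adds only relations $\overline{R}$ of the same arity as $R$, so $\mathcal{Q}_{K'}$ still has arity at most $n$ and the rewriting binds no new variables. The new components $\psi_R$ and $\neg\psi_R$ are then replaced by their $\qfk{k}(\mathcal{Q}^{\text{b}}_n)$ equivalents furnished by the induction hypothesis, which is precisely why the hypothesis must also cover negations of subformulas; this interaction is the main obstacle, and the reason for carrying $\neg\phi$ through the induction. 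For $\neg\phi$ itself I would first use Observation~\ref{obs:querycomp} to pass to the complement, $\neg\phi \equiv \mathcal{Q}_{K^{\text{c}}}\tup{x}.(\psi_R)_{R\in\tau}$ with $K^{\text{c}}$ again isomorphism-closed, and then apply Observation~\ref{obs:bijisom} to $K^{\text{c}}$ exactly as before. Throughout I would verify that the arity stays at most $n$ and that no step increases the variable count beyond $k$, so that the resulting formula genuinely lies in $\qfk{k}(\mathcal{Q}^{\text{b}}_n)$, closing the induction and hence the lemma.
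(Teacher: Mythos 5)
Your proposal is correct and follows essentially the same route as the paper: the easy inclusion is noted, and the converse is handled by induction using Observation~\ref{obs:querycomp} to absorb negations into complemented (still isomorphism-closed) classes and Observation~\ref{obs:bijisom} to replace isomorphism-closed classes by bijection-closed ones over the doubled signature, with $\psi_R$ and $\neg\psi_R$ handled by the induction hypothesis. The only difference is bookkeeping: the paper inducts on quantifier depth (so the hypothesis applies to $\neg\psi_R$ directly), while you carry the pair $\phi,\neg\phi$ through a structural induction — an equivalent organisation of the same argument.
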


\begin{proof}
Clearly $\qfk{k}(\mathcal{Q}^{\text{b}}_n)$ is contained in $\infL^{k}(\mathcal{Q}_n)$, so we focus on translating a formula $\phi(\tup{y}) \in \infL^{k}(\mathcal{Q}_n)$ to an equivalent $\tilde{\phi}(\tup{y})$ in $\qfk{k}(\mathcal{Q}^{\text{b}}_n)$. This can be done by induction on the quantifier depth of $\phi$. For quantifier depth $0$, there are no quantifiers to be replaced and any negation is either on atoms or can be assumed to be on atoms by appropriately distributing over conjunction or disjunction.

Now we assume $\phi$ has quantifier depth $q$. Without loss of generality, we can assume that $\phi$ is of the form $\mathcal{Q}_K x_1, \dots x_n\ . \ (\psi_R(\mathbf{x}_R, \mathbf{y}_R))_{R \in \tau}$ for some isomorphism-closed class $K$ of $\tau$-structures. Indeed, if $\phi$ contains a leading negation we can use Observation~\ref{obs:querycomp} to remove the negation by replacing $K$ with $K^{\text{c}}$. Note that the formulas $\psi_R$ and $\neg\psi_R$ have quantifier depth strictly less than $q$ and so by induction they have equivalents $\tilde{\psi_R}$ and $\tilde{\neg\psi_R}$ in $\qfk{k}(\mathcal{Q}^{\text{b}}_n)$. Now, using the consequence of Observation~\ref{obs:bijisom} mentioned above, we can define $\tilde{\phi}$ as $\mathcal{Q}_{K'} x_1, \dots x_n\ . \ (\tilde{\psi'_R}(\mathbf{x}_R, \mathbf{y}_R))_{R \in \tau'}$
\end{proof}

\subsubsection{Generalised quantifiers and size}
For any relational signature $\sigma$ let $\strs{\sigma}^{=M}$ denote the collection of $\sigma$-structures whose universe has exactly $M$ elements.  Let $\strs{\sigma}^{\geq M} = \bigcup_{m \geq M}\strs{\sigma}^{=m}$ and similarly $\strs{\sigma}^{\leq M} = \bigcup_{m \leq M}\strs{\sigma}^{=m}$.  It is obvious that $\strs{\sigma}^{=M}$ is bijection-closed, $\strs{\sigma}^{\geq M}$ is injection-closed and $\strs{\sigma}^{\leq M}$ is surjection-closed.
When $\sigma = \emptyset$ is the empty signature this gives us classes of sets $\mathcal{K}^{=M}$, $\mathcal{K}^{ \geq M}$ and $\mathcal{K}^{\leq M}$ which are closed under bijections, injections and surjections respectively.
As any signature $\sigma$ admits an empty interpretation into the empty signature which sends
any $\sigma$-structure to its underlying set, we can create sentences $B_m, I_m,$ and $ S_m$
by binding the nullary quantifier $\mathcal{Q}_{K}$, for $K = $ $\mathcal{K}^{=M}$, $\mathcal{K}^{ \geq M}$ and $\mathcal{K}^{\leq M}$ respectively, to this empty interpretation. As noted in the following observation
these sentences are important for comparing the sizes of structures, in any signature.
\begin{obs}\label{obs:size}
  For all $n,k, m \in\nats$ there are sentences $B_m, I_m,$ and $ S_m$ in $\posqfk{k}(\mathcal{Q}^b_n)$, $\posqfk{k}(\mathcal{Q}^i_n),$ and $ \posqfk{k}(\mathcal{Q}^s_n)$ respectively, such that
  \begin{align*}
    \str{A} \models B_m &\iff \abs{A} = m \\
    \str{A} \models I_m &\iff \abs{A} \geq m \\
    \str{A} \models S_m &\iff \abs{A} \leq m
  \end{align*}

  As a direct result of this we have that
  \begin{align*}
    \str{A} \Rrightarrow_{\posqfk{k}(\mathcal{Q}^b_n)} \str{B} &\implies \abs{A} = \abs{B} \\
    \str{A} \Rrightarrow_{\posqfk{k}(\mathcal{Q}^i_n)} \str{B} &\implies \abs{A} \leq \abs{B} \\
    \str{A} \Rrightarrow_{\posqfk{k}(\mathcal{Q}^s_n)} \str{B} &\implies \abs{A} \geq \abs{B}.
  \end{align*}
\end{obs}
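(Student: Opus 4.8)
The plan is to construct the three sentences $B_m, I_m, S_m$ directly from the size classes $\strs{\tau}^{=m}, \strs{\tau}^{\geq m}, \strs{\tau}^{\leq m}$, and then to read off the three implications about $\Rrightarrow$ as immediate consequences of the definition of that relation. The guiding idea is that the cardinality of a structure is precisely the kind of property captured by those classes, which we have just noted are bijection-, injection- and surjection-closed respectively; the real work is only in packaging them as $n$-ary generalised quantifiers defined by a positive interpretation.

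First I would fix the target signature $\tau = \{R\}$ with $R$ of arity $n$, so that any quantifier $Q_K$ with $K \subseteq \strs{\tau}$ has arity $n$, as required for membership in $\mathcal{Q}^b_n$, $\mathcal{Q}^i_n$ and $\mathcal{Q}^s_n$. I would then use the parameter-free interpretation $I = (\phi_R)$ whose single component $\phi_R(x_1, \ldots, x_n) \defeq \bigvee_{\emptyset}$ is the empty disjunction; this is a positive formula of $\posqfk{k}$ and, since $n \leq k$, it uses only the $n$ available bound variables and no parameters. For every $\sigma$-structure $\str{A}$ the interpreted structure $I(\str{A})$ has universe $A$, so $\abs{I(\str{A})} = \abs{A}$, irrespective of how $R$ is interpreted. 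Taking
\[
B_m \defeq Q_{\strs{\tau}^{=m}} x_1, \ldots, x_n\ . \ (\phi_R), \quad I_m \defeq Q_{\strs{\tau}^{\geq m}} x_1, \ldots, x_n\ . \ (\phi_R), \quad S_m \defeq Q_{\strs{\tau}^{\leq m}} x_1, \ldots, x_n\ . \ (\phi_R),
\]
each is a sentence of $\posqfk{k}$ extended by a single $n$-ary quantifier drawn from $\mathcal{Q}^b_n$, $\mathcal{Q}^i_n$ and $\mathcal{Q}^s_n$ respectively. By the semantics of generalised quantifiers together with the identity $\abs{I(\str{A})} = \abs{A}$, one obtains $\str{A} \models B_m \iff \abs{A} = m$, $\str{A} \models I_m \iff \abs{A} \geq m$ and $\str{A} \models S_m \iff \abs{A} \leq m$, which is the first part of the claim.

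For the second part I would simply unwind the definition of $\Rrightarrow_L$. If $\str{A} \Rrightarrow_{\posqfk{k}(\mathcal{Q}^b_n)} \str{B}$, put $m = \abs{A}$; then $\str{A} \models B_m$, hence $\str{B} \models B_m$, so $\abs{B} = m = \abs{A}$. The other two cases are identical: $\str{A} \Rrightarrow_{\posqfk{k}(\mathcal{Q}^i_n)} \str{B}$ together with $\str{A} \models I_{\abs{A}}$ forces $\str{B} \models I_{\abs{A}}$, i.e.\ $\abs{B} \geq \abs{A}$; dually $\str{A} \Rrightarrow_{\posqfk{k}(\mathcal{Q}^s_n)} \str{B}$ with $\str{A} \models S_{\abs{A}}$ gives $\abs{B} \leq \abs{A}$.

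The only genuine subtlety, which I would flag as the main obstacle, is reconciling the prescribed arity $n$ with the fact that a pure cardinality quantifier naturally wants to live over the empty signature, where arity is undefined. Padding $\tau$ with one dummy $n$-ary relation $R$ and interpreting it by the empty positive formula fixes the arity at $n$ without disturbing the universe of $I(\str{A})$, and hence without affecting which size class $I(\str{A})$ lands in. It is also worth recording the implicit hypothesis $1 \leq n \leq k$, needed so that the $n$ bound variables are actually available in a $k$-variable logic; the degenerate cases outside this range are not of interest.
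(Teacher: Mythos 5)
Your construction is correct and is exactly the argument the paper intends: the paper records just before the observation that $\strs{\sigma}^{=M}$, $\strs{\sigma}^{\geq M}$ and $\strs{\sigma}^{\leq M}$ are bijection-, injection- and surjection-closed, and the sentences $B_m, I_m, S_m$ are obtained precisely by applying the corresponding $n$-ary quantifiers to a trivial positive interpretation, with the second half following by unwinding $\Rrightarrow_L$ as you do. Your explicit handling of the arity-$n$ padding via a dummy relation interpreted by the empty disjunction is a reasonable way to make precise a detail the paper leaves implicit.
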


\subsection{Games and logics correspond}
So far we have introduced a series of games and logics which are all variations on Hella's $n$-bijection $k$-pebble game, $\game{Bij}{n}{k}$, and the corresponding logic $\infL^{k}(\mathcal{Q}_n)$. Here we show that these games and logics match up in the way that one would expect from looking at the respective refinement posets in Figures~\ref{fig:game-Hasse} and~\ref{fig:logic-Hasse}.

In order to present the proof of this in a uniform fashion, we label the corners of these cubes by three parameters $\injpar, \surjpar, \negpar \in \{0,1\}$ as indicated in Figure~\ref{fig:cube}. These
parameters signal the presence or absence of certain rules in the corresponding
games. In particular, $\injpar$ and $\surjpar$ indicate if the function provided by Duplicator
in each round is required to be injective or surjective respectively and $\negpar$
indicates if Spoiler wins when negated atoms are not preserved by the partial
map defined at the end of a round.

\begin{figure}[h]
  \centering
\[
\begin{tikzpicture}
  \node (max) at (0,2.5) {$(1,1,1)$};
  \node (a) at (-1.25,1.25) {$(1,0,1)$};
  \node (b) at (0,1.25) {$(1,1,0)$};
  \node (c) at (1.25,1.25) {$(0,1,1)$};
  \node (d) at (-1.25,0) {$(1,0,0)$};
  \node (e) at (0,0) {$(0,0,1)$};
  \node (f) at (1.25,0) {$(0,1,0)$};
  \node (min) at (0,-1.25) {$(0,0,0)$};
  \draw (min) -- (d) -- (a) -- (max) -- (b) -- (d)
  (e) -- (min) -- (f) -- (c) -- (max)
  (f) -- (b);
  \draw[preaction={draw=white, -,line width=6pt}] (a) -- (e) -- (c);
\end{tikzpicture}
\]
\caption{Cube of parameters}\label{fig:cube}
\end{figure}
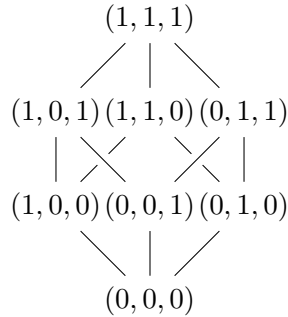

Now we define the aliases of each of the games which modify $\game{Fun}{n}{k}$ as follows, with the games defined lining up with the games defined in Section~\ref{sec:games}.

\begin{defi}
  For two $\sigma$-structures $\str{A}$ and $\str{B}$, the game $(\injpar, \surjpar, \negpar)$-$\game{Fun}{n}{k}(\str{A}, \str{B})$ is played by Spoiler and Duplicator in the same fashion as the game $\game{Fun}{n}{k}(\str{A}, \str{B})$ with the following additional rules:
  \begin{enumerate}
  \item When Duplicator provides a function $f: A \rightarrow B$ at the beginning of a round, $f$ is required to be
  \begin{itemize}
    \item injective if $\injpar = 1$ and
    \item surjective if $\surjpar = 1$.
  \end{itemize}
\item If $\negpar = 1$, Spoiler wins at move $j$ if the partial map taking $\pi^a_j(i)$ to $\pi^b_j(i)$ fails to preserve negated atoms as well as atoms.
  \end{enumerate}
\end{defi}

Similarly, we define parameterised aliases for the logics introduced in Section~\ref{sec:logics}. To lighten our notational burden, we use $\eplnkinftyomega{n,k}$ to denote the logic $\posqfk{k}(\mathcal{Q}^{\text{h}}_n)$ throughout this section.

\begin{defi}
We define $\eplnkinftyomega{n,k}_{\mathbf{x}}$ to be the logic $\eplnkinftyomega{n,k}$ extended by
\begin{enumerate}
 \item all $n$-ary generalised quantifiers closed by all homomorphisms which are:
 \begin{itemize}
   \item injective, if $\injpar =1$; and
   \item surjective, if $\surjpar = 1$
 \end{itemize}
 \item if $\negpar = 1$, negation on atoms.
\end{enumerate}
\end{defi}
For example, $\eplnkinftyomega{n,k}_{001}$ extends $\eplnkinftyomega{n,k}$ with negation on atoms but contains no additional quantifiers as all $n$-ary quantifiers closed under homomorphisms are already in $\eplnkinftyomega{n,k}$.  On the other hand, $\eplnkinftyomega{n,k}_{110}$ does not allow negation on atoms but allows all quantifiers that are closed under bijective homomorphisms.

Now to prove the desired correspondence between $\mathbf{x}$-$\game{Fun}{n}{k}$ and $\eplnkinftyomega{n,k}_{\mathbf{x}}$, we adapt a proof from Hella~\cite{Hella1996} to work for this parameterised set of games.

For this we need the language of \emph{forth} systems which are used
as an explicit representation of a Duplicator winning strategy\footnote{These are called ``$k$-variable $n$-bijective back-and-forth sets'' in Hella's paper, where the ``back'' condition is implicit in the use of bijections. We drop that in the present generalisation.}.  We provide the appropriate generalised definition here:
\begin{defi}
Let $\textbf{Part}^{k}_{\negpar}(\str{A}, \str{B})$ be the set of all partial functions $A \rightharpoonup B$ which preserve atoms (i.e.\ are partial homomorphisms) and, if $\negpar = 1$ additionally preserve negated atoms.

A set $\mathcal{S}\subset \textbf{Part}^{k}_{\negpar}(\str{A},\str{B})$ is a forth system for the game $(\injpar, \surjpar, \negpar)\text{-}\game{Fun}{n}{k}(\str{A}, \str{B})$ if it satisfies the following properties:
\begin{itemize}
    \item \textbf{Downwards closure}: If $f \in \mathcal{S}$ then $g \in \mathcal{S}$ for any $g \subset f$
    \item \textbf{($\injpar, \surjpar$)-forth property} For any $f$ in $\mathcal{S}$ s.t.\ $\abs{f} \leq k$, there exists a function $\phi_f : A \rightarrow B$, which is injective if $\injpar = 1$ and surjective if $\surjpar = 1$ s.t.\ for every $C \subset \text{dom}(f), D \subset A$ with $|D| \leq n$ and $\abs{C \cup D} \leq k$
    we have $(f \downharpoonright C) \cup (\phi_f \downharpoonright D) \in \mathcal{S}$.
\end{itemize}

\end{defi}
Note that in the ``forth'' condition, there is a single function $\phi_f$ that yields the property for any choice of set $C$.  This captures the condition in the game where Duplicator has to play this function before Spoiler chooses which pebbles to move (cf.~Remark~\ref{rem:games}).
As this definition is essentially an unravelling of a Duplicator winning strategy for the game $(\injpar, \surjpar, \negpar)\text{-}\game{Fun}{n}{k}(\str{A}, \str{B})$ we get the following.

\begin{lem}
  There is a forth system $\mathcal{S}$ containing the empty partial homomorphism $\emptyset$ if, and only if, Duplicator has a winning strategy for the game $(\injpar, \surjpar, \negpar)\text{-}\game{Fun}{n}{k}(\str{A}, \str{B})$
\end{lem}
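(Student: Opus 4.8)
The plan is to prove the two implications separately, viewing a back-and-forth system as an explicit transcript of a Duplicator strategy so that each clause of the definition matches one obligation Duplicator faces in a round. Throughout I identify a game position $(\pi^a_j, \pi^b_j)$ with the partial map $f_j = \{(\pi^a_j(i), \pi^b_j(i)) \mid i \in \mathrm{dom}(\pi^a_j)\}$ from $A$ to $B$; by the round-ending test, $f_j$ lies in $\text{Part}^k_{x_n}(\str{A}, \str{B})$ precisely when Spoiler has not yet won, since $\text{Part}^k_{x_n}$ records preservation of atoms and, when $x_n = 1$, of negated atoms.

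For the implication from systems to strategies I would assume a back-and-forth system $\mathcal{S}$ with $\emptyset \in \mathcal{S}$ and have Duplicator maintain the invariant that $f_j \in \mathcal{S}$ after every round, which holds initially as the starting map is $\emptyset$. In round $j$, with $f_{j-1} \in \mathcal{S}$ and $|f_{j-1}| < k$, I would apply the $(x_i, x_s)$-forth property to obtain $\phi_{f_{j-1}}$ and let Duplicator play $h_j = \phi_{f_{j-1}}$. I would first check that this is legal: taking $C = \mathrm{dom}(f_{j-1})$ and $D$ a singleton inside $\mathrm{dom}(f_{j-1})$ forces $\phi_{f_{j-1}}$ to extend $f_{j-1}$ (as the members of $\text{Part}^k_{x_n}$ are single-valued), so the coherence requirement $h_j(\pi^a_{j-1}(i)) = \pi^b_{j-1}(i)$ holds, while the injectivity/surjectivity clauses of the forth property give exactly the constraints the $(x_i,x_s,x_n)$-game imposes on $h_j$. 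After Spoiler repositions at most $n$ pebbles, the new map has the form $(f_{j-1} \downharpoonright C) \cup (\phi_{f_{j-1}} \downharpoonright D)$ with $|D| \leq n$, which is again in $\mathcal{S} \subseteq \text{Part}^k_{x_n}$ by the forth property; hence Spoiler has not won and the invariant is preserved, so Duplicator plays forever.

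For the converse I would fix a Duplicator winning strategy and set $\mathcal{S} = \{ f \in \text{Part}^k_{x_n}(\str{A}, \str{B}) \mid \text{Duplicator wins from the position } f \}$. Then $\emptyset \in \mathcal{S}$ by hypothesis. The forth property is almost immediate: for $f \in \mathcal{S}$ with $|f| < k$, the first function the winning strategy prescribes from $f$ serves as $\phi_f$ (its injectivity or surjectivity being guaranteed by the game rules), and each continuation $(f \downharpoonright C) \cup (\phi_f \downharpoonright D)$ with $|D| \leq n$ is a position Spoiler can reach in one round while Duplicator follows the strategy, so Duplicator still wins from it and it lies in $\mathcal{S}$. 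Closure under subformula amounts to showing that winning from $f$ entails winning from any $g \subseteq f$, which I would obtain from a monotonicity argument: from the smaller position Duplicator has strictly more free pebbles and can shadow a play of the $f$-game in which the missing pebbles are kept fixed at their $f$-values, so that the $g$-position remains a restriction of a non-losing $f$-position throughout.

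I expect the main obstacle to be the bookkeeping that translates between pebble indices and the partial-map abstraction, rather than any deep idea. Two points need care. First, the boundary case $|f_{j-1}| = k$ in the forward direction, where the forth property supplies no function directly; here one uses that Spoiler must vacate a pebble in order to move, so the relevant kept map has size $< k$ and the forth property applies after this reduction. Second, one must confirm which pairs $(C, D)$ are genuinely realisable by a single legal Spoiler move (dropping an element from the domain requires moving every pebble resting on it) and check that the monotonicity step underlying closure under subformula is sound. These size-bound and monotonicity checks are where the real work lies.
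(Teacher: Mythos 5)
Your proof is correct and follows essentially the same route as the paper's: the forward direction maintains the invariant that the pebbled position stays in $\mathcal{S}$ by playing the function supplied by the forth property, and the converse unravels a winning strategy into a back-and-forth system (the paper takes the set of positions reachable under the strategy, whereas you take all winning positions, a cosmetic difference). Your version is in fact more careful than the paper's two-line argument, explicitly flagging the $|f|=k$ boundary case and the closure-under-restriction check that the paper silently elides.
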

\begin{proof}
  For the forward direction we note that if the pebbled position at the beginning of some round of $(\injpar, \surjpar, \negpar)\text{-}\game{Fun}{n}{k}(\str{A}, \str{B})$ describes a partial homomorphism $f \in \mathcal{S}$ then the forth condition on $\mathcal{S}$ guarantees that if Duplicator plays $\phi_f : A \rightarrow B$ in this round then the pebbled position at the end of the round will be $f^{\prime} \in \mathcal{S}$. As $\mathcal{S} \subset \textbf{Part}^{k}_{\negpar}(\str{A},\str{B})$ we know that such a move will not result in Duplicator losing the game. So if $\emptyset \in \mathcal{S}$, Duplicator can use $\mathcal{S}$ to play indefinitely without losing.\\
  For the other direction, we note that the set of possible positions when playing the game $(\injpar, \surjpar, \negpar)\text{-}\game{Fun}{n}{k}(\str{A}, \str{B})$ according to some winning Duplicator strategy $\Phi$ will form a forth system $\mathcal{S}_\Phi$.
\end{proof}

  Following Hella, we define the \textit{canonical} forth system for a game as follows:

\begin{defi}
  The canonical forth system for $(\injpar, \surjpar, \negpar)\text{-}\game{Fun}{n}{k}(\str{A}, \str{B})$ is denoted $I^{n,k}_{\mathbf{x}}(\str{A}, \str{B})$ and is given by the intersection $\bigcap_m I^{n,k,m}_{\mathbf{x}}(\str{A}, \str{B})$, whose conjuncts are defined inductively as follows:
  \begin{enumerate}
  \item  $I^{n,k,0}_{\mathbf{x}}(\str{A}, \str{B}) \defeq \textbf{Part}^{k}_{\negpar}(\str{A},\str{B})$.
  \item $I^{n,k,m+1}_{\mathbf{x}}(\str{A}, \str{B})$ is the set of $\rho \in I^{n,k,m}_{\mathbf{x}}(\str{A}, \str{B})$ such that $\rho$ satisfies the $(\injpar, \surjpar)$-forth condition with respect to the set $I^{n,k,m}_{\mathbf{x}}(\str{A}, \str{B})$
  \end{enumerate}
\end{defi}

It is not difficult to see that for any forth system $\mathcal{S}$ for $\mathbf{x}\text{-}\game{Fun}{n}{k}(\str{A}, \str{B})$ we have $\mathcal{S} \subset I^{n,k}_{\mathbf{x}}(\str{A}, \str{B})$. This means that there is a winning strategy for Duplicator  in the game $\mathbf{x}\text{-}\game{Fun}{n}{k}(\str{A}, \str{B})$ if, and only if, $I^{n,k}_{\mathbf{x}}(\str{A}, \str{B})$ is not empty.

To complete the vocabulary needed to emulate Hella's proof in this setting we introduce the following generalisations of Hella's definitions.

\begin{defi}
  For any $\rho \in \textbf{Part}^{k}_{\negpar}(\str{A}, \str{B})$ and $\phi(\tup{y})$ a formula in some logic, we say that $\rho$ \emph{preserves the validity} of $\phi(\tup{y})$ if for any $\tup{a} \subset \text{dom}(\rho)$ of the same length as $\tup{y}$ we have that $\str{A}, \tup{a} \models \phi(\tup{y}) \implies \str{B}, \rho(\tup{a}) \models \phi(\tup{y})$.
  Denote by $J^{n,k}_{\mathbf{x}}(\str{A}, \str{B})$ the set of all $\rho \in \textbf{Part}^{k}_{\negpar}(\str{A}, \str{B})$ which preserve the validity of all $\eplnkinftyomega{n,k}_{\tup{x}}$ formulas.
  Let $\epFO^{n,k}_{\mathbf{x}}$ denote the fragment of $\eplnkinftyomega{n,k}_{\tup{x}}$ with only finitary conjunctions and disjunctions.
  Denote by $K^{n,k}_{\mathbf{x}}(\str{A}, \str{B})$ the set of all $\rho \in \textbf{Part}^{k}_{\negpar}(\str{A}, \str{B})$ which preserve the validity of all $\epFO^{n,k}_{\mathbf{x}}$ formulas.
\end{defi}

Now, we directly modify Hella's argument to prove the following:
\begin{lem}\label{lem:hella}
  For $\str{A}, \str{B}$ finite relational structures, \[I^{n,k}_{\mathbf{x}}(\str{A}, \str{B}) = J^{n,k}_{\mathbf{x}}(\str{A}, \str{B}) = K^{n,k}_{\mathbf{x}}(\str{A}, \str{B})\]
\end{lem}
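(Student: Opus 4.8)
The plan is to prove the two equalities as a cycle of inclusions $I^{n,k}_{\mathbf x} \subseteq J^{n,k}_{\mathbf x} \subseteq K^{n,k}_{\mathbf x} \subseteq I^{n,k}_{\mathbf x}$. The inclusion $J^{n,k}_{\mathbf x} \subseteq K^{n,k}_{\mathbf x}$ is immediate, since $\epFO^{n,k}_{\mathbf x}$ is by definition the finitary fragment of $\eplnkinftyomega{n,k}_{\mathbf x}$: a map preserving the validity of every formula of the richer logic certainly preserves every formula of the poorer one.

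For $I^{n,k}_{\mathbf x} \subseteq J^{n,k}_{\mathbf x}$ I would fix $\rho \in I^{n,k}_{\mathbf x}$ and show by induction on the build-up of a formula $\phi(\tup y) \in \eplnkinftyomega{n,k}_{\mathbf x}$ that $\rho$ preserves its validity. Atoms, and negated atoms when $x_n = 1$, are covered because $\rho \in \textbf{Part}^k_{x_n}$; (possibly infinitary) conjunctions and disjunctions are routine. The quantifier step is the substance: write $\phi = \mathcal Q_W \tup x.\,(\psi_R)_{R \in \tau}$, where $W$ is closed under the class of homomorphisms selected by $\mathbf x$ and $\mathcal I = (\psi_R)_{R\in\tau}$ is its interpretation, so that $\str A, \tup a \models \phi$ means $\mathcal I(\str A, \tup a) \in W$. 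Given $\tup a \subseteq \text{dom}(\rho)$ with $\str A, \tup a \models \phi$, the $(x_i, x_s)$-forth property of $\rho$ supplies a branch function $\phi_\rho : A \to B$ that is injective if $x_i = 1$ and surjective if $x_s = 1$, with $(\rho \downharpoonright C) \cup (\phi_\rho \downharpoonright D) \in I^{n,k}_{\mathbf x}$ for all admissible $C, D$. Applying the induction hypothesis to these restricted maps and the subformulas $\psi_R$ shows precisely that $\phi_\rho$ (together with $\rho$ on the parameters) is an $\mathbf x$-homomorphism from $\mathcal I(\str A, \tup a)$ to $\mathcal I(\str B, \rho(\tup a))$, and closure of $W$ then yields $\str B, \rho(\tup a) \models \phi$.

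The bulk of the work is $K^{n,k}_{\mathbf x} \subseteq I^{n,k}_{\mathbf x}$, which I would obtain by proving $K^{n,k}_{\mathbf x} \subseteq I^{n,k,m}_{\mathbf x}$ for every $m$. The tool is a family of characteristic formulas: for the finite structure $\str A$, each $m$, and each tuple $\tup a$ pebbling at most $k$ elements, I construct $\chi^m_{\tup a} \in \epFO^{n,k}_{\mathbf x}$ with $\str A, \tup a \models \chi^m_{\tup a}$ and with the property that, for all $\str B$ and $\tup b$, $\str B, \tup b \models \chi^m_{\tup a}$ holds exactly when the partial map $\tup a \mapsto \tup b$ lies in $I^{n,k,m}_{\mathbf x}(\str A, \str B)$. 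At level $0$ this is the finite conjunction of the atoms (and, if $x_n = 1$, negated atoms) realised by $\tup a$, matching $I^{n,k,0}_{\mathbf x} = \textbf{Part}^k_{x_n}$. At level $m+1$ the formula is assembled from a single $n$-ary generalised quantifier $\mathcal Q_W$ whose defining class $W$ encodes the forth condition with respect to level $m$: namely that there is a response function, injective if $x_i = 1$ and surjective if $x_s = 1$, every restriction of which to $\le n$ fresh elements, together with a sub-restriction of the current pebbles, realises one of the level-$m$ characteristic types occurring in $\str A$. Because $\str A$ is finite there are only finitely many such types, so $\chi^{m+1}_{\tup a}$ is genuinely finitary, hence a formula of $\epFO^{n,k}_{\mathbf x}$. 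Granting this, any $\rho \in K^{n,k}_{\mathbf x}$ with domain enumerated by $\tup a$ preserves $\chi^m_{\tup a}$, so $\str B, \rho(\tup a) \models \chi^m_{\tup a}$ and hence $\rho \in I^{n,k,m}_{\mathbf x}$; as $m$ was arbitrary, $\rho \in I^{n,k}_{\mathbf x}$. Closing the cycle gives $I^{n,k}_{\mathbf x} = J^{n,k}_{\mathbf x} = K^{n,k}_{\mathbf x}$.

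The step I expect to be hardest is the inductive construction of $\chi^{m+1}_{\tup a}$: compressing the statement ``Duplicator can produce an $\mathbf x$-admissible function all of whose $\le n$-element extensions remain in $I^{n,k,m}_{\mathbf x}$'' into the interpretation $(\psi_R)_{R\in\tau}$ of one $n$-ary quantifier, and then verifying that the associated class $W$ has exactly the closure property --- homomorphism-closed, and additionally injection- or surjection-closed according to $x_i$ and $x_s$ --- that places $\mathcal Q_W$ inside $\eplnkinftyomega{n,k}_{\mathbf x}$ and not in a strictly stronger logic. This closure must be checked directly from the definition of $\mathbf x$-homomorphism, matching the function constraint of the game against the constraint on the quantifier class; getting this correspondence exactly right for all three coordinates of $\mathbf x$ simultaneously is the delicate point.
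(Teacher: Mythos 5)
Your proposal is correct and shares the paper's skeleton: the same cycle $I^{n,k}_{\mathbf{x}} \subseteq J^{n,k}_{\mathbf{x}} \subseteq K^{n,k}_{\mathbf{x}} \subseteq I^{n,k}_{\mathbf{x}}$, with the first inclusion handled by the same structural induction (the branch function supplied by the $(x_i,x_s)$-forth property is an $\mathbf{x}$-homomorphism between the interpreted structures, and closure of the quantifier class does the rest). Where you genuinely diverge is the third inclusion. The paper argues by contradiction: assuming $p \in K^{n,k}_{\mathbf{x}}$ but $p \notin I^{n,k,m}_{\mathbf{x}}$, it picks for each admissible $f : A \to B$ one tuple on which $f$ fails and one formula $\psi_f$ witnessing that failure, assembles the single structure $\str{A}_p = \langle A, (\psi_f(\tup{a}_f,\cdot))_f\rangle$ to which no admissible $f$ is a homomorphism into its $\str{B}$-counterpart, and takes $E$ to be the $\mathbf{x}$-homomorphism closure of $\str{A}_p$; the one formula $\mathcal{Q}_E\tup{z}.(\psi_f)_f$ then contradicts $p \in K^{n,k}_{\mathbf{x}}$. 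You instead build a uniform family of Hintikka-style characteristic formulas $\chi^m_{\tup{a}}$ whose satisfaction over $\str{B}$ is equivalent to membership in $I^{n,k,m}_{\mathbf{x}}$. The crucial device is the same in both cases --- an $n$-ary quantifier whose class is the $\mathbf{x}$-homomorphism closure of a finite template structure on the universe $A$, with the closure property verified coordinate-by-coordinate against $(x_i,x_s)$ --- and finiteness of $\str{A}$ keeps everything inside $\epFO^{n,k}_{\mathbf{x}}$ in both versions. One point of care in your route that the paper's contradiction argument sidesteps: your class $W$ cannot literally be ``there exists a response function $A \to B$ \ldots'', since a Lindstr\"om quantifier must be an isomorphism-closed class of $\tau$-structures with no reference to $A$ or $B$; you must realise it as the class of $\tau$-structures admitting an admissible homomorphism from a fixed template encoding the level-$m$ types of $\str{A}$, and you also owe the full biconditional for $\chi^m_{\tup{a}}$ at every level (including $\str{A}, \tup{a} \models \chi^m_{\tup{a}}$), whereas the paper only ever needs one direction of one formula. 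The trade-off is that your version yields reusable definable invariants for the equivalence classes of $I^{n,k,m}_{\mathbf{x}}$, which the paper's more economical argument does not provide.
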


\begin{proof}
  We prove the result by showing that
  \[
I^{n,k}_{\mathbf{x}}(\str{A}, \str{B}) \subset J^{n,k}_{\mathbf{x}}(\str{A}, \str{B}) \subset K^{n,k}_{\mathbf{x}}(\str{A}, \str{B}) \subset I^{n,k}_{\mathbf{x}}(\str{A}, \str{B})
  \]
  The inclusion $J^{n,k}_{\mathbf{x}}(\str{A}, \str{B}) \subset K^{n,k}_{\mathbf{x}}(\str{A}, \str{B})$ is obvious so we focus on proving
  \begin{enumerate}
    \item $I^{n,k}_{\mathbf{x}}(\str{A}, \str{B}) \subset J^{n,k}_{\mathbf{x}}(\str{A}, \str{B})$; and
    \item $K^{n,k}_{\mathbf{x}}(\str{A}, \str{B}) \subset I^{n,k}_{\mathbf{x}}(\str{A}, \str{B})$
  \end{enumerate}
\textit{Proof of 1.} Given $\rho \in I^{n,k}_{\mathbf{x}}(\str{A}, \str{B})$ we prove by structural induction on $\phi \in \eplnkinftyomega{n,k}_{\mathbf{x}}$ that $p$ preserves $\phi$. Clearly as $\rho$ is a partial homomorphism, it preserves atoms and, if $\negpar = 1$, negated atoms. The inductive cases for $\vee$ and $\wedge$ are easy so we focus on the cases where
\[
    \phi(\mathbf{z}) = \mathbf{Q} \mathbf{y} (\psi_1(\mathbf{y}_1, \mathbf{z}_1), \dots \psi_m(\mathbf{y}_m, \mathbf{z}_m))
\]
Now $\rho \in I^{n,k}_{\mathbf{x}}(\str{A}, \str{B})$ implies the existence of a map $f: A \rightarrow B$ such that for all $C \subset \text{dom}(\rho), D \subset A$ with $|D| \leq n$ we have $(\rho \downharpoonright C) \cup (f \downharpoonright D) \in I^{n,k}_{\mathbf{x}}(\str{A}, \str{B})$, so using the induction hypothesis we have that for all $i$, and tuples $\mathbf{a}_i$ from $D$ and $\mathbf{b}_i$ from $C$,
\[ \str{A}, \mathbf{a}_i, \mathbf{b}_i \models \psi_i(\mathbf{y}_i, \mathbf{z}_i) \implies \str{B}, f\mathbf{a}_i, \rho\mathbf{b}_i \models \psi_i(\mathbf{y}_i, \mathbf{z}_i) \]

This means that $f$ is a homomorphism
\[ f: \langle A, \psi_1(\cdot,\mathbf{b}_1), \dots \psi_m(\cdot,\mathbf{b}_m)\rangle  \rightarrow \langle B, \psi_1(\cdot,\rho\mathbf{b}_1), \dots \psi_m(\cdot,\rho\mathbf{b}_m)\rangle\]

Furthermore, in the cases where $(\injpar, \surjpar) = (1, 0), (0, 1)$ or $(1,1)$ this homomorphism is injective, surjective and bijective respectively and the quantifier $\mathbf{Q}$ in general represents a query $\mathcal{K}$ which is closed by injective-homomorphism, surjective-homomorphism or bijective-homomorphism so in all of these cases
\[ \langle A, \psi_1(\cdot,\mathbf{b}_1), \dots \psi_m(\cdot,\mathbf{b}_m)\rangle \in \mathcal{K} \implies  \langle B, \psi_1(\cdot, \rho\mathbf{b}_1), \dots \psi_m(\cdot, \rho\mathbf{b}_m)\rangle \in \mathcal{K} \]

and so $\str{A}, \mathbf{b} \models \phi(\mathbf{z}) \implies \str{B}, \rho\mathbf{b} \models \phi(\mathbf{z})$ and we are done with Part 1 of the proof.

\textit{Proof of 2.} Suppose that we have $p \in K^{n,k}_{\mathbf{x}}(\str{A}, \str{B})$. We have that $p \in I^{n,k,0}_{\mathbf{x}}(\str{A}, \str{B})$ by definition, so we prove by induction that $p \in I^{n,k, m}_{\mathbf{x}}(\str{A}, \str{B})$, for all $m$. Indeed, suppose this is true for $m^{\prime} < m$ but that $p \not\in I^{n,k, m}_{\mathbf{x}}(\str{A}, \str{B})$. Then it must be the case that for every $f: A \rightarrow B$ (injective if $\injpar = 1$, surjective if $\surjpar=1$) there is some choice of tuples $\mathbf{b}_f$ from $\text{dom}(p)$ and $\mathbf{a}_f$ from $A$ with $|\mathbf{a}_f| \leq n$ and $|\mathbf{a}_f \cup \mathbf{b}_f| \leq k$  such that $f \downharpoonright \mathbf{a}_f) \cup (p \downharpoonright \mathbf{b}_f) \not\in I^{n,k, m-1}_{\mathbf{x}}(\str{A}, \str{B})$.  By induction, this means that $(f \downharpoonright \mathbf{a}_f) \cup (p \downharpoonright \mathbf{b}_f) \not\in  K^{n,k}_{\mathbf{x}}(\str{A}, \str{B})$ and so there is a  formula $\psi_f(\mathbf{y}, \mathbf{z})$ such that $\str{A}, \mathbf{a}_f, \mathbf{b}_f \models \psi_f(\mathbf{y}, \mathbf{z}) $ but $  \str{B}, p\mathbf{a}_f, f\mathbf{b}_f \not\models \psi_f(\mathbf{y}, \mathbf{z}).$

Let $F_{\mathbf{x}}$ denote the set of functions $f: A \rightarrow B$ which are injective if $\injpar=1$ and surjective if $\surjpar =1$. Recall from Observation~\ref{obs:size}, the existence of $p$ implies that $F_{\mathbf{x}}$ is non-empty.  Now we define two structures $\str{A}_p = \langle A, (\psi_f(\cdot,\mathbf{b}_f))_{F_{\mathbf{x}}} \rangle$ and $\str{B}_p = \langle B, (\psi_f(\cdot, p\mathbf{b}_f))_{F_{\mathbf{x}}} \rangle$. We have by construction that no $f \in F_{\mathbf{x}}$ is a homomorphism from $\str{A}_p \rightarrow \str{B}_p$, meaning that we can define a query $E$ with $\str{A}_p \in E$ and $\str{B}_p \notin E$ which is closed under:
\begin{itemize}
  \item all homomorphisms, if $(\injpar, \surjpar) = (0, 0)$
  \item all injective homomorphisms, if $(\injpar, \surjpar) = (1, 0)$
  \item all surjective homomorphisms, if $(\injpar, \surjpar) = (0, 1)$
  \item all bijective homomorphisms, if $(\injpar, \surjpar) = (1, 1)$
\end{itemize}
So in all cases, the quantifier $Q_E$ is allowed in $\eplnkinftyomega{n,k}_{\mathbf{x}}$.

Since each formula  $\psi_f$ is in $\epFO^{n,k}_{\mathbf{x}}$, it has at most $k$ free variables in all.  By renaming these variables, we can ensure that the variables $\mathbf{y}_f$ are all from among a fixed tuple of $n$ variables $\mathbf{y}$ which are distinct from all variables in all $\mathbf{z}_f$.  Then
\[\phi(\mathbf{y}) = \mathcal{Q}_E \mathbf{y}. (\psi_f(\mathbf{y}_f, \mathbf{z}_f))_{f \in F_{\mathbf{x}}} \]
is a formula of $\epFO^{n,k}_{\mathbf{x}}$, since each sub-formula still has at most $k$ free variables (recall Remark~\ref{rem:typeAtypeB}).  This formula is true on $(\str{A}_p, \mathbf{b})$ but false on $(\str{B}_p, p\mathbf{b})$. However, this contradicts that $p \in  K^{n,k}_{\mathbf{x}}(\str{A}, \str{B})$ and so preserves the truth of all such formulas.
\end{proof}

We conclude this section by showing the desired correspondence for the whole family of games and logics we have introduced.

\begin{thm}\label{thm:logic}
  For $\mathbf{x} \in \{0,1\}^{3}$ and all $n, k \in \nats$ the following are equivalent:
  \begin{itemize}
    \item Duplicator has a winning strategy for $\mathbf{x}$-$\game{Fun}{n}{k}(\str{A}, \str{B})$
    \item $\str{A} \Rrightarrow_{\eplnkinftyomega{n,k}_{\mathbf{x}}} \str{B}$
    \item $\str{A} \Rrightarrow_{\epFO^{n,k}_{\mathbf{x}}} \str{B}$
  \end{itemize}
\end{thm}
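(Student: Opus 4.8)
The plan is to establish the cycle of implications
\[
\text{(Duplicator wins } \mathbf{x}\text{-}\game{Fun}{n}{k}\text{)} \implies \str{A} \Rrightarrow_{\eplnkinftyomega{n,k}_{\mathbf{x}}} \str{B} \implies \str{A} \Rrightarrow_{\epFO^{n,k}_{\mathbf{x}}} \str{B} \implies \text{(Duplicator wins)},
\]
leveraging the machinery already assembled in Lemma~\ref{lem:hella}. Since that lemma identifies the three sets $I^{n,k}_{\mathbf{x}}$, $J^{n,k}_{\mathbf{x}}$ and $K^{n,k}_{\mathbf{x}}$ of partial homomorphisms, the bulk of the work is to connect these sets of partial maps to the global statements about winning strategies and logical entailment between the structures $\str{A}$ and $\str{B}$.

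First I would handle the game side. By the earlier lemma characterising back-and-forth systems, Duplicator has a winning strategy in $\mathbf{x}\text{-}\game{Fun}{n}{k}(\str{A}, \str{B})$ if and only if there is a back-and-forth system containing $\emptyset$, and by the remark following the definition of the canonical system this holds if and only if $\emptyset \in I^{n,k}_{\mathbf{x}}(\str{A}, \str{B})$. So the game-theoretic condition reduces to membership of the empty map in the canonical system. Next, for the logical conditions I would observe that $\str{A} \Rrightarrow_{\mathcal{L}} \str{B}$ for $\mathcal{L} \in \{\eplnkinftyomega{n,k}_{\mathbf{x}}, \epFO^{n,k}_{\mathbf{x}}\}$ is precisely the statement that the empty partial map $\emptyset$ preserves the validity of every sentence of $\mathcal{L}$ (a sentence has no free variables, so the only tuple to check is the empty tuple, and $\emptyset$ is in the domain vacuously). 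This is exactly the condition $\emptyset \in J^{n,k}_{\mathbf{x}}(\str{A}, \str{B})$ and $\emptyset \in K^{n,k}_{\mathbf{x}}(\str{A}, \str{B})$ respectively.

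With these three translations in place, the theorem becomes an immediate corollary of Lemma~\ref{lem:hella}: the equality $I^{n,k}_{\mathbf{x}} = J^{n,k}_{\mathbf{x}} = K^{n,k}_{\mathbf{x}}$ means that $\emptyset$ lies in one of these sets exactly when it lies in all three, and the three membership statements are respectively equivalent to Duplicator having a winning strategy, to $\str{A} \Rrightarrow_{\eplnkinftyomega{n,k}_{\mathbf{x}}} \str{B}$, and to $\str{A} \Rrightarrow_{\epFO^{n,k}_{\mathbf{x}}} \str{B}$. I would therefore present the proof as three short observations identifying each bullet point of the theorem with membership of $\emptyset$ in the corresponding set, followed by a one-line appeal to the lemma.

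The main obstacle, such as it is, lies not in the present theorem but was already discharged inside Lemma~\ref{lem:hella}; here the only subtlety to get right is the reduction of entailment of \emph{sentences} to preservation of validity by the \emph{empty} partial map. One must check that the definition of ``preserves validity'' applied to a formula with free variables $\tup{y}$ specialises correctly when $\tup{y}$ is empty, so that preservation by $\emptyset$ of all formulas (including those with free variables) is genuinely no stronger than entailment of all sentences between the two fixed structures $\str{A}$ and $\str{B}$ --- this works because any formula with free variables can be turned into a sentence by quantifying, and conversely the sets $J^{n,k}_{\mathbf{x}}$ and $K^{n,k}_{\mathbf{x}}$ are defined in terms of all formulas, so membership of $\emptyset$ in them is a priori at least as strong as sentence-entailment. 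I would spell out this point carefully to ensure the three equivalences are stated with the correct quantifier scope.
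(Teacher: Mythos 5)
Your proposal is correct and follows essentially the same route as the paper: reduce each of the three conditions to membership of the empty partial map in $I^{n,k}_{\mathbf{x}}$, $J^{n,k}_{\mathbf{x}}$ and $K^{n,k}_{\mathbf{x}}$ respectively, then invoke Lemma~\ref{lem:hella}. Your extra care about why preservation of validity by $\emptyset$ collapses to sentence-entailment (only the empty tuple lies in $\mathrm{dom}(\emptyset)$, so the condition is vacuous for formulas with free variables) is a point the paper states more tersely but relies on in exactly the same way.
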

\begin{proof}
 First note that by the definition of the canonical forth system, Duplicator wins $\mathbf{x}$-$\game{Fun}{n}{k}(\str{A}, \str{B})$ if, and only if, $\emptyset \in I^{n,k}_{\mathbf{x}}(\str{A}, \str{B})$.\\
 Furthermore, $J^{n,k}_{\mathbf{x}}(\str{A}, \str{B})$ and $K^{n,k}_{\mathbf{x}}(\str{A}, \str{B})$ are defined as the sets of partial maps $\rho$ which preserve any $\eplnkinftyomega{n,k}_\mathbf{x}$ or $\epFO^{n,k}_{\mathbf{x}}$ formulas respectively  which hold on the domain of $\rho$.  So $\emptyset \in J^{n,k}_{\mathbf{x}}(\str{A}, \str{B})$ or $K^{n,k}_{\mathbf{x}}(\str{A}, \str{B})$ if, and only if, all \textit{sentences} in these logics which are true $\str{A}$ are also true in $\str{B}$, i.e.\ $\str{A} \Rrightarrow_{\eplnkinftyomega{n,k}_{\mathbf{x}}} \str{B}$ or $\str{A} \Rrightarrow_{\epFO^{n,k}_{\mathbf{x}}} \str{B}$.
Applying the result of Lemma~\ref{lem:hella} proves the equivalence of these three.
\end{proof}

\section{The Hella Comonad and its Kleisli Category}\label{sec:comonad}

In this section, we show how to construct a game comonad $\G{n,k}$ which captures the strategies of $+\game{Fun}{n}{k}$ in the same way that $\T{k}$ captures the strategies of $\exists\game{Peb}{k}{}$. We do this using a new technique for constructing new game comonads from old based on strategy translation. We then show that different types of morphism in the Kleisli category of this new comonad correspond to Duplicator strategies for the games introduced in Section~\ref{sec:logic}.

\subsection{Translating between games} \label{sec:translate}

The pebbling comonad is obtained by defining a structure $\T{k}\str{A}$ for each $\str{A}$ whose universe consists of (non-empty) lists in $(A \times [k])^{\ast}$ which we think of as sequences of moves by Spoiler in a game $\exists\game{Peb}{k}{}(\str{A}, \str{B})$, with $\str{B}$ unspecified.  With this in mind, we call a sequence in $(A \times [k])^{\ast}$ a \emph{$k$-history} (allowing the empty sequence).  In contrast, a move in the +$\game{Fun}{n}{k}(\str{A}, \str{B})$ involves Spoiler moving up to $n$ pebbles and therefore a history of Spoiler moves is a sequence in $((A \times [k])^{\leq n})^{\ast}$.  We call such a sequence an \emph{$n,k$-history}.
With this set-up, (deterministic) strategies are given by functions \[((A \times [k])^{\ast}\times [k]) \rightarrow (A \rightarrow B) \] for $\exists\game{Peb}{k}{}(\str{A}, \str{B})$ and \[ ((A \times [k])^{\leq n})^{\ast} \rightarrow (A \rightarrow B)\] for +$\game{Fun}{n}{k}(\str{A}, \str{B})$.

A winning strategy for Duplicator  in +$\game{Fun}{n}{k}(\str{A}, \str{B})$ can always be translated into one in  $\exists\game{Peb}{k}{}(\str{A}, \str{B})$.  We aim now to establish conditions for when a translation can be made in the reverse direction.  For this, it is useful to establish some machinery.

There is a natural \emph{flattening} operation that takes $n,k$-histories to $k$-histories.  We denote the operation by $F$, so $F([s_1,s_2,\ldots,s_m]) = s_1\cdot s_2 \cdots s_m$, where $s_1,\ldots, s_m \in (A \times [k])^{\leq n}$.  Of course, the function $F$ is not injective and has no inverse.  It is worth, however, considering functions $G$ from $k$-histories to $n,k$-histories that are inverse to $F$ in the sense that $F(G(t)) = \nolinebreak t$.  One obvious such function takes a $k$-history $s_1,\ldots,s_m$ to the $n,k$-history $[[s_1],\ldots,[s_m]]$, i.e.\ the sequence of one-element sequences.  This is, in some sense, minimal in that it imposes the minimal amount of structure on $G(t)$.  We are interested in a \emph{maximal} such function.  For this, recall that the sequences in $(A\times[k])^{\leq n}$ that form the elements of an $n,k$-history have length at most $n$ and do not have a repeated index from $[k]$.  We aim to break a $k$-history $t$ into maximal such blocks.  This leads us to the following definition.
\begin{defi}
  A list $s \in (A \times [k])^{\ast}$ is called \emph{basic} if it contains fewer than or equal to $n$ pairs and the pebble indices are all distinct.

  The $n$-structure function $S_n: (A \times [k])^{\ast} \rightarrow ((A \times [k])^{\leq n})^{\ast}$ is defined recursively as follows:
 \begin{itemize}
    \item $S_n(s) = [s]$ if $s$ is basic
    \item otherwise, $S_n(s) = [a] ; S_n(t)$ where $s = a\cdot t$ such that $a$ is the largest basic prefix of $s$.
  \end{itemize}
\end{defi}
It is immediate from the definition that $F(S_n(t)) = t$.
It is useful to characterise the range of the function $S_n$, which we do through the following definition.
\begin{defi}\label{def:nk_structured}
  An $n,k$-history $t$ is \emph{structured} if whenever $s$ and $s'$ are successive elements of $t$, then either $s$ has length exactly $n$ or $s'$ begins with a pair $(a,p)$ such that $p$ occurs in $s$.
\end{defi}

It is immediate from the definitions that $S_n(s)$ is structured for all $k$-histories $s$ and that an $n,k$-history is structured if, and only if, $S_n(F(s)) = s$.

We are now ready to characterise those Duplicator winning strategies for $\exists\game{Peb}{k}{}$  that can be lifted to $+\game{Fun}{n}{k}$.  First, we define a function that lifts a position in $\exists\game{Peb}{k}{}$  that Duplicator must respond to, i.e.\ a pair $(s,p)$ where $s$ is a $k$-history and $p$ a pebble index, to a position in  $+\game{Fun}{n}{k}$, i.e.\ an $n,k$-history.

\begin{defi}
  Suppose $s$ is a $k$-history and $s'$ is the last basic list in $S_n(s)$, so $S_n(s) = t;[s']$.  Let $p\in[k]$ be a pebble index.

Define the \emph{$n$-structuring}  $\alpha_n(s,p)$ of $(s,p)$ by
\[
  \alpha_n(s,p) = \begin{cases}
   t;[s'] \text{ if  } |s'|=n \text{ or } p \text{ occurs in } s'\\
    t\phantom{;[s']} \text{ otherwise.}
  \end{cases}
\]
\end{defi}

\begin{defi}
  Say that a Duplicator strategy $\Psi: ((A \times [k])^{\ast}\times [k]) \rightarrow (A \rightarrow B)$ in $\exists\game{Peb}{k}{}$ is $n$-consistent if for all $k$-histories $s$ and $s'$ and all pebble indices $p$ and $p'$:
  \[ \alpha_n(s,p) = \alpha_n(s',p') \quad \Rightarrow \quad \Psi(s,p) = \Psi(s',p'). \]
\end{defi}

Intuitively, an $n$-consistent Duplicator strategy in the game $\exists\game{Peb}{k}{}(\str{A}, \str{B})$ is one where Duplicator plays the same function in all moves that could be part of the same Spoiler move in the game +$\game{Fun}{n}{k}(\str{A}, \str{B})$.  We are then ready to prove the main result of this subsection.

\begin{lem}\label{lem:lifting-strategy}
  Duplicator has an $n$-consistent winning strategy in $\exists\game{Peb}{k}{}(\str{A}, \str{B})$ if, and only if, it has a winning strategy in +$\game{Fun}{n}{k}(\str{A}, \str{B})$.
\end{lem}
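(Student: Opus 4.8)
The plan is to prove both implications by translating plays between the two games through the flattening $F$ and its structured section $S_n$, exploiting the slogan that an $n$-consistent strategy is precisely one that commits to a single function for all the single-pebble moves making up one super-move of $+\game{Fun}{n}{k}$. Throughout I regard a position as a partial map $A \pra B$ and use the fact that partial homomorphisms are closed under restriction, so that any sub-position of a position reached in a winning play is again a partial homomorphism.

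For the direction from $+\game{Fun}{n}{k}$ to $\exists\game{Peb}{k}{}$, suppose $\Phi$ is a Duplicator winning strategy for $+\game{Fun}{n}{k}(\str A,\str B)$ and set $\Psi(s,p) \defeq \Phi(\alpha_n(s,p))$. Since $\Psi(s,p)$ factors through $\alpha_n(s,p)$, it is $n$-consistent by definition. To see it is winning, I would fix any Spoiler play $s$ in $\exists\game{Peb}{k}{}$ and decompose it via $S_n(s) = [b_1,\dots,b_r]$ into basic blocks. A short calculation with $\alpha_n$ (using that greedy decompositions of prefixes agree with those of $s$) shows that every single-pebble move lying in block $b_j$ is answered with the constant function $\Phi([b_1,\dots,b_{j-1}])$, so the moves of $b_j$ are exactly the response Duplicator would give in round $j$ of $+\game{Fun}{n}{k}$ when Spoiler plays the (basic, hence legal) super-move $b_j$. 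The pebbled position reached after any prefix of $s$ therefore coincides with the position reached in the $\Phi$-play in which the current round plays only the sub-block of $b_j$ seen so far; as $\Phi$ wins against this legal Spoiler play, that position is a partial homomorphism, and hence so is every position of the $\Psi$-play.

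For the converse, let $\Psi$ be an $n$-consistent winning strategy for $\exists\game{Peb}{k}{}(\str A,\str B)$; writing $\Psi(s,p)=\psi(\alpha_n(s,p))$, the function $\psi$ is well defined on structured histories exactly because $\Psi$ is $n$-consistent. I would have Duplicator play $+\game{Fun}{n}{k}$ while maintaining a shadow pebble-play that follows $\Psi$ and in which every completed super-move has been padded out to a block of full length $n$. At the start of a round the shadow history $s$ is then either empty or has $S_n(s)$ ending in a full block, so by the definition of $\alpha_n$ the value $\psi(S_n(s))$ is independent of the pebble index; Duplicator commits this single function $h_j \defeq \psi(S_n(s))$. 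Whatever super-move Spoiler makes, Duplicator records it in the shadow and pads the new block up to length $n$ using pebble indices outside the super-move (possible since $n\le k$); because the previous block was full, $n$-consistency guarantees that $\Psi$ answers every move of this new block with the same $h_j$, so the shadow genuinely follows $\Psi$ and its positions are partial homomorphisms. Finally, the actual $+\game{Fun}{n}{k}$ position is obtained from the shadow position by deleting the padding pebbles not (yet) overwritten by a genuine move, i.e. it is a sub-position of a winning shadow position and so a partial homomorphism.

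The step I expect to be the crux is this converse, and specifically the single-function-per-round requirement: a priori one round of $+\game{Fun}{n}{k}$ can straddle a block boundary of the flattened play, where $\Psi$ would switch functions, and then no single committed function reproduces the $\Psi$-play. The padding device is exactly what removes this difficulty, since it forces every round to begin at a block boundary, where $n$-consistency pins down a unique function. Two routine technical points remain to be checked: that the committed $h_j$ may be taken to respect the already-pebbled elements (one simply redefines it to send each occupied element of $A$ to its current image, which only identifies pebbles and so cannot destroy the partial-homomorphism property, leaving the move legal), and that a padding pebble is harmless because it is either re-pebbled to its genuine target by a later super-move — overwriting the pad — or else never touched again and hence projected away when restricting to the genuinely moved pebbles.
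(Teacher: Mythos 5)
Your forward direction is correct and is essentially the paper's: the paper also sets $\Psi'(s,p)\defeq\Phi(\alpha_n(s,p))$ and asserts it is winning, and your block-by-block verification supplies the detail it omits. Your diagnosis of the crux of the converse is also exactly right — a super-move of fewer than $n$ pebbles can straddle a block boundary of the flattened play. But your repair is genuinely different from the paper's, and it has a gap. The paper makes the simulated pebble history structured by inserting, between each bad pair of blocks, a two-element \emph{link} block that re-pebbles only (i) the last-moved pebble onto the element it already occupies and (ii) the pebble that the \emph{next} super-move is about to overwrite; by design the link never relocates a pebble whose position must survive into the next round. Your padding device instead appends $n-m$ extra pairs to every super-move, using ``pebble indices outside the super-move''. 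The parenthetical ``possible since $n\le k$'' only guarantees that $n-m$ indices distinct from the $m$ moved ones exist; it does not guarantee they can be chosen disjoint from the pebbles that are genuinely placed and \emph{not} moved this round, and a count shows they cannot always be: if all $k$ pebbles are genuinely on the board and Spoiler moves $m<n$ of them, every admissible padding index names a live genuine pebble. When such a collision occurs the shadow play relocates that pebble — and re-pebbling it onto its current element does not help, since $\Psi$ answers the pad with the response function of the \emph{current} history, which need not return the pebble to its current image — while the actual game leaves it untouched. From then on the actual position is no longer a restriction of any position of the $\Psi$-play, and the transfer of the partial-homomorphism property, which is the entire purpose of the shadow, fails.

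Your two ``routine technical points'' are infected by the same problem rather than being routine. The clause ``never touched again and hence projected away'' is false for a padding index that coincides with a live genuine pebble: that pebble is still on the board in the actual game and cannot be projected away. And redefining $h_j$ on occupied elements is forced by the rules but is not mere ``identification of pebbles'': if Spoiler places pebble $p_l$ on the element currently occupied by a pebble $i$ that is itself moved in the same super-move, the resulting position contains pebble $i$'s \emph{old} pair alongside the \emph{new} pairs produced by $\psi(S_n(s))$, a mixture realised by no single position of the shadow play, so the partial-homomorphism guarantee does not apply to it. To repair the argument your auxiliary moves must touch only pebbles that are either re-pebbled in place or about to be overwritten by the incoming super-move — exactly the constraint the paper's link construction is engineered to satisfy — and you should then still supply the argument (which your write-up does not contain, and which the paper only asserts) that the positions reached in the actual game agree with those of the doctored $\Psi$-play.
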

\begin{proof}
  The reverse direction is easy. Suppose first that $\Psi: ((A \times [k])^{\leq n})^{\ast} \rightarrow (A \rightarrow B)$ is a Duplicator winning strategy in +$\game{Fun}{n}{k}(\str{A}, \str{B})$.  Define the strategy $\Psi'$ in $\exists\game{Peb}{k}{}(\str{A}, \str{B})$ such that for a $k$-history $s$ and a pebble index $p\in [k]$, $\Psi'(s,p) = \Psi(\alpha_n(s,p))$.  This is easily seen to be $n$-consistent and winning.

  For the other direction we deal with the case of $n=1$ separately.

  For $n=1$, all $1,k$-histories are structured. Indeed, for any
  $k$-history $s$ and any pebble index $p$, $\alpha_1(s,p) = G(s)$. This means that
  for any $p$ and $p'$ $\alpha_1(s,p) = \alpha_1(s, p')$  and the $1$-consistent winning strategies are precisely those such that for any
  $k$-history $s$, pebble indices $p$ and $p'$ and elements $a, b$ if $a = b$ then $f([s;(a,p)]) = f([s;(b,p')])$. This is the same as saying that the branch maps $\phi^f_{s,p}$ and $\phi^f_{s,p'}$ are equal for every history $s$ and every pair of pebble indices $p$ and $p'$. We denote the common branch map at $s$ by $\phi^f_{s}$. This then gives a strategy in the game +$\game{Fun}{1}{k}(\str{A}, \str{B})$ where after every $1,k$-history $t$, Duplicator provides the function $\phi^f_{F(t)}$.

For $n \geq 2$, suppose $\Psi$ is an $n$-consistent winning strategy for Duplicator in $\exists\game{Peb}{k}{}(\str{A}, \str{B})$.  We construct from this a winning strategy $\Psi'$ for Duplicator in +$\game{Fun}{n}{k}(\str{A}, \str{B})$.  If $t$ is a structured $n,k$-history and $p$ is the last pebble index occurring in it, we can just take $\Psi'(t) = \Psi(F(t),p)$.  To extend this to unstructured $n,k$-histories, we first define the structured companion of an $n,k$-history.

  Suppose $t$ is an $n,k$-history that is not structured and let $s,s' \in (A\times [k])^{\leq n}$ be a pair of consecutive sequences witnessing this.  We call such a pair a \emph{bad} pair.  Let $(a,p)$ be the last pair occurring in $s$ and $(a',p')$ the first pair occurring in $s'$.
  Let $t'$ be the prefix of $t$ ending with $s$ and let $\kappa$ be the last element of $A$ such that $(\kappa,p')$ appears in $F(t)$ if there is any.  We now obtain a new $n,k$-history from $t$ by replacing the pair $s,s'$ by $s,\texttt{link},s'$ where
\[
\texttt{link} = \begin{cases}
                    [(a, p), (\kappa,p')] \quad \text{if defined}\\
                    [(a,p), (a,p')]  \quad \text{otherwise}.
                    \end{cases}
                  \]
                  It is clear that in this $n,k$-history, neither of the pairs $s,\texttt{link}$ or $\texttt{link},s'$ is bad, so it has one fewer bad pair than $t$. Also, this move is chosen so that responding to the moves $F(\texttt{link})$ according to $\Psi$ does not change the partial function defined by the pebbled position after responding to the moves $F(s)$. Repeating the process, we obtain a structured $n,k$-history which we call $\tilde{t}$, the structured companion of $t$.

 We can now formally define the Duplicator strategy by saying for any $n,k$-history $t$, $\Psi'(t) = \Psi(F(\tilde{t}),p)$ where $\tilde{t}$ is the structured companion of $t$ and $p$ is the last pebble index occurring in $t$. To see why $\Psi'$ is a winning strategy, we note that as responding with $\Psi$ to the $\texttt{link}$ moves does not alter the partial function defined by the pebbled position, the function defined after responding to $\tilde{t}$ according to $\Psi$ is the same as that defined after responding to $t$ according to $\Psi'$. So if there is a winning $n,k$-history $t$ for Spoiler against $\Psi'$ then $F(\tilde{t})$ is a winning $k$-history for Spoiler against $\Psi$, a contradiction.
\end{proof}

\subsection{\texorpdfstring{Lifting the comonad $\T{k}$ to $\G{n,k}$}{Lifting the comonad P\textsubscript{k} to H\textsubscript{n,k}}}

Central to Abramsky et al.'s construction of the pebbling comonad is the observation that for $I$-structures (defined in Section~\ref{sec:background}), maps in the Kleisli category $\mathcal{K}(\T{k})$ correspond to Duplicator winning strategies in $\exists\game{Peb}{k}{}(\str{A}, \str{B})$.

\begin{lem}[\cite{Abramsky2017}]\label{lem:abramsky}
For $\str{A}$ and $\str{B}$ $I$-structures over the signature $\sigma$, there is a homomorphism $\T{k}\str{A} \rightarrow \str{B}$ if, and only if, there is a (deterministic) winning strategy for Duplicator in the game $\exists\game{Peb}{k}{}(\str{A}, \str{B})$
\end{lem}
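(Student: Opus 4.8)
The plan is to set up an explicit correspondence between homomorphisms $f : \T{k}\str{A} \to \str{B}$ and deterministic Duplicator strategies in $\exists\game{Peb}{k}{}(\str{A}, \str{B})$, and then to check that $f$ preserving all relations is exactly equivalent to the associated strategy never reaching a losing (non-partial-homomorphism) position. A homomorphism $f$ assigns to each non-empty $k$-history $s = [(a_1,p_1),\ldots,(a_m,p_m)] \in (A \times [k])^{+}$ an element $f(s) \in B$; reading $s$ as the record of Spoiler's first $m$ moves, I interpret $f(s)$ as the element on which Duplicator places pebble $p_m$ in response. Conversely, a deterministic winning strategy $\Psi$ determines such an $f$ by letting $f(s)$ be Duplicator's response to the final move when the moves recorded in $s$ are played out against $\Psi$; since $\Psi$ is deterministic, this is well-defined, and the strategy recovered from a homomorphism is likewise deterministic.

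The crux is a dictionary between the relational structure on $\T{k}\str{A}$ and positions of the game. Given a $k$-history $s$, for each pebble index $p$ occurring in $s$ let $s^{(p)}$ be the prefix of $s$ ending at the last occurrence of $p$; these \emph{active prefixes} are pairwise comparable, $\epsilon_{\str{A}}(s^{(p)})$ is exactly the element carrying pebble $p$ in $\str{A}$, and $f(s^{(p)})$ is the element carrying pebble $p$ in $\str{B}$. The three clauses defining $R^{\T{k}\str{A}}$ — comparability, the condition $R^{\str{A}}(\epsilon_{\str{A}}(s_1),\ldots,\epsilon_{\str{A}}(s_r))$, and the ``no repeated index strictly between comparable prefixes'' clause — are arranged precisely so that a tuple of the form $(s^{(p_{i_1})},\ldots,s^{(p_{i_r})})$ lies in $R^{\T{k}\str{A}}$ if, and only if, the pebbled $\str{A}$-elements satisfy $R^{\str{A}}$; the third clause is exactly the assertion that each $s^{(p)}$ is the active prefix for $p$.

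With this dictionary the two directions are symmetric. For homomorphism $\Rightarrow$ strategy, I would fix $f$ and show that after any Spoiler history $s$ the induced partial map sending the $\str{A}$-element under pebble $p$ to $f(s^{(p)})$ is a partial homomorphism: any relation witnessed among the pebbled $\str{A}$-elements yields a tuple of active prefixes in $R^{\T{k}\str{A}}$, which $f$ preserves into $R^{\str{B}}$. For strategy $\Rightarrow$ homomorphism, I would take $(s_1,\ldots,s_r) \in R^{\T{k}\str{A}}$, let $s_{\max}$ be the longest (it exists since the $s_i$ form a chain), use the third clause to see that each $s_i$ is the active prefix of its final pebble at position $s_{\max}$, and then invoke that $\Psi$ wins — so the position reached by playing $s_{\max}$ against $\Psi$ is a partial homomorphism — to conclude $(f(s_1),\ldots,f(s_r)) \in R^{\str{B}}$. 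The hypothesis that $\str{A}$ and $\str{B}$ are $I$-structures enters here: treating $I$ as a relation of the signature, $I^{\T{k}\str{A}}$ relates two active prefixes exactly when their $\epsilon_{\str{A}}$-images coincide, so preservation of $I$ by $f$, together with $I$ being equality in $\str{B}$, forces the induced partial map to be well-defined, matching the game's requirement that the pebbled positions determine a genuine partial map.

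The main obstacle I anticipate is the bookkeeping of the previous two paragraphs: verifying carefully that the third clause in the definition of $R^{\T{k}\str{A}}$ matches the notion of a currently active pebble, and that preservation of relations by $f$ on exactly the tuples of active prefixes is equivalent to the round-by-round partial-homomorphism test of the game. Once this equivalence is pinned down, and the handling of equality via the $I$ relation is in place, both implications follow immediately.
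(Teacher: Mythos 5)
Your proposal is correct and follows exactly the argument the paper has in mind: the paper imports this lemma from Abramsky et al.\ and, in lieu of a proof, gives only the informal gloss that elements of $\T{k}\str{A}$ are Spoiler histories, that the relational structure on $\T{k}\str{A}$ is designed so pebbled positions preserve relations, and that the $I$-structure hypothesis ensures the pebbled positions define genuine partial maps. Your identification of the third clause in the definition of $R^{\T{k}\str{A}}$ with the ``currently active prefix'' condition, and your use of the $I$ relation for well-definedness, are precisely the right fleshing-out of that sketch.
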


The relation to strategies is clear in the context of elements $s \in \T{k}\str{A}$ representing histories of Spoiler moves up to and including the current move in the $\exists\game{Peb}{k}{}(\str{A},\str{B})$. The relational structure given to this set by Abramsky, Dawar and Wang ensures that pebbled positions preserve relations in $\sigma$, while the caveat here about $I$-structures is a technicality to ensure that the pebbled positions when ``playing'' according to a map $f$ all define partial homomorphisms, in particular they give well defined partial maps from $A$ to $B$.

As we saw in Lemma~\ref{lem:lifting-strategy} a Duplicator winning strategy in +$\game{Fun}{n}{k}(\str{A}, \str{B})$ is given by an $n$-consistent strategy in $\exists\game{Peb}{k}{}(\str{A}, \str{B})$.  The $n$-consistency condition can be seen as saying that the corresponding map $f: \T{k}\str{A} \ra \str{B}$ must, on certain ``equivalent'' elements of $\T{k}\str{A}$ give the same value.  We can formally define the equivalence relation as follows.

\begin{defi}
  For $n \in \mathbb{N}$ and $\mathcal{A}$ a relational structure. Define $\approx_n$ on the universe of $\T{k}\str{A}$ as follows:
  \[ [s;(a,i)]\approx_{n} [t;(b,j)] \iff a = b \textbf{ and } \alpha_n((s,i)) = \alpha_n((t,j)) \]
\end{defi}
In general, for any structured $n,k$-history $t$, we write $[t | a]$ to denote the $\approx_n$-equivalence class of an element $[s;(a,i)] \in \T{k}\str{A}$ with $\alpha_n(s,i) = t$.

This allows us to define the main construction of this section as a quotient of the relational structure $\T{k}\str{A}$.  Note that the relation $\approx_n$ is not a congruence of this structure, so there is not a canonical quotient.  Indeed, given an arbitrary equivalence relation $\sim$ over a relational structure $\str{M}$, there are two standard ways to define relations in a quotient $\str{M}/{\sim}$.  We could say that a tuple $(c_1, \dots c_r)$ of equivalence classes is in a relation $R^{\str{M}/{\sim}}$ if, and only if, \emph{every} choice of representatives is in $R^{\str{M}}$ or if \emph{some} choice of representatives is in $R^{\str{M}}$.  The latter definition has the advantage that the quotient map from $\str{M}$ to $\str{M}/{\sim}$ is a homomorphism and it is this definition that we assume for the rest of the paper. From this definition we also see that for any homomorphism $f: \str{A} \rightarrow \str{B}$ the map $\T{k}f/\approx_n: \T{k}\str{A}/\approx_n \rightarrow \T{k}\str{B}/\approx_n$ which sends $[[s; (a,i)]] = [\alpha_n(s,i) | a]$ to $[\alpha_n(\T{k}f (s), i) | f(a)]$ is a well-defined homomorphism.

\begin{defi}
  For $n, k \in \mathbb{N}$, $k \geq n$ and $\sigma$ a relational signature, we define the functor $\G{n,k}: \mathcal{R}(\sigma) \rightarrow \mathcal{R}(\sigma)$ by:
  \begin{itemize}
    \item \textbf{On objects} $\G{n,k}\str{A} \defeq \T{k}\str{A}/{\approx_n}$.
    \item \textbf{On morphisms} $\G{n,k}f \defeq \T{k}f/{\approx_n}$.
  \end{itemize}
\end{defi}

Writing $q_n:  \T{k}\str{A} \rightarrow \G{n,k}\str{A}$ for the quotient map enables us to establish the following useful property.
\begin{obs} \label{obs:hom_fact}
\[
f: \G{n,k}\str{A} \rightarrow \str{B} \text{ is a homomorphism} \iff f \circ q_n: \T{k}\str{A} \rightarrow \str{B} \text{ is a homomorphism}
\]
\end{obs}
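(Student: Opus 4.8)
The plan is to prove the two directions of the biconditional separately, with essentially all of the content residing in the backward implication. The forward direction is almost free: the discussion immediately preceding the statement establishes that, under the ``some choice of representatives'' convention for relations in the quotient, the quotient map $q_n: \T{k}\str{A} \rightarrow \G{n,k}\str{A}$ is itself a homomorphism. Hence if $f: \G{n,k}\str{A} \rightarrow \str{B}$ is a homomorphism, then $f \circ q_n$ is a composite of homomorphisms and so is again a homomorphism.

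For the backward direction I would argue directly from the definition of the relations on $\G{n,k}\str{A} = \T{k}\str{A}/{\approx_n}$. Suppose $f \circ q_n$ is a homomorphism and fix a relation symbol $R \in \sigma$ of arity $r$ together with a tuple of $\approx_n$-classes $(c_1,\ldots,c_r)$ for which $R^{\G{n,k}\str{A}}(c_1,\ldots,c_r)$ holds. By the ``some representatives'' definition of the quotient relation, there exist elements $m_1,\ldots,m_r \in \T{k}\str{A}$ with $q_n(m_i) = c_i$ for each $i$ and $R^{\T{k}\str{A}}(m_1,\ldots,m_r)$. Applying the hypothesis that $f \circ q_n$ preserves $R$ yields $R^{\str{B}}\big((f\circ q_n)(m_1),\ldots,(f\circ q_n)(m_r)\big)$, and since $(f\circ q_n)(m_i) = f(q_n(m_i)) = f(c_i)$ for each $i$, this is exactly $R^{\str{B}}(f(c_1),\ldots,f(c_r))$. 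As $R$ and the tuple were arbitrary, $f$ preserves all relations and is therefore a homomorphism.

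There is no genuine obstacle to overcome; the observation is a formal consequence of the quotient construction. The one point that deserves emphasis is that the argument hinges entirely on the choice made just before the statement to define $R^{\G{n,k}\str{A}}$ existentially (``some choice of representatives'') rather than universally. With the existential convention, a witnessing tuple of representatives in $\T{k}\str{A}$ is guaranteed and can be transported along $f \circ q_n$; had the relations instead been defined so that \emph{every} choice of representatives must lie in $R^{\T{k}\str{A}}$, the backward implication would fail, since a relation holding in the quotient would no longer hand us a single witnessing tuple upstairs. This is precisely why that convention was fixed, and the observation records its payoff: homomorphisms out of the quotient $\G{n,k}\str{A}$ are in bijection with those homomorphisms out of $\T{k}\str{A}$ that factor through $q_n$.
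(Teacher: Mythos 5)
Your proof is correct and follows exactly the route the paper intends: the observation is stated without proof as an immediate consequence of the ``some choice of representatives'' convention for relations on the quotient, and your argument simply spells out that consequence in both directions. The point you emphasise --- that the existential convention is what makes $q_n$ a homomorphism and hands you a witnessing tuple of representatives for the backward implication --- is precisely the reason the paper fixes that convention in the preceding paragraph.
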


Combining this with Lemma~\ref{lem:lifting-strategy}, we have the appropriate generalisation of Lemma~\ref{lem:abramsky}.

\begin{lem}
  For $I$-structures $\str{A}$ and $\str{B}$, there is a homomorphism $f: \G{n,k}\str{A} \rightarrow \str{B}$ if, and only if, there is a winning strategy for Duplicator in the game +$\game{Fun}{n}{k}$
\end{lem}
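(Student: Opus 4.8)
The plan is to derive the statement by chaining the two correspondences already in hand — Abramsky et al.'s Lemma~\ref{lem:abramsky} and the strategy-lifting Lemma~\ref{lem:lifting-strategy} — with Observation~\ref{obs:hom_fact} serving as the bridge that relates homomorphisms out of the quotient $\G{n,k}\str{A}$ to homomorphisms out of $\T{k}\str{A}$.

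First I would use Observation~\ref{obs:hom_fact} to reduce the existence of a homomorphism $f: \G{n,k}\str{A} \ra \str{B}$ to the existence of a homomorphism $g: \T{k}\str{A} \ra \str{B}$ that factors through the quotient map $q_n$. Since $q_n$ is surjective, a homomorphism $f$ out of the quotient yields the homomorphism $f \circ q_n$, which is constant on $\approx_n$-classes; conversely a homomorphism $g: \T{k}\str{A} \ra \str{B}$ that is constant on $\approx_n$-classes descends set-theoretically to a map $\bar{g}$ with $g = \bar{g}\circ q_n$, and the observation guarantees $\bar{g}$ is itself a homomorphism. Thus homomorphisms $\G{n,k}\str{A}\ra\str{B}$ correspond exactly to those homomorphisms $\T{k}\str{A}\ra\str{B}$ that are constant on $\approx_n$-equivalence classes.

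Next I would recall the explicit form of the correspondence underlying Lemma~\ref{lem:abramsky}: a function $g:\T{k}\str{A}\ra\str{B}$ corresponds to the Duplicator strategy $\Psi$ in $\exists\game{Peb}{k}{}(\str{A},\str{B})$ given by $\Psi(s,p)(a) = g([s;(a,p)])$, and for $I$-structures $g$ is a homomorphism precisely when $\Psi$ is winning. As every element of $\T{k}\str{A}$ is uniquely of the form $[s;(a,p)]$, this is a bijection between functions $\T{k}\str{A}\ra\str{B}$ and Duplicator strategies. The key step, and the one I expect to require the most care, is to verify that under this bijection $g$ is constant on $\approx_n$-classes if and only if $\Psi$ is $n$-consistent. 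Unfolding the definition of $\approx_n$, constancy of $g$ on classes holds exactly when $g([s;(a,i)]) = g([t;(a,j)])$ whenever $\alpha_n(s,i) = \alpha_n(t,j)$, that is, when $\Psi(s,i)(a) = \Psi(t,j)(a)$ for all such $s,t,i,j$ and every $a\in A$; quantifying over $a$, this says precisely that $\alpha_n(s,i)=\alpha_n(t,j)$ implies $\Psi(s,i)=\Psi(t,j)$ as functions, which is the definition of $n$-consistency. This is the heart of the argument, since it is where the structural quotient $\approx_n$ and the independently-motivated game-theoretic notion of $n$-consistency must be reconciled.

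Finally, chaining these equivalences, a homomorphism $f:\G{n,k}\str{A}\ra\str{B}$ exists if and only if there is an $n$-consistent Duplicator winning strategy in $\exists\game{Peb}{k}{}(\str{A},\str{B})$, which by Lemma~\ref{lem:lifting-strategy} holds if and only if Duplicator has a winning strategy in $+\game{Fun}{n}{k}(\str{A},\str{B})$, completing the proof. The only points needing attention beyond the central reconciliation above are the routine surjectivity and factorisation bookkeeping for $q_n$, and the reminder that the $I$-structure hypothesis is exactly what makes the pebbled positions define well-behaved partial maps, as noted in the discussion preceding Lemma~\ref{lem:abramsky}.
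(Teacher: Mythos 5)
Your proposal is correct and follows essentially the same route as the paper: both directions reduce to the correspondence between homomorphisms $\T{k}\str{A}\ra\str{B}$ and Duplicator strategies in $\exists\game{Peb}{k}{}$, identify constancy on $\approx_n$-classes with $n$-consistency, and then invoke Lemma~\ref{lem:lifting-strategy}. Your write-up is somewhat more explicit than the paper's about the factorisation through $q_n$ and the unfolding of $\approx_n$ against the definition of $n$-consistency, but the underlying argument is identical.
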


\begin{proof}
  From right to left,  by  Lemma~\ref{lem:lifting-strategy} we have an $n$-consistent winning strategy $\Psi$ for Duplicator in  $\exists\game{Peb}{k}{}(\mathcal{A},\mathcal{B})$.  The $n$-consistency condition implies that the Duplicator  response to a Spoiler play $[s;(a,i)] \in (A\times [k])^{\ast}$ is determined by $\alpha_n((s,i))$ and $a$ only.  So the corresponding homomorphism $f_{\Psi}: \T{k}A \rightarrow B$ respects $\approx_n$ and $f_{\Psi} \circ q_n$ is a well-defined homomorphism $f: \G{n,k}A \rightarrow B$ .

  For the other direction, note that $f\circ q_n$ defines a Duplicator winning strategy for $\exists\game{Peb}{k}{}(\str{A}, \str{B})$ which is $n$-consistent.   Thus, by  Lemma~\ref{lem:lifting-strategy},  there is a winning strategy for Duplicator in +$\game{Fun}{n}{k}(\str{A}, \str{B})$.
\end{proof}

Furthermore, we can see that the quotient map $q_n$ defined above is indeed a
natural transformation between the functors $\T{k}$ and $\G{n,k}$.

\begin{lem}\label{lem:qn_nat}
  $q_n : \T{k} \Rightarrow \G{n,k}$ is a natural transformation.
\end{lem}

\begin{proof}

Let $\str{A}$ and $\str{B}$ be relational structures over the same signature and $f: \str{A} \rightarrow \str{B}$
be a homomorphism. To show that $q_n$ is natural we need to establish the equality
$q_n \circ \T{k} f = \G{n,k} f \circ q_n$. Fix an element $[s; (a,i)] \in \T{k} \str{A}$.
On the right hand side of we have that $q_n([s; (a,i)]) = [\alpha_n(s,i) | a]$ and so
$\G{n,k} f \circ q_n = [\alpha_n(\T{k} f (s),i) | f(a)]$. On the left hand side,
$\T{k} f ([s;(a,i)]) = [\T{k}f (s); (f(a), i)]$ and so $q_n \circ \T{k} f ([s;(a,i)])
= [\alpha_n(\T{k} f (s),i) | f(a)]$ as required.
\end{proof}

This allows us to prove the following important lemma.

\begin{lem}
The counit $\epsilon$ and comultiplication $\delta$ for $\T{k}$ lift to well-defined natural transformations for $\G{n,k}$.

\end{lem}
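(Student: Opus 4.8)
The plan is to show that the counit $\epsilon_{\str{A}}\colon \T{k}\str{A} \to \str{A}$ and comultiplication $\delta_{\str{A}}\colon \T{k}\str{A} \to \T{k}\T{k}\str{A}$ descend through the quotient map $q_n$ to give well-defined natural transformations $\bar\epsilon_{\str{A}}\colon \G{n,k}\str{A} \to \str{A}$ and $\bar\delta_{\str{A}}\colon \G{n,k}\str{A} \to \G{n,k}\G{n,k}\str{A}$. In each case, the work splits into three parts: first that the map is well-defined on $\approx_n$-equivalence classes, second that it is a homomorphism of relational structures, and third that it is natural in $\str{A}$. The comonad laws themselves (the two triangle/associativity diagrams) then follow automatically, since each equation is the $q_n$-image of the corresponding equation already known to hold for $\T{k}$, and $q_n$ is (pointwise) surjective.

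For the counit, well-definedness is immediate: by the definition of $\approx_n$, if $[s;(a,i)] \approx_n [t;(b,j)]$ then in particular $a = b$, and since $\epsilon_{\str{A}}$ picks out the first coordinate of the last pair, it agrees on $\approx_n$-equivalent elements. Hence $\bar\epsilon_{\str{A}}([t \mid a]) := a$ is well-defined, and $\bar\epsilon_{\str{A}} \circ q_n = \epsilon_{\str{A}}$. That $\bar\epsilon_{\str{A}}$ is a homomorphism follows from Observation~\ref{obs:hom_fact}, since $\epsilon_{\str{A}} = \bar\epsilon_{\str{A}} \circ q_n$ is already a homomorphism. Naturality, $\bar\epsilon_{\str{B}} \circ \G{n,k}f = f \circ \bar\epsilon_{\str{A}}$, is checked by precomposing with the surjection $q_n$ and using naturality of $\epsilon$ for $\T{k}$ together with the identity $\G{n,k}f \circ q_n = q_n \circ \T{k}f$ noted before the definition of $\G{n,k}$.

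The comultiplication is where the real content lies, and I expect this to be the \textbf{main obstacle}. The issue is that $\delta_{\str{A}}$ must be shown to respect $\approx_n$ in a compatible way: we need that if $[s;(a,i)] \approx_n [t;(b,j)]$ in $\T{k}\str{A}$, then their images under $\delta_{\str{A}}$, followed by the quotient $q_n$ applied at the level of $\G{n,k}\G{n,k}\str{A}$, agree. Concretely, $\delta_{\str{A}}$ sends a sequence to its sequence of prefixes, and one must verify that $\approx_n$-equivalence of the inputs forces the appropriate $\approx_n$-equivalence (at both inner and outer levels) of the prefix-sequences. The key technical point is that the $n$-structuring operation $\alpha_n$ interacts correctly with prefixes: since equivalence only depends on $a$ and on $\alpha_n(s,i)$, and since $\alpha_n$ is itself determined by the largest-basic-prefix decomposition $S_n$, one shows that equivalent histories have prefix-expansions that are again equivalent block-by-block. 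I would formalize this by first establishing a lemma that $\approx_n$ is preserved under the relevant prefix operations, then defining $\bar\delta_{\str{A}}([t \mid a])$ as the $\approx_n$-class of the corresponding prefix-sequence and checking independence of representative using that lemma.

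Once $\bar\delta_{\str{A}}$ is well-defined, the homomorphism property again follows from Observation~\ref{obs:hom_fact} (applied at the target $\G{n,k}\G{n,k}\str{A}$, using that the two-level quotient map is a homomorphism and that $\delta_{\str{A}}$ is), and naturality follows as for $\epsilon$ by precomposing with $q_n$ and invoking naturality of $\delta$ for $\T{k}$. Finally, the comonad axioms for $(\G{n,k}, \bar\epsilon, \bar\delta)$ reduce to those for $(\T{k}, \epsilon, \delta)$: each commuting square, when precomposed with the surjective $q_n$, becomes the corresponding $\T{k}$-square, so the $\G{n,k}$-square commutes because $q_n$ is epic. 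Thus the structure of the proof is to do the descent carefully for $\bar\delta$ and then let everything else follow formally.
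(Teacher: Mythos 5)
Your proposal matches the paper's proof essentially step for step: the counit descends because $\approx_n$-equivalent histories end in the same element, the homomorphism property comes from factoring through $q_n$ via Observation~\ref{obs:hom_fact}, and the comultiplication descends because the $n$-structuring interacts correctly with the prefix operation --- the paper formalises your ``prefix lemma'' as the identity $\alpha_n(\delta(s),i)=\delta'(\alpha_n(s,i))$ for a blocked version $\delta'$ of $\delta$, combined with the observation that the block structure computed by $\alpha_n$ depends only on pebble indices, which $\T{k}q_n$ leaves unchanged. Naturality and the comonad axioms are likewise dispatched exactly as you propose, by precomposing with the surjective quotient maps.
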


\begin{proof}
  Suppose $[s;(a,i)] \approx_n [t;(b,j)] \in \T{k}\str{A}$. Then by the definition of $\approx_n$, we have $a=b$ and so $\epsilon_{\str{A}}([s;(a,i)] )=a=b=\epsilon_{\str{A}}([t;(b,j)])$ so we can define $\epsilon^{n,k}_{\str{A}}: \G{n,k}\str{A} \rightarrow \str{A}$ such that $\epsilon_{\str{A}} = \epsilon^{n,k}_{\str{A}} \circ q_n$. So by Observation~\ref{obs:hom_fact} this is a homomorphism for every $\str{A}$ and by Lemma~\ref{lem:qn_nat} it is natural. \\
  The argument is slightly more complicated for $\delta$. Firstly introduce $\delta^{\prime}: ((A \times [k])^{\leq n})^{\ast} \rightarrow ((\T{k}A \times [k])^{\leq n})^{\ast}$ defined such that the length of the $i^{\text{th}}$ list in $\delta^{\prime}(s)$ is the same as the length of the $i^{\text{th}}$ list in $s$ and $F(\delta^{\prime}(s)) = \delta(F(s))$. Informally, this means replacing every $a \in A$ appearing in $s$ with the prefix of $F(s)$ which runs up to (and includes) that appearance of $a$. Now it is not hard to see that for any $s \in \T{k}\str{A}$
  \[
  \alpha_n(\delta(s), i) = \delta^{\prime}(\alpha_n(s, i))
  \]
  Now, as $\delta$ is a map from $\T{k}\str{A}$ to $\T{k}\T{k}\str{A}$, to show that it ``lifts'' to being a comultiplication for $\G{n,k}$ we must show that the function \[\T{k}\str{A} \xrightarrow{\delta} \T{k}\T{k}\str{A} \xrightarrow{\T{k}q_n} \T{k}\G{n,k}\str{A} \xrightarrow{q_n} \G{n,k}\G{n,k}\str{A}\] is well-defined with respect to $\approx_n$.
  So, for any $[s;(a,i)] \approx_n [t;(b,j)]$, we prove that $\T{k}q_n \circ\delta([s;(a,i)]) \approx_n \T{k}q_n \circ\delta([t;(b,j)])$, as elements of $\T{k}\G{n,k}\str{A}$. Firstly, by definition $\delta([s;(a,i)]) = [\delta(s); ([s;(a,i)], i)]$ and so $\T{k}q_n \circ\delta([s;(a,i)]) =  [\T{k}q_n(\delta(s)); (q_n([s;(a,i)]), i)]$. We can write similar expressions for $[t;(b,j)]$. \\
  As we have that $\alpha_n(s,i) = \alpha_n(t,j)$ we use the above fact about $\delta^{\prime}$ to get that $\alpha_n(\delta(s), i) = \alpha_n(\delta(t), i)$. As $\T{k}q_n$ only changes the elements of a list leaving the pebble indices unchanged and $\alpha_n$ is based only on the pebble indices of a list, we can deduce that $\alpha_n(\T{k}q_n(\delta(s)), i) = \alpha_n(\T{k}q_n(\delta(t)), i)$. So, by the definition of $\approx_n$, $\T{k}q_n \circ\delta([s;(a,i)]) \approx_n \T{k}q_n \circ\delta([t;(a,i)])$ if $q_n([s;(a,i)]) = q_n([t;(b,j)])$, which is precisely the statement that  $[s;(a,i)] \approx_n [t;(b,j)]$.\\
  Naturality for $\delta$ follows from the naturality of $q_n$ and the naturality
  of the comultiplication of $\T{k}$.
\end{proof}

We call these lifted natural transformations $\epsilon^{n,k} : \G{n,k} \rightarrow 1$ and $\delta^{n,k}: \G{n,k} \rightarrow \G{n,k}\G{n,k}$. As $q_n \circ \T{k}q_n = \G{n,k}q_n \circ q_n$, we have that for any $t \in (\T{k})^{m}\str{A}$ the notion of ``the'' equivalence class of $t$, $\mathbf{q_n}(t) \in (\G{n,k})^{m}\str{A}$ is well-defined. So for any term $T$ built from composing $\epsilon, \delta$ and $\T{k}$ we have that the term $\tilde{T}$, obtained by replacing $\epsilon$ by $\epsilon^{n,k}$, $\delta$ with $\delta^{n,k}$ and $\T{k}$ with $\G{n,k}$ satisfies $\mathbf{q_n}(T(t)) = \tilde{T}(\mathbf{q_n}(t))$ by the above proof. Now as the counit and coassociativity laws are equations in $\epsilon$ and $\delta$ which remain true on taking the quotient we have the following result.

\begin{thm}
  $(\G{n,k}, \epsilon^{n,k}, \delta^{n,k})$ is a comonad on $\mathcal{R}(\sigma)$
\end{thm}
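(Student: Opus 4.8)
The plan is to deduce the three comonad axioms for $\G{n,k}$ directly from the fact that $(\T{k},\epsilon,\delta)$ is already a comonad, transporting each equation along the surjective quotient maps $\mathbf{q_n}$ by means of the compatibility property $\mathbf{q_n}(T(t)) = \tilde{T}(\mathbf{q_n}(t))$ recorded just before the statement. In other words, the combinatorial work has already been done in setting up $\epsilon^{n,k}$ and $\delta^{n,k}$ and the transfer identity; what remains is a purely formal descent of equations.

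First I would recall that to exhibit a comonad it suffices to verify, for every object $\str{A}$, the coassociativity law $\delta^{n,k}_{\G{n,k}\str{A}} \circ \delta^{n,k}_{\str{A}} = (\G{n,k}\delta^{n,k}_{\str{A}}) \circ \delta^{n,k}_{\str{A}}$ together with the two counit laws $\epsilon^{n,k}_{\G{n,k}\str{A}} \circ \delta^{n,k}_{\str{A}} = \mathrm{id}_{\G{n,k}\str{A}} = (\G{n,k}\epsilon^{n,k}_{\str{A}}) \circ \delta^{n,k}_{\str{A}}$. Each side of each of these equations is a composite $\tilde{T}$ of $\epsilon^{n,k}$, $\delta^{n,k}$ and $\G{n,k}$ obtained by the tilde-substitution from a corresponding composite $T$ of $\epsilon$, $\delta$ and $\T{k}$; for instance the two sides of coassociativity are the substitution-images of $\delta_{\T{k}\str{A}} \circ \delta_{\str{A}}$ and $(\T{k}\delta_{\str{A}}) \circ \delta_{\str{A}}$, which are equal because $\T{k}$ is a comonad, and the counit laws correspond likewise to $\epsilon_{\T{k}\str{A}} \circ \delta_{\str{A}} = \mathrm{id}_{\T{k}\str{A}} = (\T{k}\epsilon_{\str{A}}) \circ \delta_{\str{A}}$.

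Next I would show that equality of two $\T{k}$-terms forces equality of their tilde-images. Fix $\str{A}$ and let $T_1, T_2$ be composite $\T{k}$-terms with $T_1 = T_2$ as morphisms out of $\T{k}\str{A}$. Since $\mathbf{q_n}$ is surjective, every element of $\G{n,k}\str{A}$ has the form $\mathbf{q_n}(t)$; the compatibility identity gives $\tilde{T}_i(\mathbf{q_n}(t)) = \mathbf{q_n}(T_i(t))$, and $T_1(t) = T_2(t)$ then yields $\tilde{T}_1(\mathbf{q_n}(t)) = \tilde{T}_2(\mathbf{q_n}(t))$. As this holds for every $t$ and $\mathbf{q_n}$ is onto, $\tilde{T}_1 = \tilde{T}_2$. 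Instantiating $T_1, T_2$ by the two sides of each $\T{k}$-law delivers the corresponding $\G{n,k}$-law; for the counit laws one uses in addition that the tilde-image of $\mathrm{id}_{\T{k}\str{A}}$ is $\mathrm{id}_{\G{n,k}\str{A}}$, which is immediate since $\mathbf{q_n}(\mathrm{id}(t)) = \mathbf{q_n}(t) = \mathrm{id}(\mathbf{q_n}(t))$.

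The only real content, and hence the step I expect to be the main obstacle, is the compatibility identity at the iterated levels $(\G{n,k})^2\str{A}$ and $(\G{n,k})^3\str{A}$ in which these composites live. This rests on two facts established in the preceding material: that the higher quotient maps $\mathbf{q_n}\colon (\T{k})^m\str{A} \to (\G{n,k})^m\str{A}$ are well-defined and surjective, which follows from $q_n \circ \T{k}q_n = \G{n,k}q_n \circ q_n$ by induction on $m$; and that the tilde-substitution commutes with composition, which follows inductively from naturality of the quotient maps together with the base identities $\epsilon^{n,k} \circ q_n = \epsilon$ and $\delta^{n,k} \circ q_n = \mathbf{q_n} \circ \delta$ already verified. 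Once these are in place the comonad equations for $\G{n,k}$ follow formally, with no further analysis of $n,k$-histories required.
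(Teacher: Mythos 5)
Your proposal is correct and follows essentially the same route as the paper: the paper's argument is precisely that the counit and coassociativity laws are equations in terms built from $\epsilon$, $\delta$ and $\T{k}$, and that these descend along the surjective quotient maps $\mathbf{q_n}$ via the compatibility identity $\mathbf{q_n}(T(t)) = \tilde{T}(\mathbf{q_n}(t))$, which in turn rests on $q_n \circ \T{k}q_n = \G{n,k}q_n \circ q_n$ and the lifting lemma for $\epsilon$ and $\delta$. You have merely spelled out the formal descent of equations in more detail than the paper's one-paragraph treatment, correctly identifying the iterated-level compatibility as the only point needing care.
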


\subsection{\texorpdfstring{Classifying the morphisms of $\mathcal{K}(\G{n,k})$}{Classifying the morphisms of K(H\textsubscript{n,k})}}

In Abramsky et al.'s treatment of the Kleisli category of $\T{k}$~\cite{Abramsky2017} they classify the morphisms according to whether their \emph{branch maps} are injective, surjective or bijective.  We extend this definition to the comonad $\G{n,k}$.  This gives us a way of classifying the morphisms to match the classification of strategies given in Section~\ref{sec:logic}.

\begin{defi}
 For $f : \G{n,k}\str{A} \rightarrow \str{B}$ a Kleisli morphism of $\G{n,k}$, the \emph{branch maps} of $f$ are defined as the following collection of  functions $A \rightarrow B$, indexed by the structured $n,k$-histories $t \in ((A \times [k])^{\leq n})^{\ast}$:
 \[\phi^{f}_t(x) = f([t | x]).\]
  We say that such an $f$ is \emph{branch-bijective} (resp.\ \emph{branch-injective, -surjective}) if for every $t$
   \begin{align*}
     \phi^{f}_{t} &\text{is bijective (resp.\ injective, surjective)}
  \end{align*}
  We denote these maps by $\str{A} \bbij{n,k} (resp.\ \str{B}$ $\str{A} \binj{k} \str{B}$ and $\str{A} \bsurj{k} \str{B}$)
\end{defi}

Informally, the branch map $\phi^{g}_s$ is the response given by Duplicator in the +$\game{Fun}{n}{k}(\str{A}, \str{B})$ when playing according to the strategy represented by $g$ after Spoiler has made the series of plays in $s$.  This gives us another way of classifying the Duplicator winning strategies for the games from Section~\ref{sec:logic}.
\begin{lem}
  There is a winning strategy for Duplicator in the game +$\game{Bij}{n}{k}(\str{A}, \str{B})$ (resp.\ +$\game{Inj}{n}{k}(\str{A}, \str{B})$, +$\game{Surj}{n}{k}(\str{A}, \str{B})$) if and only if $\str{A} \bbij{n,k} \str{B}$ (resp.\ $\str{A} \binj{n,k} \str{B}$, $\str{A} \bsurj{n,k} \str{B}$).
\end{lem}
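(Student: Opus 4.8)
The plan is to read this off the earlier lemma identifying homomorphisms $f\colon\G{n,k}\str{A}\to\str{B}$ with Duplicator winning strategies in $+\game{Fun}{n}{k}(\str{A},\str{B})$, by tracking how the extra constraint on Duplicator's moves transports across that identification. The first thing I would make precise is that, under this correspondence, the branch map $\phi^{f}_{t}$ is exactly the function $A\to B$ that Duplicator supplies at the start of the round whose (structured) Spoiler history is $t$: unwinding the definitions of $\approx_n$, of $[t\mid x]$, and of $\phi^{f}_{t}(x)=f([t\mid x])$, together with the way such an $f$ arises from an $n$-consistent strategy in $\exists\game{Peb}{k}{}$ via Lemma~\ref{lem:lifting-strategy}, shows that $\phi^{f}_{t}(x)$ is precisely Duplicator's reply to Spoiler playing $x$ in that round. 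The second observation is that the games $+\game{Bij}{n}{k}$, $+\game{Inj}{n}{k}$ and $+\game{Surj}{n}{k}$ differ from $+\game{Fun}{n}{k}$ only by requiring the function Duplicator plays each round to be a bijection, an injection or a surjection, while all four share the identical winning condition, namely preservation of positive atoms. Hence the whole lemma reduces to the fact that the per-round function-type restriction on the game matches the per-branch-map restriction defining branch-bijective, branch-injective and branch-surjective morphisms.

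For the forward direction I would start from a Duplicator winning strategy in $+\game{Bij}{n}{k}(\str{A},\str{B})$, the injective and surjective cases being identical with ``bijection'' replaced by ``injection'' or ``surjection''. Since every bijection is in particular a function, this is also a winning strategy in $+\game{Fun}{n}{k}(\str{A},\str{B})$, so that lemma furnishes a homomorphism $f\colon\G{n,k}\str{A}\to\str{B}$. By the identification above each branch map $\phi^{f}_{t}$ is one of the functions Duplicator plays, and the rules of $+\game{Bij}{n}{k}$ force every such function to be a bijection; therefore every $\phi^{f}_{t}$ is bijective and $f$ is branch-bijective, i.e.\ $\str{A}\bbij{n,k}\str{B}$.

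For the converse I would begin with a homomorphism $f\colon\G{n,k}\str{A}\to\str{B}$ witnessing $\str{A}\bbij{n,k}\str{B}$, so that every $\phi^{f}_{t}$ is a bijection. That lemma turns $f$ into a winning Duplicator strategy in $+\game{Fun}{n}{k}(\str{A},\str{B})$ in which Duplicator answers history $t$ with $\phi^{f}_{t}$. Branch-bijectivity says each such answer is a bijection, so this strategy is a legal play in $+\game{Bij}{n}{k}(\str{A},\str{B})$, and since the winning condition is unchanged between the two games it remains winning. The injective and surjective variants follow by the same substitution.

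The single step that needs genuine care, and which I would spell out in full, is the identification of $\phi^{f}_{t}$ with Duplicator's round-$t$ response. Here one must check that the function $f$ induces respects the current pebbled position, that is, $h(\pi^{a}_{j-1}(i))=\pi^{b}_{j-1}(i)$ for each already-placed pebble $i$; this is forced by the comonad structure of $\G{n,k}\str{A}$ and the homomorphism property of $f$ exactly as in the lemma for $+\game{Fun}{n}{k}$. One must also check that each $\phi^{f}_{t}$ is a genuine total map $A\to B$, which is the point at which the $I$-structure hypothesis inherited from that lemma is used. Once this identification is in hand the remainder is pure bookkeeping, since the function-type restriction on Duplicator's moves then becomes literally the branch-type restriction on $f$.
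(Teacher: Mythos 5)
Your proposal is correct and follows exactly the route the paper takes: the paper's own proof is just ``Immediate from the definitions,'' and your argument is a careful unpacking of those definitions, identifying the branch maps $\phi^{f}_{t}$ with Duplicator's per-round responses and observing that the function-type restriction in the game is literally the branch-type restriction on the Kleisli morphism. The extra care you give to the structured-history indexing and the consistency with the pebbled position is exactly the bookkeeping the paper leaves implicit.
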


\begin{proof}
  Immediate from the definitions.
\end{proof}

Expanding this connection between Kleisli maps and strategies, we define the following:
\begin{defi}
  We say a a Kleisli map $f :\G{n,k}\str{A} \rightarrow \str{B}$ is \emph{strongly} branch-bijective (resp.\ \emph{strongly} branch-injective, -surjective) if the strategy for the game $+\game{Bij}{n}{k}(\str{A}, \str{B})$ (resp.\ $+\game{Inj}{n}{k}(\str{A}, \str{B}), +\game{Surj}{n}{k}(\str{A}, \str{B})$) is also a winning strategy for the game $\game{Bij}{n}{k}(\str{A}, \str{B})$ (resp.\ $\game{Inj}{n}{k}(\str{A}, \str{B}), \game{Surj}{n}{k}(\str{A}, \str{B})$) and we denote these maps by $\str{A} \sbbij{n,k} \str{B}$ (resp.\ $\str{A} \sbinj{k} \str{B}$ and $\str{A} \sbsurj{k} \str{B}$)
\end{defi}

Now we generalise a result of Abramsky, Dawar and Wang to the Kleisli category $\mathcal{K}(\G{n,k})$.

\begin{lem}\label{lem:kleisli_equiv}
  For $\str{A}, \str{B}$ finite relational structures, \[\str{A} \bfinj{n,k} \str{B} \iff \str{A} \bfsurj{n,k} \str{B} \iff \str{A} \sbbij{n,k} \str{B} \iff \str{A} \cong_{\mathcal{K}(\G{n,k})} \str{B}\]
\end{lem}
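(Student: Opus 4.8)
The plan is to split the four-way equivalence into two parts: the combinatorial equivalence $\str{A} \bfinj{n,k} \str{B} \iff \str{A} \sbbij{n,k} \str{B} \iff \str{A} \bfsurj{n,k} \str{B}$, proved by a size-plus-finiteness argument, and the categorical equivalence $\str{A} \sbbij{n,k} \str{B} \iff \str{A} \cong_{\mathcal{K}(\G{n,k})} \str{B}$, which generalises the analysis of $\mathcal{K}(\T{k})$ by Abramsky, Dawar and Wang. Throughout I would use the branch-map correspondence to read each relation as the assertion that Duplicator wins the relevant two-sided game in the appropriate directions.

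For the first part, I would first show that every one of the relations forces $\abs{A} = \abs{B}$. Unwinding the definitions, $\str{A} \bfinj{n,k} \str{B}$ says Duplicator wins both $\game{Inj}{n}{k}(\str{A}, \str{B})$ and $\game{Inj}{n}{k}(\str{B}, \str{A})$; winning the former implies winning the positive game $+\game{Inj}{n}{k}(\str{A}, \str{B})$ (the test is weaker, cf.\ Figure~\ref{fig:game-Hasse}), whose strategies by Theorem~\ref{thm:logic} witness $\str{A} \Rrightarrow_{\posqfk{k}(\mathcal{Q}^{\text{i}}_n)} \str{B}$ and hence $\abs{A} \leq \abs{B}$ by Observation~\ref{obs:size}, while the reverse win gives $\abs{B} \leq \abs{A}$. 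The surjective size bound handles $\bfsurj{n,k}$, and $\sbbij{n,k}$ forces $\abs{A} = \abs{B}$ since a bijection exists only between equinumerous structures. Now I invoke the elementary fact that a function between finite sets of equal cardinality is injective iff surjective iff bijective: once $\abs{A} = \abs{B}$, a winning Duplicator strategy in $\game{Inj}{n}{k}(\str{A}, \str{B})$ plays only bijections and, the two-sided test being identical, is already a winning strategy in $\game{Bij}{n}{k}(\str{A}, \str{B})$, giving $\bfinj{n,k} \Rightarrow \sbbij{n,k}$ and symmetrically $\bfsurj{n,k} \Rightarrow \sbbij{n,k}$. Conversely a win in $\game{Bij}{n}{k}(\str{A}, \str{B})$ plays bijections, which are in particular injections and surjections, and inverting the played bijections yields a win in $\game{Bij}{n}{k}(\str{B}, \str{A})$; this produces strongly branch-injective and -surjective maps in both directions, closing the three-way equivalence.

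For the second part I would adapt the Abramsky--Dawar--Wang argument. A Kleisli isomorphism is a pair of homomorphisms $f : \G{n,k}\str{A} \to \str{B}$ and $g : \G{n,k}\str{B} \to \str{A}$ with $g \circ_{\mathcal{K}} f = \epsilon^{n,k}_{\str{A}}$ and $f \circ_{\mathcal{K}} g = \epsilon^{n,k}_{\str{B}}$. Unwinding the comultiplication (as in the proof that $\delta^{n,k}$ is well defined) shows that the identity laws force, at each pair of matching structured histories, the branch maps $\phi^f_t$ and $\phi^g_{t'}$ to be mutually inverse bijections. Since $f$ is a homomorphism $\phi^f_t$ preserves atoms, and since $g$ is a homomorphism its inverse $\phi^g_{t'}$ preserves atoms as well; hence $\phi^f_t$ restricts to a partial isomorphism on every pebbled position, which is exactly the two-sided test of $\game{Bij}{n}{k}$. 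Thus $f$ is strongly branch-bijective and $\str{A} \sbbij{n,k} \str{B}$. Conversely, a winning strategy in $\game{Bij}{n}{k}(\str{A}, \str{B})$ yields a branch-bijective $f$, and using the inverse bijections together with the symmetry of the game produces a branch-bijective $g$ in the reverse direction; one then checks that $f$ and $g$ are mutually inverse in $\mathcal{K}(\G{n,k})$.

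The main obstacle I anticipate is this last step: verifying that the $f$ and $g$ extracted from a two-sided bijection-game strategy genuinely satisfy the Kleisli identities against the counit, rather than merely being branch-wise inverse. This requires tracking how $\delta^{n,k}$ threads a history over $\str{A}$ through $f$ into a history over $\str{B}$ and then through $g$, and confirming that the inverse bijections unwind each forward play back to its starting element; the delicate points are that histories over $\str{A}$ and over $\str{B}$ must be translated consistently along the played bijections, exactly as for $\T{k}$, and that the quotient by $\approx_n$ (mediated by Observation~\ref{obs:hom_fact}) respects this correspondence. By contrast, the first part is essentially bookkeeping once the size equality $\abs{A} = \abs{B}$ is in hand.
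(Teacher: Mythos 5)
There is a genuine gap, and it comes from a misreading of the relations on the left of the equivalence. You take $\str{A} \bfinj{n,k} \str{B}$ to mean that Duplicator wins the \emph{strong} games $\game{Inj}{n}{k}(\str{A},\str{B})$ and $\game{Inj}{n}{k}(\str{B},\str{A})$, in which the pebbled positions must be partial isomorphisms. But $\bfinj{n,k}$ is defined via branch-injective Kleisli maps in both directions, and by the preceding lemma these correspond to the \emph{positive} games $+\game{Inj}{n}{k}$ in both directions, where pebbled positions need only preserve atoms forwards. Under the correct reading, your size argument and the ``injective = surjective = bijective on equinumerous finite sets'' observation only deliver $\str{A} \bfbij{n,k} \str{B}$, i.e.\ branch-bijective maps $f:\G{n,k}\str{A}\to\str{B}$ and $g:\G{n,k}\str{B}\to\str{A}$, each preserving atoms in one direction only. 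The entire content of the lemma is the upgrade from this to $\sbbij{n,k}$, a \emph{single} strategy whose pebbled positions are partial isomorphisms, and your proof contains no argument for that step: ``the two-sided test being identical'' is exactly what is not given. The paper closes this gap by restricting the coextensions $f^{\ast}, g^{\ast}$ to the finite substructures $P^{m}_{\str{A}}\rightleftarrows P^{m}_{\str{B}}$ of $\G{n,k}\str{A}$ and $\G{n,k}\str{B}$ induced by histories of length at most $m$, and invoking the fact that a pair of opposed bijective homomorphisms between finite structures must both be isomorphisms; this is the generalisation of the Abramsky--Dawar--Wang argument that one-way strategies in both directions compose into a two-way strategy, and it cannot be bypassed.

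Your second part (the equivalence $\sbbij{n,k} \iff \str{A}\cong_{\mathcal{K}(\G{n,k})}\str{B}$) is closer to the mark: the direction from a Kleisli isomorphism to a strongly branch-bijective map via mutually inverse branch maps is essentially right, and the converse obstacle you flag is real. Note, though, that the paper again routes this through the $P^{m}$ substructures --- showing by induction that $f^{\ast}$ restricts to an isomorphism $P^{m}_{\str{A}}\to P^{m}_{\str{B}}$ for every $m$, hence is an isomorphism $\G{n,k}\str{A}\to\G{n,k}\str{B}$ --- rather than by explicitly building $g$ from inverted bijections and verifying the Kleisli identities against $\epsilon^{n,k}$ and $\delta^{n,k}$. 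Adopting that device would both discharge the obstacle you anticipate and supply the missing middle implication.
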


\begin{proof}
  As $A$ and $B$ are finite, the existence of an injection $A \rightarrow B$ implies that $\abs{A} \leq \abs{B}$. So, $\str{A} \bfinj{n,k} \str{B}$ implies that $\abs{A} = \abs{B}$ and thus any injective map between the two is also surjective and vice versa. This means the first equivalence is trivial and further both of these imply $\str{A} \bfbij{n,k} \str{B}$ \\
  For the second equivalence, we first introduce some notation. Let $P^{m}_{\str{A}}$ be the finite substructure of $\G{n,k}\str{A}$ induced on the elements $\{ [s | a] \mid s \in ((A \times [k])^{\leq n})^{\leq m} \}$. Note that for any $f: \str{A} \bbij{n,k} \str{B}$, the Kleisli completion $f^{\ast}$ restricts to a bijective homomorphism $P^{m}_{\str{A}} \rightarrow P^{m}_{\str{B}}$ for each $m$. So if $f: \G{n,k}\str{A}\rightarrow \str{B}$ and $g: \G{n,k}\str{B}\rightarrow \str{A}$ are branch-bijective, we have for each $m$ a pair of bijective homomorphisms $P^{m}_{\str{A}} \rightleftarrows P^{m}_{\str{B}}$. As these are finite structures we can deduce that these are indeed isomorphisms and so $f$ is a strategy for $\game{Bij}{n}{k}(\str{A}, \str{B})$.\\
  For the final equivalence, if $f$ witnesses $\str{A}\sbbij{n,k} \str{B}$ then we have, by induction, that $f^{\ast}$ is an isomorphism from $P^{m}_{\str{A}}$ to $P^{m}_{\str{B}}$ for each $m$. So $f^{\ast}: \G{n,k}\str{A} \rightarrow \G{n,k}\str{B}$ is an isomorphism witnessing $\str{A}\cong_{\mathcal{K}(\G{n,k})} \str{B}$. For the converse we suppose that there is an isomorphism $h^{\ast}: \G{n,k}\str{A} \rightarrow \G{n,k}\str{A}$. Then the Kleisli map $h = \epsilon_{\str{B}} \circ h^{\ast}$ is a strongly branch-bijective strategy.
\end{proof}

This lemma allows us to conclude that the isomorphisms in the category $\mathcal{K}(\G{n,k})$ correspond with equivalence of structures up to $k$ variable infinitary logic extended by all generalised quantifiers of arity at most $n$ and thus with winning strategies for Hella's $n$-bijective $k$-pebble game.

\begin{thm}
  For two $I$-structures $\str{A}$ and $\str{B}$ the following are equivalent:
  \begin{itemize}
    \item $\str{A} \cong_{\mathcal{K}(\G{n,k})} \str{B}$
    \item Duplicator has a winning strategy for $\game{Bij}{n}{k}(\str{A},\str{B})$
    \item $\str{A} \equiv_{\qfk{k}(\mathcal{Q}_n)} \str{B}$
  \end{itemize}
\end{thm}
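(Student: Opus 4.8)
The plan is to chain together the three correspondences already established, using Hella's game $\game{Bij}{n}{k}$ as the hinge between the category-theoretic statement and the logical one. The first observation I would make is that $\game{Bij}{n}{k}(\str{A}, \str{B})$ is \emph{literally} the instance $(1,1,1)$-$\game{Fun}{n}{k}(\str{A}, \str{B})$ of the parameterised function game: requiring $x_i = x_s = 1$ forces Duplicator's response $f\colon A \to B$ to be a bijection, and setting $x_n = 1$ makes the end-of-round test demand preservation of atoms \emph{and} negated atoms, i.e.\ that the pebbled position be a partial isomorphism, which is exactly Hella's losing condition for Duplicator. Since $f$ is a bijection, the fact that Spoiler formally plays in $\str{A}$ is immaterial (compose with $f^{-1}$), so this game is symmetric in $\str{A}$ and $\str{B}$.

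To relate the Kleisli isomorphism to the game, I would invoke Lemma~\ref{lem:kleisli_equiv}, which already gives $\str{A} \cong_{\mathcal{K}(\G{n,k})} \str{B} \iff \str{A} \sbbij{n,k} \str{B}$. By the definition of strong branch-bijectivity together with the lemma identifying branch-bijective Kleisli maps with winning strategies in $+\game{Bij}{n}{k}$, a strongly branch-bijective map is precisely a strategy winning both $+\game{Bij}{n}{k}$ and the full $\game{Bij}{n}{k}$; conversely a winning strategy in $\game{Bij}{n}{k}$ is \emph{a fortiori} one in $+\game{Bij}{n}{k}$ (as $\game{Bij}{n}{k}$ sits above $+\game{Bij}{n}{k}$ in Figure~\ref{fig:game-Hasse}), hence branch-bijective and in fact strongly so. This settles the equivalence of the first two bullets.

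For the logic, I would apply Theorem~\ref{thm:logic} at $\mathbf{x} = (1,1,1)$: Duplicator wins $(1,1,1)$-$\game{Fun}{n}{k}(\str{A}, \str{B})$ iff $\str{A} \Rrightarrow_{\eplnkinftyomega{n,k}_{111}} \str{B}$, where by definition $\eplnkinftyomega{n,k}_{111}$ is $\qfk{k}(\mathcal{Q}^{\text{b}}_n)$. The remaining task is to upgrade the one-sided relation $\Rrightarrow$ to the symmetric $\equiv$ appearing in the statement. Here I would use the earlier equivalence $\qfk{k}(\mathcal{Q}^{\text{b}}_n) \equiv \infL^{k}(\mathcal{Q}_n) \equiv \qfk{k}(\mathcal{Q}_n)$ together with the closure of $\infL^{k}(\mathcal{Q}_n)$ under negation, so that $\Rrightarrow$ and $\equiv$ coincide on all three logics; the symmetry of the game noted above yields the same conclusion independently. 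Chaining these equivalences with the previous paragraph delivers the theorem.

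I expect the only real obstacle to be the bookkeeping in passing from the one-way preservation relation to the two-sided equivalence, and in correctly identifying (up to equivalence) the syntactic logic $\qfk{k}(\mathcal{Q}_n)$ of the statement with $\eplnkinftyomega{n,k}_{111} = \qfk{k}(\mathcal{Q}^{\text{b}}_n)$. Both points rest entirely on Observations~\ref{obs:bijisom} and~\ref{obs:querycomp} and the negation-closure they provide, rather than on any fresh combinatorial argument.
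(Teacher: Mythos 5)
Your proposal is correct, and for the first equivalence it coincides exactly with the paper: both you and the authors pass through Lemma~\ref{lem:kleisli_equiv} to identify Kleisli isomorphisms with strongly branch-bijective maps, and then unwind the definition of strong branch-bijectivity to reach winning strategies in $\game{Bij}{n}{k}$. The only divergence is in the game--logic leg. The paper's proof is a one-liner that cites Hella's original 1989 characterisation of $\equiv_{\infL^{k}(\mathcal{Q}_n)}$ by the $n$-bijective $k$-pebble game, whereas you re-derive this correspondence internally: you instantiate Theorem~\ref{thm:logic} at $\mathbf{x}=(1,1,1)$, identify $\eplnkinftyomega{n,k}_{111}$ with $\qfk{k}(\mathcal{Q}^{\text{b}}_n)$, and then upgrade $\Rrightarrow$ to $\equiv$ via the equivalence $\qfk{k}(\mathcal{Q}^{\text{b}}_n) \equiv \infL^{k}(\mathcal{Q}_n)$ (Observations~\ref{obs:bijisom} and~\ref{obs:querycomp}) together with closure under negation, or alternatively via the symmetry of the bijection game. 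This is precisely the ``from first principles'' route that the paper itself sets up in Section~\ref{sec:logics} but then does not use in the final proof; your version is more self-contained, at the cost of having to check the small sandwiching $\qfk{k}(\mathcal{Q}^{\text{b}}_n) \subseteq \qfk{k}(\mathcal{Q}_n) \subseteq \infL^{k}(\mathcal{Q}_n)$, which you handle correctly. No gaps.
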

\begin{proof}
  Immediate from Lemma~\ref{lem:kleisli_equiv} and Hella~\cite{Hella1989}.
\end{proof}

A similar result can also be obtained relating branch-injective and branch-surjective
maps to monomorphisms and epimorphisms respectively. However, the category in question is not
the full category $\mathcal{K}(\G{n,k})$ where $\G{n,k}$ is seen as a comonad on
$\strs{\sigma^+}$ but rather the restriction of this category where the objects only the $I$-structures.
Abramsky and Shah show that this category can be obtained from the relevant game comonad
as the Kleisli category of a relative comonad~\cite{Abramsky2018}.

\section{Coalgebras and Decompositions}\label{sec:coalgebra}

Abramsky et al.~\cite{Abramsky2017} show that the coalgebras of the comonad $\T{k}$ have a surprising correspondence with objects of great interest to finite model theorists.  That is, any coalgebra $\alpha: \str{A} \rightarrow \T{k}\str{A}$ gives a tree decomposition of $\str{A}$ of width at most $k-1$ and  any such tree decomposition can be turned into a coalgebra.
This result works because $\T{k}\str{A}$ has a treelike structure where any pebble history, or branch,  $s \in \T{k}\str{A}$ only witnesses the relations from the $\leq k$ elements of $\str{A}$ which make up the pebbled position on $s$. So a homomorphism $\str{A} \rightarrow \T{k}\str{A}$ witnesses a sort of treelike $k$-locality of the relational structure $\str{A}$ and the $\T{k}$-coalgebra laws are precisely enough to ensure this can be presented as a tree decomposition (of width $<k$).

In lifting this comonad to $\G{n,k}$, we have given away some of the restrictive $k$-local nature of $\T{k}$ which makes this argument work. The structure $\G{n,k}\str{A}$ witnesses many more of $\str{A}$'s relations than $\T{k}\str{A}$. Take, for example, the substructure induced on the elements $\{[\epsilon | x ] \ | \ x \in \str{A}\}$, where $\epsilon$ is the empty history.  This witnesses all relations in $\str{A}$ which have arity $\leq n$. So, in particular, if $\str{A}$ contains no relations of arity greater than $n$, this substructure is just a copy of $\str{A}$ and the obvious embedding $A \rightarrow \G{n,k}A$ can be easily seen to be a $\G{n,k}$-coalgebra.
From this, we can see that if $\G{n,k}$-coalgebras capture some notion of $n$-generalised tree decomposition, this should clearly be more permissive than the notion of tree decomposition,  allowing a controlled amount of non-locality (parameterised by $n$) and collapsing completely for $\sigma$-structures with $n \geq \text{arity}(\sigma)$. In this section we define the appropriate generalisation of tree decomposition and show its relation with $\G{n,k}$-coalgebras.

\subsection{Generalising tree decomposition}

Recall the  definition of a tree decomposition of a $\sigma$-structure, for
example from Definition 4.1.1 of~\cite{Grohe2017}.

\begin{defi}
A \emph{tree decomposition} of a $\sigma$-structure $\str{A}$ is a pair $(T,B)$ with $T$ a tree and $B: T \rightarrow 2^A$ such that:
\begin{enumerate}
\item For every $a \in A$ the set $\{ t \ \mid \ a \in B(t) \}$ induces a subtree of
$T$; and
\item for all relational symbols $R \in T$ and related tuples $\tup{a}\in R^{\str{A}}$, there
exists a node $t \in T$ such that  $\tup{a} \subset B(t)$.
\end{enumerate}
\end{defi}

To arrive at a generalisation of tree decomposition which allows for the non-locality discussed above, we first introduce the following extension of ordinary tree decompositions.

\begin{defi}
An \emph{extended tree decomposition} of a $\sigma$-structure $\str{A}$ is a triple $(T,\beta,\gamma)$ with $\beta,\gamma : T \rightarrow 2^A$ such that:
\begin{enumerate}
\item $(T,B)$ is a tree-decomposition of $\str{A}$ where $B : T \rightarrow 2^A$ is defined by $B(t) \defeq \beta(t) \cup \gamma(t)$; and
\item if $a \in \gamma(t)$ and $a \in B(t')$ then $t \leq t'$.
\end{enumerate}
\end{defi}

In an extended tree decomposition, the \emph{bags} $B$ of the underlying tree decomposition are split into a \emph{fixed bag} $\beta$ and a \emph{floating bag} $\gamma$. The second condition above ensures that $\gamma(t)$ contains only elements $a\in \str{A}$ for which $t$ is their \textit{first}\footnote{minimum in the tree order} appearance in $(T,B)$.
Width and arity are two important properties of extended tree decompositions.

\begin{defi}
  Let $D = (T,\beta,\gamma)$ be an extended tree decomposition.

  The \emph{width}, $w(D)$, of $D$ is $\max_{t \in T}|\beta(t)|$.

  The \emph{arity}, $\text{ar}(D)$, of $D$ is the least $n \leq w(D)$ such that:
  \begin{enumerate}
  \item if $t < t'$ then $|\beta(t') \cap \gamma(t)| \leq n$; and
  \item for every tuple $(a_1,\ldots,a_m)$ in every relation $R$ of $\str{A}$, there is a $t \in T$ such that $\{a_1,\ldots,a_m\} \subseteq B(t)$ and $| \{a_1,\ldots,a_m\} \cap \gamma(t)| \leq n$.
\end{enumerate}
\end{defi}
We note that the definition of width here differs from the width of the underlying tree decomposition $(T,B)$. However as we see in Lemma~\ref{lem:decomp_edecomp} having an ordinary tree decomposition of width $k$ is equivalent to having an extended tree decomposition of width $k$ and arity $1$.

We are particularly interested in extended tree decompositions that are further well-structured, in a sense that is related to the definition of structured $n,k$-histories in Section~\ref{sec:comonad}.
\begin{defi}
An extended tree decomposition with width $k$ and arity $n$ is \emph{\good} if for every $a \in A$ there exists $t \in T$ s.t. $a \in \gamma(t)$, for every node $t$, $\gamma(t) \neq \emptyset$, for any child $t^{\prime}$ of $t$ $\beta(t^{\prime})\cap \gamma(t) \neq \emptyset$ and for any $t^{\prime\prime}$ a child of $t^{\prime}$ we have that either:
\begin{itemize}
  \item $|\beta(t^{\prime}) \cap \gamma(t)| = n $; or
  \item $\abs{\beta(t^{\prime})} < k$; or
  \item $\gamma(t)\cap \beta(t^{\prime}) \setminus \beta(t^{\prime\prime}) \neq \emptyset$
\end{itemize}

\end{defi}

\subsection{Drawing extended tree decompositions and examples}
We draw extended tree decompositions as trees where the nodes have two labels, an upper label indicating the fixed bag at that node and the lower label denoting the floating bag.  In this subsection, we give some simple examples of these decompositions.

\begin{exa}
Any structure $\str{A}$ which has no relations of arity greater than $n$ admits a trivial arity $n$, width $0$ extended tree decomposition with a single node. This is drawn as:

\[
\begin{tikzpicture}
\tikzstyle{every node}=[state with output,draw,double]
\node {
$\emptyset$
\nodepart{lower}
$A$
};
\end{tikzpicture}
\]
\end{exa}

From this example we see that, in particular, any graph $\str{G}$ has a trivial extended tree decomposition of arity $2$. The next two examples show that for graphs, extended tree decompositions of arity $1$ look similar to ordinary tree decompositions.
\begin{exa}
Consider the following tree $\mathcal{T}$ as a graph.
\[
  \begin{minipage}{0.4\textwidth}
        \centering
        \begin{tikzpicture}
        \node (t0) at (0,1) {$t_0$};
        \node (t1) at (-1,0) {$t_1$};
        \node (t2) at (0,0) {$t_2$};
        \node (t3) at (1,0) {$t_3$};
        \node (t4) at (-1.5,-1) {$t_4$};
        \node (t5) at (-0.5,-1) {$t_5$};
        \node (t6) at (1,-1) {$t_6$};
        \draw   (t1) -- (t0) -- (t2)
        (t0) -- (t3) -- (t6)
        (t4) -- (t1) -- (t5);
        \end{tikzpicture}
  \end{minipage}
  \]
As with ordinary tree decompositions a tree can be given a decomposition of width $1$ by creating a bag for each edge. The corresponding extended tree decomposition of width $1$ and arity $1$ for $\mathcal{T}$ is the following:
\[
    \begin{minipage}{0.4\textwidth}
        \centering
        \begin{tikzpicture}
        \tikzstyle{every node}=[state with output,draw,double]
        \node (t0) at (0,1.5) {$t_0$
        \nodepart{lower}
        $t_2$
        };
        \node (t1) at (-1.5,0) {$t_0$
        \nodepart{lower}
        $t_1$
        };
        \node (t3) at (1.5,0){$t_0$
        \nodepart{lower}
        $t_3$
        };
        \node (t4) at (-2.25,-1.5) {$t_1$
        \nodepart{lower}
        $t_4$
        };
        \node (t5) at (-0.75,-1.5) {$t_1$
        \nodepart{lower}
        $t_5$
        };
        \node (t6) at (1.5,-1.5) {$t_3$
        \nodepart{lower}
        $t_6$
        };
        \draw   (t1) -- (t0)
        (t0) -- (t3) -- (t6)
        (t4) -- (t1) -- (t5);
        \end{tikzpicture}
\end{minipage}
\]
Unlike with ordinary tree decompositions, the floating bags in extended tree decompositions can be used to give more succinct decompositions (without changing the width). For example, the following is an extended decomposition of $\mathcal{T}$ again with width $1$ and arity $1$.
\[
    \begin{minipage}{0.4\textwidth}
        \centering
        \begin{tikzpicture}
        \tikzstyle{every node}=[state with output,draw,double]
        \node (t0) at (0,2) {$\{ t_0 \}$
        \nodepart{lower}
        $\{ t_1, t_2, t_3\}$
        };
        \node (t1) at (-1.5,0) {$\{ t_1 \}$
        \nodepart{lower}
        $\{t_4, t_5 \}$
        };
        \node (t3) at (1.5,0) {$\{ t_3 \}$
        \nodepart{lower}
        $\{t_6 \}$
        };
        \draw   (t1) -- (t0) -- (t3);
        \end{tikzpicture}
\end{minipage}
\]
\end{exa}

As we see in Lemma~\ref{lem:decomp_edecomp}, the correspondence between ordinary tree decompositions and extended tree decompositions of arity $1$ extends beyond trees to all relational structures. However, for signatures of arity higher than $2$ increasing the arity of an extended tree decomposition can result in non-trivial decompositions of lower width as is shown by the following example.

\begin{exa}
  Consider a hypergraph $\mathcal{T}^{\prime}$ constructed from $\mathcal{T}$ above by adding ternary edges $\{t_0, t_1, t_2\}, \{t_0, t_1, t_3\}, \{t_0, t_2, t_3\}$ and $\{t_1, t_4, t_5\}$. Such a structure contains a $4$-clique $\{t_0,t_1, t_2, t_3\}$ in its Gaifman graph (see Libkin~\cite{Libkin2004} Definition 4.1) and so cannot have an ordinary tree decomposition of width less than $3$. However, the following is an extended tree decomposition of width $1$ and arity $2$ for $\mathcal{T}^{\prime}$:

  \[
      \begin{minipage}{0.4\textwidth}
          \centering
          \begin{tikzpicture}
          \tikzstyle{every node}=[state with output,draw,double]
          \node (t0) at (0,2) {$\{ t_0 \}$
          \nodepart{lower}
          $\{ t_1, t_2, t_3, t_6\}$
          };
          \node (t1) at (0,-1) {$\{ t_1 \}$
          \nodepart{lower}
          $\{t_4, t_5 \}$
          };
          \draw   (t1) -- (t0);
          \end{tikzpicture}
  \end{minipage}
  \]

\end{exa}

\subsection{Preliminary results on extended tree decompositions}

Before proving the main result of this section we present two results which establish some basic facts about this new type of decomposition. The first establishes the equivalence of width $k$, arity $1$ extended tree decompositions with ordinary tree decompositions of width $k$. This is interesting as we recall from~\cite{Abramsky2017} that tree decompositions of width $k$ correspond to coalgebras of $\T{k+1}$ whereas we will see in Theorem~\ref{thm:coalgebra} that coalgebras of $\G{1,k}$ give extended tree decompositions of arity $1$ and width $k$. In this light, this result can be seen as demonstrating the extra strength of $\G{1,k}$ over $\T{k}$.

\begin{lem}\label{lem:decomp_edecomp}
  A relational structure $\str{A}$ has a tree decomposition of width $k$ if, and only if, it has an extended tree decomposition of width $k$ and arity 1
\end{lem}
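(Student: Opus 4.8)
The plan is to prove both implications by explicit construction. I will use the standard convention (consistent with the paper's own examples, e.g.\ the ``bag per edge'' width-$1$ decomposition of a tree) that a tree decomposition of width $k$ has bags $B(t)$ of size at most $k+1$, while an extended decomposition of width $k$ has fixed bags $\beta(t)$ of size at most $k$. The point to keep in mind throughout is the off-by-one between the two notions: since $B(t)=\beta(t)\cup\gamma(t)$, a node that floats exactly one vertex has a fixed bag one element smaller than its full bag, so a width-$k$ ordinary decomposition (bags $\le k+1$) should correspond to an extended one whose fixed bags have size $\le k$.

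For the forward direction I start from a tree decomposition $(T,B)$ of $\str{A}$ with $|B(t)|\le k+1$. The naive idea of putting the elements first appearing at $t$ into $\gamma(t)$ and the rest into $\beta(t)$ does \emph{not} yield arity $1$: if two elements are introduced at $t$ and both survive into a child $t'$, then $|\beta(t')\cap\gamma(t)|\ge 2$. Instead I would refine $(T,B)$ so that each node introduces exactly one new vertex — essentially a nice tree decomposition. Concretely, replace each node $t$ by a path whose successive nodes introduce the new elements $a_1,\ldots,a_j$ of $B(t)\setminus B(\text{parent}(t))$ one at a time, the $i$-th node having floating bag $\{a_i\}$ and fixed bag $(B(t)\cap B(\text{parent}(t)))\cup\{a_1,\ldots,a_{i-1}\}$ (redundant nodes, whose bag lies in the parent's bag, are first contracted). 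Because every floating bag is now a singleton, both arity conditions hold automatically — at most one floating element can meet any fixed bag or any relation tuple — the covering and connectivity conditions are inherited from $(T,B)$, and each fixed bag has size at most $|B(t)|-1\le k$.

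For the converse I start from an extended tree decomposition $(T,\beta,\gamma)$ with $|\beta(t)|\le k$ and arity $1$. The underlying pair $(T,B)$ is already a tree decomposition, but its bags can be far too large since floating bags are unbounded. The idea is to ``spread out'' each floating bag: replace a node $t$ by one node per floating element $c\in\gamma(t)$, carrying the bag $\beta(t)\cup\{c\}$ (of size $\le k+1$), together with a base node carrying $\beta(t)$, arranged in a path. The second arity condition says every relation tuple has at most one element in $\gamma(t)$, so each tuple is covered by some new bag $\beta(t)\cup\{c\}$; the first arity condition $|\beta(t')\cap\gamma(t)|\le 1$ says each child $t'$ shares at most one floating element of $t$, so the expansion of $t'$ can be reattached beneath the unique expansion node of $t$ still containing that shared element (and beneath the base node when none is shared).

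The hard part will be verifying, in the converse, that the resulting $(T',B')$ really is a tree decomposition — specifically that for each $a\in A$ the set of new nodes containing $a$ is connected. This is where arity $1$ does the real work: by the defining condition of extended decompositions an element is floating at exactly one node (the minimum of its occurrence subtree) and fixed at all others, and the bound $|\beta(t')\cap\gamma(t)|\le 1$ forces the single descending thread along which such an element persists into descendant fixed bags. Routing children under the correct expansion node therefore preserves the connectedness of every element's occurrence set, and once this is checked the width bound $|\beta(t)\cup\{c\}|\le k+1$ is immediate, completing the equivalence.
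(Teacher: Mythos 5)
Your proposal is correct, and it splits into two halves of different character relative to the paper. Your converse direction is essentially the paper's own argument: the paper replaces each node $t$ by a ``spider'' with a hub labelled $\beta(t)$ and one leaf $\beta(t)\cup\{\gamma_i\}$ per floating element, reattaching the expansion of each child $t'$ under the leaf determined by the (at most one, by arity $1$) element of $\beta(t')\cap\gamma(t)$; your path arrangement of the same bags is an immaterial variation, and you correctly isolate connectivity of occurrence sets as the point where arity $1$ and condition (2) of the definition do the work. Your forward direction, however, is a genuinely different construction. The paper first normalises to a smooth decomposition (all bags of size exactly $k+1$, adjacent bags meeting in exactly $k$ elements), then \emph{merges} siblings that share the same $k$-element interface with their parent, taking that interface as the fixed bag and the union of the introduced elements as a possibly large floating bag; arity $1$ then has to be argued from the fact that the subtrees of distinct merged siblings are disjoint, so no descendant fixed bag can meet two elements of the merged floating bag. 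You instead \emph{split} each node into a path introducing one new element at a time, so every floating bag is a singleton and both arity conditions hold vacuously; the price is a larger tree and the routine bookkeeping of a ``nice'' decomposition. Your construction is the easier one to verify; the paper's has the advantage of producing the more economical decomposition whose shape (small fixed interface, large floating bag of newly introduced elements) matches the canonical decomposition of $\G{n,k}\str{A}$ used in Lemma~\ref{lem:gnk_decomp} and in the coalgebra correspondence. Both correctly account for the off-by-one between the two width conventions.
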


\begin{proof}

($\implies$) Without loss of generality we can assume that $(T, \beta)$ is a tree decomposition such that for all $t \in T$ $|\beta(t)| = k+1$ and if $t^{\prime}$ is a child of $t$ in $T$ then $|\beta(t) \cap \beta(t^{\prime})| = k$. We now show how to transform such a tree decomposition into an extended decomposition $(T^{\prime}, \beta^{\prime}, \gamma)$ of width $k$ and arity $1$.

Define the equivalence relation $\approx$ on $T$ as \[t^{\prime} \approx t^{\prime\prime} \iff \text{$t^{\prime}$ and $t^{\prime\prime}$ have the same parent $t$ in $T$} \textbf{ and } \beta(t) \cap \beta(t^{\prime}) = \beta(t) \cap \beta(t^{\prime\prime})\]

Now we can define the extended decomposition as follows:
\begin{itemize}
  \item $T^{\prime} = T/{\approx}$
  \item $\beta^{\prime}([t]) = \beta(t) \cap \beta(t_0)$ where $t_0$ is the common parent of the elements of $[t]$
  \item $\gamma([t]) = \bigcup_{t^{\prime} \in [t]} \beta(t^{\prime})\setminus \beta(t_0)$
\end{itemize}

For non-root nodes $t$ in $T$ both $\beta^{\prime}$ and $\gamma$ are well-defined by the definition of $\approx$.  For the singleton equivalence class $[r]$ containing the root of $T$ we choose any $c_r \in \beta(r)$ and define $\beta^{\prime}([r]) = \beta(r) \setminus \{c_r\}$ and $\gamma([r]) = \{c_r\}$.

Letting $B([t]) = \beta^{\prime}([t]) \cup \gamma([t])$ we have that  $B([t]) \supset \beta(t)$ and so $(T^{\prime}, B)$ is a tree decomposition.  Furthermore, $\gamma([t]) \cap \beta(t_0) = \emptyset$ by definition, so for any $[t^{\prime}] < [t]$ we have $B([t^{\prime}]) \cap \gamma([t]) = \emptyset$ by the condition that $\beta^{-1}(x)$ is a connected subtree of $T$ for any $x \in T$. So $(T^{\prime}, \beta^{\prime}, \gamma)$ is an extended tree decomposition.

It is easy to see that the maximum size of $\beta^{\prime}(t)$ is equal to $k$ by design. So the width of $(T^{\prime}, \beta^{\prime}, \gamma)$ is $k$. If $\tup{a}$ is a tuple in a relation of $\str{A}$ we know that there is a node $t \in T$ such that $\tup{a} \subset \beta(t)$. By definition, $\beta(t) \subset B^{\prime}(t)$ with $|\beta(t) \cap \gamma([t])| \leq 1$. So the arity of $(T^{\prime}, \beta^{\prime}, \gamma)$ is 1, as required.

  ($\impliedby$)
  To go backwards we take a width $k$, arity $1$ extended tree decomposition $(T, \beta, \gamma)$ and we construct a tree decomposition $(\tilde{T}, \tilde{\beta})$ by replacing each node $t \in T$ with the following spider $H_t$:
  \[
  \begin{minipage}{0.2\textwidth}
      \centering
  \begin{tikzpicture}
  \tikzstyle{every node}=[state with output,draw,double]
  \node {
  $\beta(t)$
  \nodepart{lower}
  $\gamma(t)$
  };
  \end{tikzpicture}
  \end{minipage}
  \longmapsto
  \begin{minipage}{0.45\textwidth}
      \centering
  \begin{tikzpicture}
  \tikzstyle{every node}=[state,draw,double]
  \node (t0) at (0,0){
  $\beta(t)$
  };
  \node (t1) at (-1.5,-1.5){
  $\beta(t) \cup \{\gamma_1\}$
  };
  \node (t2) at (1.5,-1.5){
  $\beta(t)\cup\{\gamma_{r_t}\}$
  };
  \draw   (t1) -- (t0) -- (t2);
  \path (t1) -- node[draw=none]{\ldots} (t2);
  \end{tikzpicture}
  \end{minipage}
  \]
where the children of the leaf of $H_t$ labelled by $\beta(t) \cup \{\gamma_i\}$ are the roots of the spiders $H_{t^{\prime}}$ such that $t^{\prime}$ is a child of $t$ in $T$ and $\beta(t^{\prime}) \cap \gamma(t) = \{\gamma_i\}$. To see that this is a tree decomposition note firstly that $\tilde{T}$ is clearly a tree under this construction. Next, it is easy to see that for any $a \in \beta(t) \cup \gamma(t)$, $a$ either appears in every bag of $H_t$ or just in a single leaf. This means that the bags containing $a$ in $\tilde{T}$ still form a connected subtree. Lastly, we need to show that each related tuple $\tup{a}$ in $\str{A}$ is contained in some bag of $\tilde{T}$. This is guaranteed by the condition that $(T,\beta, \gamma)$ has arity $1$, which means any time $\tup{a}\subset \beta(t) \cup \gamma(t)$ there exists $\gamma_i \in \gamma(t)$ such that $\tup{a} \subset \beta(t) \cup \gamma(t) \cup \{ \gamma_i \}$.
\end{proof}

Having established the connection between extended tree decompositions and ordinary tree decompositions we now relate extended tree decompositions to our construction in Section~\ref{sec:comonad} with the next easy but important result. It is noteworthy here that the extended tree decompositions admitted by the structures from Section~\ref{sec:comonad} are \emph{structured}. This is important later in this section.

\begin{lem}\label{lem:gnk_decomp}
  For any finite $\str{A}$, there is a structured extended tree decomposition of $\G{n,k}\str{A}$ of width $k$ and arity $n$.
\end{lem}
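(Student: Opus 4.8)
The plan is to write the decomposition down explicitly from the combinatorics already developed in Section~\ref{sec:comonad} and then verify the axioms. I would take the tree $T$ to be the set of all structured $n,k$-histories ordered by the prefix order, with the empty history as root. Since every element of $\G{n,k}\str{A}$ has the form $[\tau\mid a]$ for a \emph{unique} structured history $\tau$ and $a\in A$, these classes are partitioned by $\tau$, so I would let the floating bag at a node be the classes born there, $\gamma(\tau)=\{[\tau\mid a]\mid a\in A\}$, and the fixed bag $\beta(\tau)$ be the classes $[\sigma\mid a]$ with $\sigma<\tau$ whose element $a$ is still on a pebble in the position reached by $\tau$, i.e.\ those carried forward from strictly earlier moves. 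The fact that makes $\beta$ and $\gamma$ well defined on $\approx_n$-classes is that any pair placed inside a block $s_j$ of a structured history always has base history $[s_1,\dots,s_{j-1}]$, whether $\alpha_n$ regards it as opening a fresh block or as extending an incomplete one.

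First I would check that $(T,B)$ with $B=\beta\cup\gamma$ is a tree decomposition. Coverage of relations is the crux: given a tuple in $R^{\G{n,k}\str{A}}$, I would unfold the ``some representative'' convention to obtain representatives in $\T{k}\str{A}$ forming a chain that satisfies the pebble conditions in the definition of $R^{\T{k}\str{A}}$; writing $e^{*}$ for the maximal element of this chain and $\tau=S_n(e^{*})$, the position of $\tau$ contains every coordinate $a_j$, and by the base-history fact each $[\sigma_j\mid a_j]$ is a class active at $\tau$ with $\sigma_j<\tau$, hence lies in $\beta(\tau)$. Connectivity holds because the nodes carrying a fixed class $[\sigma\mid a]$ are exactly $\sigma$ together with the descendants at which $a$ remains on its pebble, a set closed under passing to the path back to $\sigma$; and the extended-decomposition side condition is immediate, since $[\sigma\mid a]$ is born only at $\sigma$ and recurs only at descendants of $\sigma$.

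Next I would read off the parameters. At most one class is active per pebble index, so $|\beta(\tau)|\le k$ and the width is $k$. For the arity, the classes counted by $\beta(t')\cap\gamma(t)$ with $t<t'$ are exactly those placed in the single block of $t'$ following $t$, and a block holds at most $n$ pairs; moreover the coverage argument above places every related tuple into a bag using $\beta$ alone, so both arity conditions hold with bound $n$, the value $n$ being attained whenever that block is full.

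Finally I would verify that the decomposition is \good{}, which is where the bag definition was engineered to mirror Definition~\ref{def:nk_structured}. That every class lies in some $\gamma(t)$, that each $\gamma(t)$ is nonempty, and that $\beta(t')\cap\gamma(t)\neq\emptyset$ for a child $t'=t;[s']$ are all immediate. For the three-way clause, given $t$, a child $t'=t;[s']$ and a grandchild $t''=t';[s'']$, structuredness of the history $t''$ forces either $|s'|=n$, in which case $|\beta(t')\cap\gamma(t)|=n$, or that $s''$ reuses a pebble $p$ appearing in $s'$, in which case the class placed on $p$ inside $s'$ lies in $\gamma(t)\cap\beta(t')\setminus\beta(t'')$. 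I expect the main obstacle to be exactly this dictionary between structured $n,k$-histories, the maps $S_n$ and $\alpha_n$, and the $\approx_n$-classes -- in particular pinning down the base-history claim and checking that the existentially defined relations of the quotient $\G{n,k}\str{A}$ are genuinely captured by a single node's bag; once that is in place, each verification is routine.
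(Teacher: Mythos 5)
Your construction coincides with the paper's: the tree is the set of structured $n,k$-histories under the prefix order, $\gamma(\tau)$ collects the classes born at $\tau$, $\beta(\tau)$ collects the classes of the live pebbled prefixes carried forward from earlier blocks, and the verification of coverage, connectivity, width, arity and the \good condition runs along the same lines, resting on the same base-history observation that a pair placed in block $j$ of a structured history always has base $[l_1,\dots,l_{j-1}]$. The one point to tighten is your final case split: $|s'|=n$ alone does not force $|\beta(t')\cap\gamma(t)|=n$ if $s'$ places the same element of $A$ on two distinct pebbles (the two placements yield one class), but then $|\beta(t')|<k$, which is another disjunct of the \good condition --- the paper's proof avoids this by first assuming $|\beta(t')|=k$ and deducing that $s'$ has no repeated elements.
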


\begin{proof}
  Recall that the underlying set of $\G{n,k}\str{A}$ consists of representatives $[s | a]$ of equivalence classes in $\T{k}\str{A}/\approx_n$ where $s \in ((A \times [k])^{\leq n})^{\ast}$ is a \emph{structured} $n,k$-history and $a \in A$. We construct an extended tree decomposition where each node is an $n,k$-history $s$ appearing in one of these representatives. The tree ordering is simply given by the prefix relation. The fixed bag at $s$, $\beta(s)$, contains up to $k$ elements which represent the at most $k$ elements which are pebbled after $s$ is played. To describe these explicitly, let $\overline{s} \in \T{k}\str{A}$ be the flattening of the list $s$ and for each $i\in [k]$ appearing as a pebble index in $s$ and let $s_i$ be the maximal prefix of $\overline{s}$ which ends in $(a,i)$ for some $a\in A$. Then $\beta(s)$ contains the $\approx_n$-equivalence classes of each of the $s_i$. As there can be at most $k$ elements in this set, our extended tree decomposition has width $k$. The floating bag is given, more simply as $\gamma(s) = \{[s|a] \mid a\in A\}$. From this description it is easy to see that for any $[s|a] \in \G{n,k}\str{A}$, if $[s|a]$ appears in $\beta(s^{\prime})$ then $s$ is a prefix of $s^{\prime}$ and for any $s^{\prime\prime}$ with $s\sqsubset s^{\prime\prime} \sqsubset s^{\prime}$ we have $[s | a] \in \beta(s^{\prime\prime})$. This confirms that $B^{-1}([s|a])$ is a connected subtree of T and that $\gamma^{-1}([s|a])$ is a singleton containing the root of that subtree.

  To show that $(T,\beta, \gamma)$ defines an extended tree
  decomposition of $\G{n,k}\str{A}$ it now suffices to show that any
  related tuple $\tup{g} = ([s_1|a_1], \dots [s_l|a_l]) \in
  R^{\G{n,k}\str{A}}$ appears in some bag. Because of the way
  relations are defined in $\G{n,k}$ we can find $(t_1, \dots t_l) \in
  R^{\T{k}\str{A}}$ s.t. $q(t_i)= [s_i|a_i]$. By the definition of
  relations in $\T{k}\str{A}$ we know that the $t_i$ are totally
  ordered by the prefix relation. This means that the $s_i$ is
  similarly  totally ordered with largest element $s$. The related
  tuple is contained in $\beta(s)\cup\gamma(s)$. Furthermore, $\tup{g}\cap \gamma(s)$ contains the $t_i$ for which $q(t_i)= [s|a_i]$. As these are linearly ordered by the prefix relation it would be impossible for there to be more than $n$ distinct such lists. This means that ($T,\beta, \gamma$) is indeed an extended tree decomposition of width $k$ and arity $n$. 

  To see that $(T, \beta, \gamma)$ is structured we rely on the fact that the sequences $s\in ((A\times[k])^{\leq n})^{\ast}$ appearing in $T$ are themselves structured in the sense of Definition~\ref{def:nk_structured}. The proof is as follows. Suppose there is a node $s \in T$ with a child $s;x \in T$ where $x \in (A\times[k])^{\leq n}$ and suppose that  $|\beta(s;x) \cap \gamma(s)| < n$ and $\beta(s;x) = k$. We now need to show that for any node $s;x;y \in T$ $\gamma(s)\cap\beta(s;x)\setminus\beta(s;x;y) \neq \emptyset$. Unpacking the definitions we have that $\gamma(s)\cap\beta(s;x)$ contains elements $[s | a]$ where $(a, i)$ appears in $x$ for some $i$. As we also know that $|\beta(s;x)| = k$, which means in particular that $x$ does not contain two pairs $(a,i)$ $(a,j)$ for $i \neq j$ because if it did the contributions from pebbles $i$ and $j$ to $\beta(s;x)$ would both be $[s | a]$. These two facts together mean that the length of $x$ must be strictly less than $n$. Thus as $s;x;y$ is a structured $n,k$-history we must have that the first element of $y$ is $(b_y,i_y)$ where such the index $i_y$ appears in some pair $(b_x,i_y)$ in $x$. It is not hard to see that $[s|b_x] \in \gamma(s)\cap\beta(s;x)\setminus\beta(s;x;y)$, completing our proof.
\end{proof}

We now prove the main claim of this section: the $\G{n,k}$-coalgebras are in correspondence with structured extended tree decompositions of width $k$ and arity $n$.
\subsection{\texorpdfstring{Correspondence with $\G{n,k}$ coalgebras}{Correspondence with H\textsubscript{n,k}-coalgebras}}

In this final subsection we establish the connection between width $k$, arity $n$ extended decompositions of a $\str{A}$ which are structured with coalgebras $\alpha: \str{A} \rightarrow \G{n,k}\str{A}$. Formally stated, we prove the following theorem:

\begin{thm}\label{thm:coalgebra}
  For $\str{A}$ a finite relational structure the following are equivalent:
  \begin{enumerate}
    \item there is a $\G{n,k}$-coalgebra $\alpha: \str{A} \rightarrow \G{n,k}\str{A}$
    \item there is a \good extended tree decomposition of $\str{A}$ with width at most $k$ and arity at most $n$
  \end{enumerate}
\end{thm}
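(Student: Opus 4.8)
The plan is to prove the two implications separately, following the template Abramsky et al.\ use for the correspondence between $\T{k}$-coalgebras and tree decompositions, but tracking the extra data (floating bags, and blocks of size up to $n$) that the quotient $\approx_n$ introduces. The unifying idea is that a coalgebra $\alpha$ is nothing but a coherent assignment to each $a \in A$ of a structured $n,k$-history recording how $a$ is reached in a tree traversal of $\str{A}$, and that this is the same data as a \good extended tree decomposition. I would first unpack the coalgebra laws pointwise. The counit law $\epsilon^{n,k}_{\str{A}} \circ \alpha = \mathrm{id}_{\str{A}}$ forces $\alpha(a) = [s_a \mid a]$ for some structured $n,k$-history $s_a$, since $\epsilon^{n,k}$ returns the final element. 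The coassociativity law $\delta^{n,k}_{\str{A}} \circ \alpha = \G{n,k}\alpha \circ \alpha$ is the crucial coherence condition: reading it off elementwise, it forces that whenever an element $b$ occurs in $s_a$, the prefix of $s_a$ up to and including that occurrence is exactly $s_b$. Consequently the set $T$ of all structured prefixes of the histories $\{s_a \mid a \in A\}$, ordered by the prefix relation, is closed under prefixes and hence forms a tree.

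For $(1)\Rightarrow(2)$, I would take the tree $T$ just described, set the floating bag $\gamma(s) = \{a \in A \mid s_a = s\}$, and let the fixed bag $\beta(s)$ consist of the at most $k$ currently pebbled elements at $s$: for each pebble index $i$ occurring in $s$, the element $b$ carried by the last occurrence of $i$ in the flattening $\overline{s}$. This mirrors the decomposition of $\G{n,k}\str{A}$ built in Lemma~\ref{lem:gnk_decomp}; since $\alpha$ is an injective homomorphic section of $\epsilon^{n,k}$, an equivalent packaging is to pull that decomposition back along $\alpha$. The verifications are then: connectedness of the bags $B = \beta \cup \gamma$ follows from the prefix-coherence supplied by coassociativity; width $\leq k$ is immediate as at most $k$ pebble indices occur; that every related tuple lies in a common bag with at most $n$ floating members uses the definition of the relations in $\G{n,k}\str{A}$, since such a tuple pulls back to a chain in $\T{k}\str{A}$ whose floating part lies in a single block of size $\leq n$, giving arity $\leq n$; and the structuredness of each $s_a$ in the sense of Definition~\ref{def:nk_structured} translates directly into the three-case \good condition on the decomposition.

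For $(2)\Rightarrow(1)$, given a \good extended tree decomposition $(T,\beta,\gamma)$ of width $\leq k$ and arity $\leq n$, I note that each $a \in A$ lies in a unique floating bag $\gamma(t_a)$, by the second clause of the definition of extended tree decomposition together with the covering requirement of \good-ness. The crux is a tree traversal: assign to the elements appearing in the fixed bags a system of pebble indices from $[k]$ so that elements sharing a bag get distinct indices and each element keeps its index throughout the connected subtree of bags containing it; this is possible precisely because $|\beta(t)| \leq k$ and those subtrees are connected. Reading the path from the root to $t_a$ and grouping the elements newly fixed at each step (those in $\gamma(t)\cap\beta(t')$ for a child $t'$) into blocks yields a candidate history $s_a$, and I set $\alpha(a) = [s_a \mid a]$. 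The \good hypothesis is exactly what guarantees each $s_a$ is a genuine structured $n,k$-history (so that $S_n(F(s_a)) = s_a$), and the arity bound guarantees each block has size $\leq n$. It then remains to check the coalgebra laws: the counit law is immediate because $s_a$ ends by introducing $a$; that $\alpha$ is a homomorphism uses the arity condition, since any related tuple sits in one bag with $\leq n$ floating members and hence on a single chain, so the relation is witnessed in $\G{n,k}\str{A}$; and coassociativity holds because, by construction of the traversal, the prefix of $s_a$ introducing any $b$ is exactly $s_b$.

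The main obstacle is this reverse direction, and specifically the tree traversal. One must choose pebble indices consistently across the \emph{whole} tree and then verify that the induced per-element histories $s_a$ are simultaneously well defined and mutually coherent (so that coassociativity holds), structured in the sense of Definition~\ref{def:nk_structured} with the block boundaries matching the three-case \good condition, and built from blocks of size $\leq n$. Making the bookkeeping of indices and block boundaries line up exactly with the definitions of $\approx_n$ and $\alpha_n$ is the delicate part; by contrast, the homomorphism and counit checks are routine once the traversal is fixed.
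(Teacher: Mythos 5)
Your proposal is correct and follows essentially the same route as the paper: for $(1)\Rightarrow(2)$ you pull back the canonical structured decomposition of $\G{n,k}\str{A}$ from Lemma~\ref{lem:gnk_decomp} along $\alpha$, using the counit law to write $\alpha(a)=[s_a\mid a]$ and coassociativity for prefix-coherence, and for $(2)\Rightarrow(1)$ you perform the same pebble-index tree traversal, with the three-case \good{} condition supplying exactly the index-recycling needed to make each $s_a$ structured. The only difference is one of detail rather than method: the paper carries out the index bookkeeping you flag as delicate explicitly, via the functions $\iota_c$ and a case split on which clause of the \good{} condition holds at each node.
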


\begin{proof}
  \textit{(1 $\implies$ 2)} Let $\alpha$ be a coalgebra and, as $\epsilon\circ\alpha = \text{id}_\str{A}$, let $\alpha(a) = [s_a |a]$. Recall that by Lemma~\ref{lem:gnk_decomp} there is a \good extended tree decomposition $(T,\beta, \gamma)$ of $\G{n,k}\str{A}$ with arity $n$ and width $k$ where the nodes of $T$ are labelled by structured $n,k$-histories $s \in ((A\times [k])^{\leq n})^{\ast}$. We use this decomposition to define a decomposition $(T_\alpha, \beta_{\alpha}, \gamma_{\alpha})$ on $\str{A}$ as follows:
  \begin{itemize}

    \item $T_\alpha$ is the tree $T$ restricted to the set $\{s_a \ | \ a \in A\}$.

    \item $\beta_\alpha(s) \defeq \{a \in A \ | \ \alpha(a) \in \beta(s) \}$.

    \item $\gamma_\alpha(s) \defeq \{a \in A \ | \ \alpha(a) \in \gamma(s) \}$.

  \end{itemize}
  We now show, firstly, that this is an extended tree decomposition, secondly that it has width $k$ and arity $n$ and finally that it is structured.

  \paragraph{$(T_\alpha, \beta_\alpha, \gamma_\alpha)$ is an extended tree decomposition}
First of all this requires that $T_\alpha$ be a tree. For any $s \in T_\alpha$ we have some $a \in A$ with $\alpha(a) = [s | a]$. Suppose that $s = [l_1|l_2|\dots| l_m]$.  It is sufficient to show that $s_i \in T_\alpha$ for any prefix $s_i = [l_1|\dots|l_i]$ of $s$ (including the empty sequence). This fact can be deduced from the comultiplication law that for all $a$ $\G{n,k}\alpha( \alpha(a)) = \delta_A(\alpha(a))$. The left-hand side of this equation is  $\G{n,k}\alpha(\alpha(a)) = [\overline{s} | \alpha(a)]$ where $\overline{s} = [\overline{l_1}|\overline{l_2}|\dots| \overline{l_m}]$ and the right-hand side is $\delta_A(\alpha(a)) = [\tilde{s} | \alpha(a)]$  where $\tilde{s} = [\tilde{l_1}|\tilde{l_2}|\dots| \tilde{l_m}]$. Taking any $l_i = [(b_1, p_1), \dots (b_{m_i},p_{m_i})]$ it is not hard to see that $\overline{l_i} = [([\alpha(b_1)|b_1],p_1) \dots ([\alpha(b_{m_i})|b_{m_i}],p_{m_i})]$ and $\tilde{l_i} = [([s_{i-1} | b_1],p_1), \dots ([s_{i-1} | b_{m_i}],p_{m_i})]$. From this we can conclude that for any $b$ appearing in $l_i$ for any $1\leq i \leq m$ we have that $\alpha(b) = [s_{i-1} | b]$ where $s_0$ is the empty sequence. This proves that all prefixes of $s$ appear in $T_\alpha$. Now we show that with $B_\alpha \defeq \beta_\alpha \cup \gamma_\alpha$ $(T_\alpha, B_\alpha)$ defines a tree decomposition of $\str{A}$. Indeed $B_\alpha^{-1}(a)$ is a subtree because it is really the intersection of two subtrees of the original $T$. Furthermore, for any $\tup{a} \in R^{\str{A}}$, we have that $\alpha(\tup{a})\in R^{\G{n,k}\str{A}}$. As $(T,\beta, \gamma)$ is a tree decomposition, there is an $s\in T$ with $\alpha(\tup{a}) \subset \beta(s) \cup \gamma(s)$. You can assume $\alpha(\tup{a}) \cap \gamma(s) \neq \emptyset$ by taking the longest prefix of $s$ which satisfies this\footnote{This works by noting that for $s'$ a parent of $s$ in $T$, $\beta(s)\setminus\beta(s') \subset \gamma(s)$}. This means that $s \in T_\alpha$ and $\tup{a} \subset \beta_\alpha(s) \cup \gamma_\alpha(s)$. This shows that $(T_\alpha, \beta_\alpha \gamma_\alpha)$ defines an extended tree decomposition.

\paragraph{$(T_\alpha, \beta_\alpha \gamma_\alpha)$ has width $k$ and arity $n$}
As $\alpha$ is injective by the coalgebra law $\epsilon\circ\alpha = \text{id}_{\str{A}}$, we know that for any $s \in T_\alpha$ $|\beta_\alpha(s)| \leq |\beta(s)|$  and $|\beta_\alpha(s)| \leq |\beta(s)|$ by definition. As $(T,\beta, \gamma)$ has width $k$ this means that $|\beta_\alpha(s)| \leq k$ for all $s \in T_\alpha$ and so $(T_\alpha, \beta_\alpha, \gamma_\alpha)$ has width $k$. For arity, we have that  for any related tuple $\tup{a}$ in $\str{A}$ the tuple $\alpha(\tup{a})$ is related in $\G{n,k}\str{A}$. $(T,\beta, \gamma)$ having arity $n$ means that for any $s\in T$ $|\alpha(\tup{a})\cap \gamma(s)| \leq n$. So again by the injectivity of $\alpha$ $|\tup{a}\cap \gamma_\alpha(s)| \leq n$ and so $(T_\alpha, \beta_\alpha, \gamma_\alpha)$ has arity $n$.

\paragraph{$(T_\alpha, \beta_\alpha, \gamma_\alpha)$ is \good} Finally the extended tree decomposition is \good because $(T,\beta, \gamma)$ is \good and the coalgebra laws guarantee that $|\beta_\alpha(s)| = |\beta(s)|$ and $|\beta_\alpha(s')\cap \gamma_\alpha(s)| = |\beta(s')\cap \gamma(s)|$  for any $s \in T_\alpha$ with child node $s'$. This first equation is deduced by noting that injectivity guarantees $|\beta_\alpha(s)| \leq |\beta(s)|$. The reverse inequality comes from the fact that any $t \in \beta(s)$ is the $\approx_{n}$ equivalence class of some prefix of $s$. As we saw before, the comultiplication law guarantees that such classes are realised as $\alpha(b)$ for an appropriate $b$ so we have $|\beta_\alpha(s)| = |\beta(s)|$. The second equation follows from the same reasoning. Together these ensure that the conditions for being \good which are satisfied in $(T, \beta, \gamma)$ are also satisfied in $(T_\alpha, \beta_\alpha, \gamma_\alpha)$.

  \textit{(2 $\implies$ 1)} Defining a $\G{n,k}$ coalgebra from a
  \good extended tree decomposition $(T, \beta, \gamma)$ of width $k$
  and arity $n$ requires some careful bookkeeping which is presented
  explicitly here. Throughout we rely on the fact that our tree $T$
  comes with an order $\leq$ and so has a root which we call $r$. By the conditions of being \good, we have for each $a \in A$ a $\leq$-minimal node $c_a \in A$ where $a$ appears in $B(c_a)$ and we have that $a\in \gamma(c_a)$. This means in particular that at the root $\beta(r) = \emptyset$.

  The general strategy in defining the coalgebra $\alpha_T$ is to assign to each node $c \in T$ a structured $n,k$-history $s_c \in ((A\times[k])^{\leq n})^{\ast}$ which records the elements of $A$ which have appeared in $(T,\beta,\gamma)$ on the path from $r$ to $c$. We then show that $\alpha_T(a) = [s_{c_a} | a]$ defines a $\G{n,k}$-coalgebra for $\str{A}$.

\paragraph{Defining $s_c$}
Starting at the root we define $s_r$ to be the empty list. At each new node in $c \in T$ with parent $c'$ we define $l_c \in (A\times [k])^{\leq n}$ to record the elements of $A$ which appear in $\gamma(c')$ and persist in $\beta(c)$. As the arity of $(T,\beta,\gamma)$ is $n$ we know that $|\gamma(c')\cap\beta(c)|\leq n$.  We then form $s_c$ by appending $l_c$ to $s_{c'}$. This inductively defines $s_c$ on all the nodes of $T$.

Defining $l_c$ in such a way as to ensure $s_c$ is a structured $n,k$-history requires some care with assigning pebble indices from $[k]$ to the elements in $\gamma(c') \cap \beta(c)$. To help keep track of these indices we also define a function $\iota_c : \beta(c) \rightarrow [k]$. We say that a \emph{live} prefix of $s_c$ is a prefix $s'$ of the flattened list $F(s_c) \in (A\times [k])^{\ast}$ with final element $(b, i)$ such that no larger prefix of $F(s_c)$ ends with $(b', i)$ for any $b' \in A$. We say that $b$ is live in $s_c$ if it appears at the end of some live prefix $s'$. The end goal is that $s_c$ will be an $n,k$-history where the live elements are exactly those in $\beta(c)$ and that for each such element $b$ there is a live prefix of $s_c$ ending in the pair $(b,\iota_c(b))$.

At each $c$ we partition $\beta(c)$ as $N_c \cup R_c$ where $N_c \defeq \gamma(c') \cap \beta(c)$ is the set of \textit{new} elements in $\beta_c$ and $R_c \defeq \beta(c) \cap \beta(c')$ is the set of elements \textit{retained} from the parent node. Firstly, we define $\iota_c$ to be equal to $\iota_{c'}$ on $R_c$. As the width of $(T, \beta, \gamma)$ is $k$ we know that $|\beta(c)| \leq k$ and so the number of \textit{free indices} $|[k] \setminus \iota_c(R_c)|$ is at least as big as the number of new elements $|N_c|$ so we can assign to each element $b$ of $N_c$ a distinct index $\iota_c(b)$ from $[k] \setminus \iota_c(R_c)$. In many cases this is enough and we can pick any ordering $b_1, \dots b_m$ of the elements in $N_c$ and set $l_c$ to be the list $[(b_1,\iota_c(b_1)),\dots (b_m, \iota_c(b_m))]$.

We now need to define some modifications to this to ensure that $s_c$ is structured. Recall that an $n,k$-history $s$ is structured if and only if for every pair of successive blocks $l'$ appearing immediately before $l$ in $s$ we have that either $|l| = n$ or the first pebble index in $l$ must have appeared in $l'$. To ensure this holds true for each $s_c$, we need to take extra care defining $l_c$ in cases where $|l_{c'}|$ or $|N_c|$ are less than $n$.

If $|l_{c'}| < n$ then we must choose $\iota_c(b_1)$ to be an index which appeared in $l_{c'}$. To see that we can do this recall that $(T,\beta, \gamma)$ is structured and so for each non-root node $c'$ with child $c$ we have (using our new language from this proof) that at least one of the following is true
\begin{enumerate}
\item $|N_{c'}| = n$,
\item $|\beta(c')| < k$; or
\item $R_{c}\setminus N_{c'} \neq \emptyset$.
\end{enumerate}
In the first case, we have $|l_{c'}| = n$ so no action needs to be taken.
In the second case, where $|N_{c'}| < n$ and $|\beta(c')| < k$ then there is a spare index $i \in [k]\setminus\iota_{c'}(\beta(c'))$ and we define $l_{c'}$ to be $[(b'_1,\iota_{c'}(b'_1)),\dots (b'_m, \iota_{c'}(b'_{m'}), (b'_{m'}, i)]$ and we define $\iota_c(b_1) \defeq i$.
In the third case, there may not be a spare index $i$ but instead there is some element $b \in N_{c'}\setminus R_{c}$ meaning that some element which appears in $l_{c'}$ does not need to be live after $l_c$. In this case we simply define $\iota_c(b_1) \defeq \iota_{c'}(b)$.
Collectively, these modifications ensure that $s_c$ is structured and so the definition $\alpha_T(a) \defeq [s_{c_a} | a]$ is well-defined. It remains to show that $\alpha_T$ is a coalgebra.

\paragraph{$\alpha_T$ is a coalgebra}
To show that $\alpha_T$ is a homomorphism, take any related tuple $\tup{a} \in R^{\str{A}}$. As $(T,\beta, \gamma)$ is an extended tree decomposition there is some $c$ such that $\tup{a} \subset \beta(c) \cup \gamma(c)$. Now as the arity of the decomposition is $n$ there are at most $n$ elements $a \in \tup{a}$ with $a \in \gamma(c)$ and so $\alpha(a) = [s_c | a]$. For all the other elements $a' \in \tup{a}$ there must be some earlier $c_0$ with $a' \in \gamma(c_0)$ and a unique path $c_0 < c_1 < \dots <c_q = c$ linking $c_0$ and $c$ in $T$. We must have $a' \in \beta(c_1)$ and $a' \in R_{c_i}$ for all $1\leq i \leq q$ so by the definition of $s_c$ above we know that the index $\iota_{c_1}(a')$ used to pebble $a'$ in $l_{c_1}$ has not been reallocated by the end of $s_c$. From this it is easy to see that the tuple $\alpha(\tup{a})$ (with function application defined component-wise on the tuple) is related in $\G{n,k}\str{A}$.
Finally, we verify that $\alpha$ satisfies the coalgebra laws. The counit law, $\epsilon\circ\alpha = \text{id}_{\str{A}}$ is satisfied by definition. For comultiplication, it suffices to check that for any $a,b \in A$, if $b$ appears in $s_{c_a} = [l_{c_1}|\dots|l_{c_q}]$ then it appears in exactly one of the $l_{c_i}$ and $\alpha(b) = [[l_{c_1}|\dots|l_{c_{i-1}}] | b]$. This can be seen to hold from the construction above, concluding our proof.
\end{proof}

\section{Concluding Remarks}\label{sec:conclusion}

The work of Abramsky et al., giving comonadic accounts of pebble games and their relationship to logic has opened up a number of avenues of research.  It raises the possibility of studying logical resources through a categorical lens and introduces the notion of \emph{coresources}.  This view has been applied to pebble games~\cite{Abramsky2017}, Ehrenfeucht-Fra\"iss\'e games, bisimulation games~\cite{Abramsky2018} and also to quantum resources~\cite{Abramsky2017q,AbramskyBKM19}.  In this paper we have extended this approach to logics with generalised quantifiers.

The construction of the comonad $\G{n,k}$ introduces interesting new techniques to this project.  The pebbling comonad $\T{k}$ is graded by the value of $k$ which we think of as a \emph{coresource} increasing which constrains the morphisms.  The new parameter $n$ provides a second coresource, increasing which further constrains the moves of Duplicator.  It is interesting that the resulting comonad can be obtained as a quotient of $\T{k}$ and the strategy lifting argument developed
in Section~\ref{sec:comonad} could prove useful in other contexts.
The morphisms in the Kleisli category correspond to winning strategies in a new game we introduce which characterises a natural logic: the positive logic of homomorphism-closed quantifiers.  The isomorphisms correspond to an already established game: Hella's $n$-bijective game with $k$ pebbles.  This relationship allows for a systematic exploration of variations characterising a number of natural fragments of the logic with $n$-ary quantifiers.  One natural fragment that is not yet within this framework and worth investigating is the logic of embedding-closed quantifiers of Haigora and Luosto~\cite{Haigora2014}.

This work opens up a number of perspectives.  Logics with generalised quantifiers have been widely studied in finite model theory.  They are less of interest in themselves and more as tools for proving inexpressibility in specific extensions of first-order or fixed-point logic.  For instance, the logics with rank operators~\cite{Dawar2009,Gradel2019} of great interest in descriptive complexity  have been analysed as fragments of a more general logic with linear-algebraic quantifiers~\cite{Dawar2019}.  It would be interesting to explore whether the comonad $\G{n,k}$ could be combined with a vector space construction to obtain a categorical account of this logic.

More generally, the methods illustrated by our work could provide a way to deconstruct pebble games into their component parts and find ways of constructing entirely new forms of games and corresponding logics.  The games we consider and classify are based on Duplicator playing different kinds of functions (i.e.\ morphisms on finite sets) and maintaining different kinds of homomorphisms (i.e.\ morphisms in the category of $\sigma$-structures).  Could we build reasonable pebble games and logics on other categories?  In particular, can we bring the algebraic pebble games of~\cite{Dawar2017} into this framework?

\bibliographystyle{alphaurl}
\bibliography{lmcs_submission}
\end{document}